\newcommand{\spm}{\mbox{$S\!P\!M$}}
\def\calC{\mathcal{C}}
\def\calV{\mathcal{V}}
\def\calP{\mathcal{P}}
\def\calF{\mathcal{F}}
\def\calM{\mathcal{M}}
\def\calS{\mathcal{S}}
\def\calU{\mathcal{U}}
\newcommand{\Tri}{\mbox{$T\!r\!i$}}
\def\bay{bay(\overline{cd})}
\def\canal{canal(x,y)}
\def\td{\tilde{d}}
\def\st{$s$-$t$}
\newtheorem{observation}{Observation}
\newtheorem{lemma}{Lemma}
\newtheorem{theorem}{Theorem}
\newtheorem{corollary}{Corollary}
\newenvironment{proof}{\noindent {\textbf{Proof:}}\rm}{\hfill $\Box$ \rm\bigskip}
\title{A New Algorithm for Euclidean Shortest Paths in the Plane\thanks{A preliminary version will appear in {\em Proceedings of the 53rd Annual ACM Symposium on Theory of Computing (STOC 2021)}. This research was supported in part by NSF under Grant CCF-2005323.}}
\author{
Haitao Wang
}
\affil{Department of Computer Science \\
Utah State University, Logan, UT 84322, USA
\\ {\tt haitao.wang@usu.edu}}
\begin{document}

\pagestyle{plain}
\pagenumbering{arabic}
\setcounter{page}{1}
\date{}

\thispagestyle{empty}
\maketitle

\vspace{-0.35in}
\begin{abstract}
Given a set of pairwise disjoint polygonal obstacles in the plane,
finding an obstacle-avoiding Euclidean shortest path between two
points is a classical problem in computational geometry and has been
studied extensively. Previously, Hershberger and Suri [SIAM J. Comput. 1999] gave an algorithm of $O(n\log n)$ time and $O(n\log n)$ space, where $n$ is the total number of vertices of all obstacles. Recently, by modifying Hershberger and Suri's algorithm, Wang [SODA 2021] reduced the space to $O(n)$ while the runtime of the algorithm is still $O(n\log n)$.
In this paper, we present a new algorithm of $O(n+h\log h)$ time and $O(n)$ space, provided that a triangulation of the free space is given, where $h$ is the number of obstacles. The algorithm, which improves the previous work when $h=o(n)$, is optimal in both time and space as $\Omega(n+h\log h)$ is a lower bound on the runtime.
Our algorithm builds a shortest path map for a source point $s$, so
that given any query point $t$, the shortest path length from $s$
to $t$ can be computed in $O(\log n)$ time and a shortest \st\ path can be
produced in additional time linear in the number of edges of the
path.
\end{abstract}


\section{Introduction}
\label{sec:intro}

Let $\calP$ be a set of $h$ pairwise disjoint polygonal obstacles with
a total of $n$ vertices in the plane. Let $\calF$ denote the {\em free space}, i.e., the plane
minus the interior of the obstacles. Given two points $s$ and $t$ in $\calF$, we consider the problem of
finding a Euclidean shortest path from $s$ to $t$ in $\calF$. This is a
classical problem in computational geometry and has been studied
extensively,
e.g.,~\cite{ref:ChenCo13,ref:GhoshAn91,ref:HershbergerAn99,ref:MitchellA91,ref:MitchellSh96,ref:SharirOn86,ref:StorerSh94,ref:RohnertSh86,ref:GuibasOp89,ref:GuibasLi87,ref:HershbergerA91,ref:HershbergerCo94,ref:LeeEu84,ref:WangSh21}.

To solve the problem, two
methods are often used in the literature: the visibility graph and the continuous
Dijkstra. The visibility graph method is to first construct the
visibility graph of the vertices of $\calP$ along with $s$ and $t$,
and then run Dijkstra's shortest path algorithm on the graph to find a
shortest \st\ path. The best algorithms for constructing the
visibility graph run in $O(n\log n+K)$ time~\cite{ref:GhoshAn91} or in
$O(n+h\log^{1+\epsilon}h +K)$ time~\cite{ref:ChenA15} for any constant
$\epsilon>0$, where $K$ is the number of edges of the visibility graph. Because
$K=\Omega(n^2)$ in the worst case, the visibility graph method
inherently takes quadratic time. To deal with the case where $h$ is relatively small
comparing to $n$, a variation of the
visibility graph method was proposed that is to first construct a
so-called {\em tangent graph} and then find a shortest \st\ path in
the graph. Using this method, a shortest \st\ path can be found in $O(n+h\log h+K')$ time~\cite{ref:ChenCo15} after the free space $\calF$ is triangulated, where $K'$ may be considered as the number of tangents among obstacles of
$\calP$ and $K'=O(h^2)$. Note that triangulating $\calF$ can be done
in $O(n\log n)$ time or in $O(n+h\log^{1+\epsilon}h)$ time for any small
$\epsilon>0$~\cite{ref:Bar-YehudaTr94}. Hence, the running time of
the above algorithm in~\cite{ref:ChenCo15} is still quadratic in the worst case.

Using the continuous Dijkstra method, Mitchell~\cite{ref:MitchellSh96} made a breakthrough and achieved the first subquadratic algorithm of $O(n^{3/2+\epsilon})$ time for any constant $\epsilon>0$. Also using the continuous Dijkstra approach plus a novel conforming subdivision of the free space, Hershberger and Suri~\cite{ref:HershbergerAn99} presented an algorithm of $O(n\log n)$ time and $O(n\log n)$ space; the running time is optimal when $h=\Theta(n)$ as $\Omega(n+h\log h)$ is a lower bound in the algebraic computation tree model (which can be obtained by a reduction from sorting; e.g., see Theorem~3~\cite{ref:deRezendeRe89} for a similar reduction). Recently, by modifying Hershberger and Suri's algorithm, Wang~\cite{ref:WangSh21} reduced the space to $O(n)$ while the running time is still $O(n\log n)$.\footnote{An unrefereed report
\cite{ref:InkuluA10} announced an algorithm based on the
continuous Dijkstra approach with $O(n+h\log h\log n)$ time and $O(n)$ space.}

All three continuous Dijkstra algorithms~\cite{ref:HershbergerAn99,ref:MitchellSh96,ref:WangSh21}
construct the {\em shortest path map}, denoted by $\spm(s)$,
for a source point $s$. $\spm(s)$ is of $O(n)$ size and can be used to
	answer shortest path queries. By building a point location data
	structure on $\spm(s)$ in additional $O(n)$ time~\cite{ref:EdelsbrunnerOp86,ref:KirkpatrickOp83}, given a query point $t$, the
	shortest path length from $s$ to $t$ can be computed in $O(\log
	n)$ time and a shortest \st\ path can be output in time linear
	in the number of edges of the path.

The problem setting for $\calP$ is usually referred to as {\em
polygonal domains} or {\em polygons with holes} in the literature. The
problem in simple polygons is relatively
easier~\cite{ref:GuibasOp89,ref:GuibasLi87,ref:HershbergerA91,ref:HershbergerCo94,ref:LeeEu84}.
Guibas et al.~\cite{ref:GuibasLi87} presented an algorithm that can
construct a shortest path map in
linear time. For two-point shortest path query problem where both $s$
and $t$ are query points, Guibas and
Hershberger~\cite{ref:GuibasOp89,ref:HershbergerA91} built a data
structure in linear time such that each query can be answered in
$O(\log n)$ time. In contrast, the two-point query problem in
polygonal domains is much more challenging: to achieve $O(\log n)$
time queries, the current best result uses $O(n^{11})$
space~\cite{ref:ChiangTw99}; alternatively Chiang and
Mitchell~\cite{ref:ChiangTw99} gave a data structure of $O(n+h^5)$
space with $O(h\log n)$ query time. Refer to~\cite{ref:ChiangTw99} for
other data structures with trade-off between space and query time.

The $L_1$ counterpart of the problem where the path length is measured
in the $L_1$ metric also attracted much attention, e.g.,
\cite{ref:BaeL119,ref:ClarksonRe87,ref:ClarksonRe88,ref:MitchellAn89,ref:MitchellL192,ref:ChenCo19}.
For polygons with holes,
Mitchell~\cite{ref:MitchellAn89,ref:MitchellL192} gave an algorithm
that can build a shortest path map for a source point in
$O(n\log n)$ time and $O(n)$ space; for small $h$, Chen and Wang~\cite{ref:ChenCo19}
proposed an algorithm of $O(n+h\log
h)$ time and $O(n)$ space, after the free space is triangulated.
For simple polygons, Bae and Wang~\cite{ref:BaeL119} built a data
structure in linear time that can answer each two-point $L_1$ shortest path
query in $O(\log n)$ time. The two-point query problem in polygons with holes has also been studied~\cite{ref:ChenTw16,ref:ChenSh00,ref:WangA20}. To achieve $O(\log n)$ time queries, the current best result uses $O(n+h^2\log^3h/\log\log h)$ space~\cite{ref:WangA20}.

\subsection{Our result}


In this paper, we show that the problem of finding an Euclidean shortest path among obstacles in $\calP$ is solvable in $O(n+h\log h)$ time and $O(n)$ space, after a triangulation of the free space $\calF$ is given.
If the time for triangulating $\calF$ is included and the triangulation algorithm in~\cite{ref:Bar-YehudaTr94} is used, then the total time of the algorithm is $O(n+h\log^{1+\epsilon}h)$, for any constant $\epsilon>0$.\footnote{If randomization is allowed, the algorithm of Clarkson, Cole, and  Tarjan~\cite{ref:ClarksonRa92} can compute a triangulation in $O(n\log^* n+h\log h)$ expected time.}
With the assumption that the triangulation could be done in $O(n+h\log h)$ time, which has been an open problem and is beyond the scope of this paper, our result settles Problem 21 in The Open Problem Project~\cite{ref:TOPP}. Our algorithm actually constructs the shortest path map $\spm(s)$ for the source point $s$ in $O(n+h\log h)$ time and $O(n)$ space. We give an overview of our approach below.

The high-level scheme of our algorithm is similar to that for the $L_1$ case~\cite{ref:WangA20} in the sense that we first solve the {\em convex case} where all obstacles of $\calP$ are convex and then extend the algorithm to the general case with the help of the extended corridor structure of $\calP$~\cite{ref:ChenCo19,ref:ChenTw16,ref:ChenCo17,ref:ChenA15,ref:KapoorAn97,ref:MitchellSe95}.

\paragraph{The convex case.}
We first discuss the convex case. Let $\calV$ denote the set of topmost, bottommost, leftmost, and rightmost vertices of all obstacles. Hence, $|\calV|\leq 4h$. Using the algorithm of Hershberger and Suri~\cite{ref:HershbergerAn99}, we build a conforming subdivision $\calS$ on the points of $\calV$, without considering the obstacle edges. Since $|\calV|=O(h)$, the size of $\calS$ is $O(h)$. Then, we insert the obstacle edges into $\calS$ to build a conforming subdivision $\calS'$ of the free space.
The subdivision $\calS'$ has $O(h)$ cells (in contrast, the conforming subdivision of the free space in~\cite{ref:HershbergerAn99} has $O(n)$ cells).
Unlike the subdivision in~\cite{ref:HershbergerAn99} where each cell is of constant size, here the size of each cell of $\calS'$ may not be constant but its boundary consists of $O(1)$ transparent edges and $O(1)$ convex chains (each of which belongs to the boundary of an obstacle of $\calP$). Like the subdivision in~\cite{ref:HershbergerAn99}, each transparent edge $e$ of $\calS'$ has a well-covering region $\calU(e)$. In particular, for each transparent edge $f$ on the boundary of $\calU(e)$, the shortest path distance between $e$ and $f$ is at least $2\cdot\max\{|e|,|f|\}$. Using $\calS'$ as a guidance, we run the continuous Dijkstra algorithm as in~\cite{ref:HershbergerAn99} to expand the wavefront, starting from the source point $s$.
A main challenge our algorithm needs to overcome (which is also a main difference between our algorithm and that in~\cite{ref:HershbergerAn99}) is that each cell in our subdivision $\calS'$ may not be of constant size. One critical property our algorithm relies on is that the boundary of each cell of $\calS'$ has $O(1)$ convex chains. Our strategy is to somehow treat each such convex chain as a whole. We also borrow some idea from the algorithm of Hershberger, Suri, and Y{\i}ld{\i}z~\cite{ref:HershbergerA13} for computing shortest paths among curved obstacles. To guarantee the $O(n+h\log h)$ time, some global charging analysis is used. In addition, the tentative prune-and-search technique of Kirkpatrick and Snoeyink~\cite{ref:KirkpatrickTe95} is applied to perform certain operations related to bisectors, in logarithmic time each. Finally, the techniques of Wang~\cite{ref:WangSh21} are utilized to reduce the space to $O(n)$. All these efforts lead to an $O(n+h\log h)$ time and $O(n)$ space algorithm to construct the shortest path map $\spm(s)$ for the convex case.

\paragraph{The general case.}
We extend the convex case algorithm to the general case where obstacles may not be convex. To this end, we resort to  the extended corridor structure of $\calP$, which was used before for reducing the time complexities from $n$ to $h$, e.g.,~\cite{ref:ChenCo19,ref:ChenTw16,ref:ChenCo17,ref:ChenA15,ref:KapoorAn97,ref:MitchellSe95}. The structure partitions the free space $\calF$ into an ocean $\calM$, $O(n)$ bays, and $O(h)$ canals.
Each bay is a simple polygon that shares an edge with $\calM$. Each canal is a simple polygon that shares two edges with $\calM$. But two bays or two canals, or a bay and a canal do not share any edge. A common edge of a bay (or canal) with $\calM$ is called a gate. Thus each bay has one gate and each canal has two gates. Further,
$\calM$ is bounded by $O(h)$ convex chains (each of which is on the boundary of an obstacle).
An important property related to shortest paths is that if both $s$ and $t$ are in $\calM$, then any shortest \st\ path must be in the union of $\calM$ and all corridor paths, each of which is contained in a canal.
As the boundary of $\calM$ consists of $O(h)$ convex chains, by incorporating all corridor paths, we can easily extend our convex case algorithm to computing $\spm(\calM)$,
the shortest path map $\spm(s)$ restricted to $\calM$, i.e.,
$\spm(\calM)=\spm(s)\cap \calM$. To compute the entire map $\spm(s)$,
we expand $\spm(\calM)$ to all bays and canals through their gates.
For this, we process each bay/canal individually. For each
bay/canal $C$, expanding the map into $C$ is actually a special case
of the (additively)-weighted geodesic Voronoi diagram problem on a
simple polygon where all sites are outside $C$ and can influence $C$
only through its gates. In summary, after a triangulation of $\calF$
is given, building $\spm(\calM)$ takes $O(n+h\log h)$ time, and
expanding $\spm(\calM)$ to all bays and canals takes additional
$O(n+h\log h)$ time. The space of the algorithm is bounded by $O(n)$.

\paragraph{Outline.}
The rest of the paper is organized as follows. Section~\ref{sec:pre}
defines notation and introduces some concepts. Section~\ref{sec:convex} presents the algorithm for the convex case.
The general case is discussed in Section~\ref{sec:general}.

\section{Preliminaries}
\label{sec:pre}

For any two points $a$ and $b$ in the plane, denote by $\overline{ab}$
the line segment with $a$ and $b$ as endpoints; denote by
$|\overline{ab}|$ the length of the segment.

For any two points $s$ and $t$ in the free space $\calF$, we use
$\pi(s,t)$ to denote a shortest path from $s$ to $t$ in $\calF$. In the case
where shortest paths are not unique, $\pi(s,t)$ may refer to an
arbitrary one. Denote by $d(s,t)$ the length of $\pi(s,t)$;
we call $d(s,t)$ the {\em geodesic distance} between $s$ and $t$. For
two line segments $e$ and $f$ in $\calF$, their {\em geodesic
distance} is defined to be the minimum geodesic distance between any
point on $e$ and any point on $f$, i.e., $\min_{s\in e, t\in
f}d(s,t)$; by slightly abusing the notation, we use $d(e,f)$ to denote
their geodesic distance.
For any path $\pi$ in the plane, we use $|\pi|$ to denote its
length.

For any compact region $A$ in the plane, let $\partial A$ denote its boundary. We use $\partial \calP$ to denote the union of the boundaries of all obstacles of $\calP$.

Throughout the paper, we use $s$ to refer to the source point. For
convenience, we consider $s$ as a degenerate obstacle in $\calP$. We
often refer to the vertices of $\calP$ as {\em obstacle vertices} and
refer to the edges of $\calP$ as {\em obstacle edges}.
For any point $t\in \calF$, we call the adjacent vertex of $t$ in $\pi(s,t)$ the {\em anchor} of $t$ in $\pi(s,t)$\footnote{Usually ``predecessor'' is used in the literature instead of ``anchor'', but here we reserve ``predecessor'' for other purpose.}.

\begin{figure}[t]
\begin{minipage}[t]{\textwidth}
\begin{center}
\includegraphics[height=2.5in]{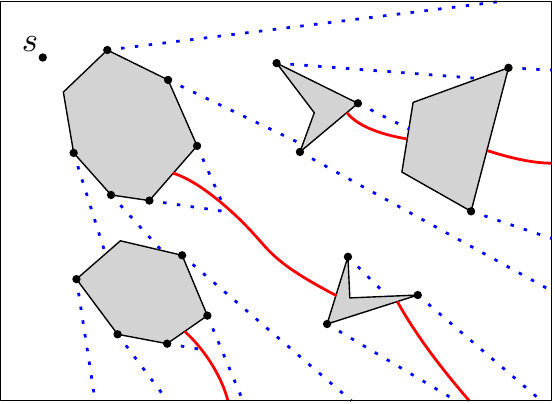}
\caption{\footnotesize Illustrating the shortest path map $\spm(s)$. The solid (red) curves are walls and the
(blue) dotted segments are windows. The anchor of each cell is also shown with a black point.}
\label{fig:spm}
\end{center}
\end{minipage}
\vspace{-0.15in}
\end{figure}

The {\em shortest path map} $\spm(s)$ of $s$ is a decomposition of the free space $\calF$ into maximal regions such that all points in each region $R$ have the same anchor~\cite{ref:HershbergerAn99,ref:MitchellA91} in their shortest paths from $s$; e.g., see Fig.~\ref{fig:spm}. Each edge of $\spm(s)$ is either an obstacle edge fragment or a {\em bisecting-curve}\footnote{This is usually called {\em bisector} in the literature. Here we reserve the term ``bisector'' to be used later.}, which is the locus of points $p$ with $d(s,u)+|\overline{pu}|=d(s,v)+|\overline{pv}|$ for two obstacle vertices $u$ and $v$. Each bisecting-curve is in general a hyperbola; a special case happens if one of $u$ and $v$ is the anchor of the other, in which case their bisecting-curve is a straight line. Following the notation in~\cite{ref:Eriksson-BiqueGe15}, we differentiate between two types of bisecting curves: {\em walls} and {\em windows}. A bisecting curve of $\spm(s)$ is a {\em wall} if there exist two topologically different shortest paths from $s$ to each point of the edge; otherwise (i.e., the above special case) it is a {\em window} (e.g., see Fig.~\ref{fig:spm}).


We make a general position assumption that for each obstacle vertex $v$, there is a unique shortest path from $s$ to $v$, and for any point $p$ in the plane, there are at most three different shortest paths from $s$ to $p$. The assumption assures that each vertex of $\spm(s)$ has degree at most three, and there are at most three bisectors of $\spm(s)$ intersecting at a common point, which is sometimes called a {\em triple point} in the literature~\cite{ref:Eriksson-BiqueGe15}.

A curve in the plane is {\em $x$-monotone} if its intersection with any vertical line is connected; the {\em $y$-monotone} is defined similarly. A curve is $xy$-monotone if it is both $x$- and $y$-monotone.

The following observation will be used throughout the paper without explicitly mentioning it again.

\begin{observation}\label{obser:hlogh}
$n+h\log n=O(n+h\log h)$.
\end{observation}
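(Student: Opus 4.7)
The plan is to prove the inequality by a simple case split based on how $h$ compares to $n$. Since the only nontrivial term is $h\log n$ on the left-hand side (the $n$ term appears unchanged on both sides), it suffices to show that $h\log n = O(n + h\log h)$.

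First I would dispose of the easy regime where $h$ is not too small compared to $n$. Concretely, suppose $h \geq \sqrt{n}$, so that $\log n \leq 2\log h$. Then $h\log n \leq 2\, h\log h$, and the bound $n + h\log n \leq n + 2h\log h = O(n + h\log h)$ follows immediately.

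The remaining case is $h < \sqrt{n}$. Here $\log h$ may be much smaller than $\log n$, so the term $h\log h$ does not directly absorb $h\log n$; instead I would absorb it into the $n$ term. Using $h < \sqrt{n}$, we get $h\log n < \sqrt{n}\log n$, and since $\log n = o(\sqrt{n})$, we have $\sqrt{n}\log n = O(n)$. Therefore $h\log n = O(n)$ in this regime, and again $n + h\log n = O(n + h\log h)$.

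Combining the two cases yields the observation. There is no real obstacle here; the only mild subtlety is choosing the threshold ($\sqrt{n}$ works cleanly, though any $n^{\alpha}$ with $0<\alpha<1$ would do) so that on the ``small $h$'' side the factor $\log n$ is swallowed by the polynomial gap between $h$ and $n$.
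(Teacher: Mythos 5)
Your proof is correct and follows essentially the same strategy as the paper's: a two-case split where, for small $h$, the term $h\log n$ is absorbed into $O(n)$, and for large $h$, one uses $\log n = O(\log h)$. The only difference is your threshold $\sqrt{n}$ versus the paper's $n/\log n$; the paper's choice is sharper (it makes the first case exactly $h\log n \le n$), but both thresholds work and the argument is the same.
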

\begin{proof}
Indeed, if $h<n/\log n$, then $n+h\log n=\Theta(n)$, which is $O(n+h\log h)$; otherwise, $\log n=O(\log h)$ and $n+h\log n=O(n+h\log h)$.
\end{proof}

\section{The convex case}
\label{sec:convex}

In this section, we present our algorithm for the convex case where all obstacles of $\calP$ are convex. The algorithm will be extended to the general case in Section~\ref{sec:general}.

For each obstacle $P\in \calP$, the topmost, bottommost, leftmost, and rightmost vertices of $P$ are called {\em rectilinear extreme vertices}. The four rectilinear extreme vertices partition $\partial P$ into four portions and each portion is called an {\em elementary chain}, which is convex and $xy$-monotone. For technical reason that will be clear later, we assume that each rectilinear extreme vertex $v$ belongs to the elementary chain counterclockwise of $v$ with respect to the obstacle (i.e., $v$ is the clockwise endpoint of the chain; e.g., see Fig.~\ref{fig:elechain}).
We use {\em elementary chain fragment} to refer to a portion of an elementary chain.

\begin{figure}[t]
\begin{minipage}[t]{0.46\textwidth}
\begin{center}
\includegraphics[height=1.4in]{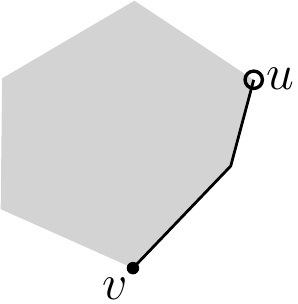}
\caption{\footnotesize Illustrating an elementary chain (the thick segments), which contains the vertex $v$ but not $u$.}
\label{fig:elechain}
\end{center}
\end{minipage}
\begin{minipage}[t]{0.54\textwidth}
\begin{center}
\includegraphics[height=1.4in]{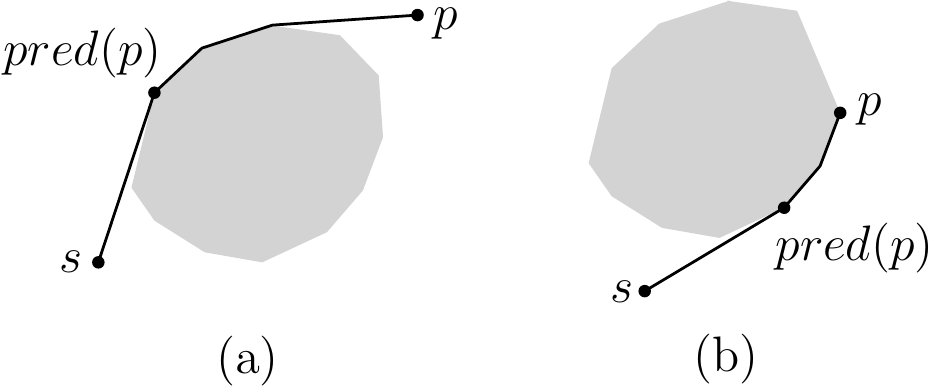}
\caption{\footnotesize Illustrating the predecessor.}
\label{fig:pred}
\end{center}
\end{minipage}
\vspace{-0.15in}
\end{figure}

We introduce some notation below that is similar in spirit to those
from~\cite{ref:HershbergerA13} for shortest paths among curved
obstacles.


Consider a shortest path $\pi(s,p)$ from $s$ to a point $p$ in the free space $\calF$.
It is not difficult to see that $\pi(s,p)$ is a sequence of
elementary chain fragments and common tangents between obstacles of
$\calP\cup \{p\}$. We define the {\em predecessor} of $p$, denoted
by $pred(p)$, to be the initial vertex of the last elementary
chain fragment in $\pi(s,p)$ (e.g., see Fig.~\ref{fig:pred} (a)). Note that since each rectilinear extreme vertex belongs to a single elementary chain, $pred(p)$ in $\pi(s,p)$ is unique. A special case happens if $p$ is a rectilinear extreme vertex and $\pi(s,p)$ contains a portion of an elementary chain $A$ clockwise of $p$.
In this case, we let $pred(p)$ be endpoint of the fragment of $A$ in $\pi(s,p)$ other than $p$ (e.g., see Fig.~\ref{fig:pred}~(b)); in this way, $pred(p)$ is unique in $\pi(s,p)$. Note that $p$ may still have multiple
predecessors if there are multiple shortest paths from $s$ to $p$.
Intuitively, the reason we define predecessors as above is to treat each elementary chain somehow as a whole, which is essential for reducing the runtime of the algorithm from $n$ to $h$.


The rest of this section is organized as follows. In Section~\ref{sec:subdivision}, we compute a conforming subdivision $\calS'$ of the free space $\calF$. Section~\ref{sec:notation} introduces some basic concepts and notation for our algorithm. The wavefront expansion algorithm is presented in Section~\ref{sec:algo}, with two key subroutines of the algorithm described in Section~\ref{sec:merge} and Section~\ref{sec:propagation}, respectively. Section~\ref{sec:time} analyzes the time complexity of the algorithm, where a technical lemma is proved separately in Section~\ref{sec:lemnumgen}. Using the information computed by the wavefront expansion algorithm, Section~\ref{sec:spm} constructs the shortest path map $\spm(s)$. The overall algorithm runs in $O(n+h\log h)$ time and $O(n+h\log h)$ space. Section~\ref{sec:space} reduces the space to $O(n)$ while keeping the same runtime, by using the techniques from Wang~\cite{ref:WangSh21}.

\subsection{Computing a conforming subdivision of the free space}
\label{sec:subdivision}

Let $\calV$ denote the set of the rectilinear extreme vertices of all
obstacles of $\calP$. Hence, $|\calV|=O(h)$. Using the algorithm of
algorithm of Hershberger and Suri~\cite{ref:HershbergerAn99} (called
the {\em HS} algorithm), we build a conforming subdivision $\calS$
with respect to the vertices of $\calV$, without considering the
obstacle edges.

The subdivision $\calS$, which is of size $O(h)$, is a quad-tree-style subdivision of the plane into $O(h)$ cells. Each cell of $\calS$ is a square or a square annulus  (i.e., an outer square with an inner square hole). Each vertex of $\calV$ is contained in the interior of a square cell and each square cell contains at most one vertex of $\calV$.
Each edge $e$ of $\calS$ is axis-parallel and {\em well-covered},
i.e., there exists a set $\calC(e)$ of $O(1)$ cells of $\calS$ such
that their union $\calU(e)$ contains $e$ with the following
properties: (1) the total complexity of all cells of $\calC(e)$ is
$O(1)$ and thus the size of $\calU(e)$ is $O(1)$; (2) for any edge $f$
of $\calS$ that is on $\partial \calU(e)$ or outside $\calU(e)$, the
Euclidean distance between $e$ and $f$ (i.e., the minimum
$|\overline{pq}|$ among all points $p\in e$ and $q\in f$) is at least
$2\cdot\max\{|e|,|f|\}$; (3) $\calU(e)$, which is called the {\em
well-covering region} of $e$, contains at most one vertex of $\calV$.
In addition, each cell $c$ of $\calS$ has $O(1)$ edges on its boundary with the
following {\em uniform edge property}:
the lengths of the edges on the boundary of $c$ differ by at most a
factor of $4$, regardless of whether $c$ is a square or square annulus.

The subdivision $\calS$ can be computed in $O(h\log h)$ time and $O(h)$ space~\cite{ref:HershbergerAn99}.

Next we insert the obstacle edges into $\calS$ to produce a conforming
subdivision $\calS'$ of the free space $\calF$. In $\calS'$, there are two
types of edges: those introduced by the subdivision construction
(which are in the interior of $\calF$ except possibly their endpoints)
and the obstacle edges; we call the former the {\em transparent edges}
(which are axis-parallel) and the latter the {\em opaque edges}. The
definition of $\calS'$ is similar to the conforming subdivision of the
free space used in the HS algorithm. A main difference is that here
endpoints of each obstacle edge may not be in $\calV$, a consequence of which is that each cell of
$\calS'$ may not be of constant size (while each cell in the
subdivision of the HS algorithm is of constant size). However, each cell $c$
of $\calS'$ has the following property that is critical to our algorithm: The boundary $\partial c$ consists of $O(1)$ transparent edges and $O(1)$ convex chains (each of which is a portion of an elementary chain).

More specifically, $\calS'$ is a subdivision of $\calF$ into $O(h)$ cells. Each cell of $\calS'$ is one of the connected components formed by intersecting $\calF$ with an axis-parallel rectangle (which is the union of a set of adjacent cells of $\calS$) or a square annulus of $\calS$. Each cell of $\calS'$ contains at most one vertex of $\calV$. Each vertex of $\calV$ is incident to a transparent edge. Each transparent edge $e$ of $\calS'$ is {\em well-covered}, i.e., there exists a set $\calC(e)$ of $O(1)$ cells whose union $\calU(e)$ contains $e$ with the following property: for each transparent edge $f$ on $\partial \calU(e)$, the geodesic distance $d(e,f)$ between $e$ and $f$ is at least $2\cdot\max\{|e|,|f|\}$. The region $\calU(e)$ is called the {\em well-covering region} of $e$ and contains at most one vertex of $\calV$. Note that $\calS'$ has $O(h)$ transparent edges.

Below we show how $\calS'$ is produced from $\calS$. The procedure is similar to that in the HS algorithm. We overlay the obstacle edges on top of $\calS$ to obtain a subdivision $\calS_{overlay}$. Because each edge of $\calS$ is axis-parallel and all obstacle edges constitute a total of $O(h)$ elementary chains, each of which is $xy$-monotone, $\calS_{overlay}$ has $O(h^2)$ faces. We say that a face of $\calS_{overlay}$ is {\em interesting} if its boundary contains a vertex of $\calV$ or a vertex of $\calS$. We keep intact the interesting faces of $\calS_{overlay}$ while deleting every edge fragment of $\calS$ not on the boundary of any interesting cell. Further, for each cell $c$ containing a vertex $v\in \calV$, we partition $c$ by extending vertical edges from $v$ until the boundary of $c$. This divides $c$ into at most three subcells. Finally we divide each of the two added edges incident to $v$ into segments of length at most $\delta$, where $\delta$ is the length of the shortest edge on the boundary of $c$. By the uniform edge property of $\calS$, $\partial c$ has $O(1)$ edges, whose lengths differ by at most a factor of $4$; hence dividing the edges incident to $v$ as above produces only $O(1)$ vertical edges. The resulting subdivision is $\calS'$.

As mentioned above, the essential difference between our subdivision $\calS'$ and the one in the HS algorithm is that the role of an obstacle edge in the HS algorithm is replaced by an elementary chain. Therefore, each opaque edge in the subdivision of the HS algorithm becomes an elementary chain fragment in our case. Hence, by the same analysis as in the HS algorithm (see Lemma 2.2~\cite{ref:HershbergerAn99}), $\calS'$ has the properties as described above and the well-covering region $\calU(e)$ of each transparent edge $e$ of $\calS'$ is defined in the same way as in the HS algorithm.

The following lemma computes $\calS'$. It should be noted that although $\calS'$ is defined with the help of $\calS_{overlay}$, $\calS'$ is constructed directly without computing $\calS_{overlay}$ first.

\begin{lemma}\label{lem:subalgo}
The conforming subdivision $\calS'$ can be constructed in $O(n+h\log h)$ time and $O(n)$ space.
\end{lemma}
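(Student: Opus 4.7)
The plan is to first build $\calS$ independently of the obstacle edges using the HS algorithm on the $O(h)$ points of $\calV$, giving $\calS$ in $O(h\log h)$ time and $O(h)$ space, and then weave the obstacles in by exploiting the given triangulation $T$ of $\calF$. The key conceptual point is that although the overlay $\calS_{overlay}$ of $\calS$ with the obstacle edges may have $\Theta(h^2)$ faces, we only need the $O(h)$ interesting faces of $\calS'$, whose total boundary complexity is $O(n+h)$; so we must extract $\calS'$ without ever materializing $\calS_{overlay}$.

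For the second phase I would label each triangle of $T$ with the unique cell of $\calS$ that contains it, splitting at crossings where necessary. After constructing a point-location structure on $\calS$ in $O(h)$ time (cheap because $\calS$ is quad-tree-like), I would locate one seed triangle in $O(\log h)$ time---one seed suffices since $\calF$ is connected---and then propagate the $\calS$-cell labels by a BFS on the dual graph of $T$. When the BFS crosses a dual edge out of a triangle labeled $c$, an $O(1)$ test against the $O(1)$ $\calS$-edges bounding $c$ decides whether the neighboring triangle still lies in $c$ or in the cell on the other side of some $\calS$-edge, and in the latter case records the crossing and updates the label. Once every triangle carries its cell label, I can enumerate, for each cell $c$ of $\calS$, the elementary-chain fragments on $\partial c$ (in order along the chains) directly from the obstacle edges of the triangles labeled $c$. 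I would then merge every non-interesting face (those containing no vertex of $\calV$ or of $\calS$) into its interesting neighbor by deleting the separating $\calS$-edge fragment, and for each cell containing a $v\in\calV$ extend vertical edges from $v$ to the cell boundary, chopping each into $O(1)$ segments using the uniform-edge property of $\calS$.

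The main obstacle is the joint BFS/overlay step: correctness requires carefully maintaining the current $\calS$-cell label across triangle transitions and correctly detecting and recording each triangle-edge/$\calS$-edge crossing, while the time bound requires $O(1)$ amortized work per triangle. This is possible because each $\calS$-cell has only $O(1)$ bounding edges, so the per-step geometric test is $O(1)$, and the total number of interesting crossings feeds directly into the output of size $O(n+h)$. Summing the costs---$O(h\log h)$ for Phase~1, $O(h)$ for the point-location structure, $O(\log h)$ for seed location, $O(n)$ for the dual-graph BFS, and $O(n+h)$ for merging and refinement---gives $O(n+h\log h)$ time. Since we store only $T$, $\calS$, and $\calS'$, all of size $O(n)$, the space is $O(n)$.
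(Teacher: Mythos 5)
Your proposal diverges from the paper at the second phase. The paper never uses the given triangulation $T$ to build $\calS'$; it builds horizontal and vertical trapezoidal decompositions of $\calF$ to support $O(\log n)$-time ray shooting, traces the boundary of $\calS'$ starting from the $O(h)$ vertices of $\calS$ and $\calV$ (one ray shot per $\calS$-edge plus a vertex-by-vertex walk along each elementary chain), and assembles the adjacency structure by a plane sweep on a compressed representation in which each elementary-chain fragment inside a cell is temporarily collapsed to a chord, so the sweep handles only $O(h)$ edges. This design keeps all work proportional to $|\calS|=O(h)$ and to the output size $O(n+h)$, never to the size of the overlay of two unrelated subdivisions.

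Your triangle-BFS has a genuine gap at exactly this point. You label each triangle of $T$ with ``the unique cell of $\calS$ that contains it,'' but $T$ and $\calS$ are independent decompositions of the same region, so a triangle will in general straddle several cells of $\calS$; the label is ill-defined, and the ``splitting at crossings where necessary'' that you mention in passing is where all the hidden cost sits. The BFS together with that splitting must pay at least once per intersection between an edge of $T$ and an edge of $\calS$, and there is no reason this count is $O(n)$: a single triangulation diagonal that runs near a cluster of points of $\calV$ crosses a superconstant number of small $\calS$-cells, and summed over $\Theta(n)$ diagonals the total can exceed $O(n+h\log h)$. To make your bound go through you would need to prove an $O(n+h\log h)$ bound on the number of $T$-edge/$\calS$-edge crossings; you neither state nor prove such a bound, and it does not appear to hold in general. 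The paper's ray-shooting-plus-boundary-tracing strategy, which touches each $\calS$-edge $O(1)$ times and each obstacle edge $O(1)$ times, sidesteps this entirely.
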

\begin{proof}
We first construct $\calS$ in $O(h\log h)$ time and $O(h)$ space~\cite{ref:HershbergerAn99}. In
the following we construct $\calS'$ by inserting the obstacle edges
into $\calS$. The algorithm is similar to that in the HS algorithm
(see Lemma 2.3~\cite{ref:HershbergerAn99}). The difference is that we
need to handle each elementary chain as a whole.

We first build a data structure so that for any query horizontal ray with origin in $\calF$, the first obstacle edge of $\calP$ hit by it can be computed in $O(\log n)$ time. This can be done by building a horizontal decomposition of $\calF$, i.e., extend a horizontal segment from each vertex until it hits $\partial \calP$. As all obstacles of $\calP$ are convex, the horizontal decomposition can be computed in $O(n+h\log h)$ time and $O(n)$ space~\cite{ref:HertelFa85}. By building a point location data 	structure~\cite{ref:EdelsbrunnerOp86,ref:KirkpatrickOp83} on the horizontal decomposition in additional $O(n)$ time, each horizontal ray shooting can be answered in $O(\log n)$ time. Similarly, we can construct the vertical decomposition of $\calF$ in $O(n+h\log h)$ time and $O(n)$ space so that each vertical ray shooting can be answered in $O(\log n)$ time.

The edges of $\calS'$ are obstacle edges, transparent edges incident to the vertices of $\calS$, and transparent edges subdivided on the vertical segments incident to the vertices of $\calV$. To identify the second type of edges, we trace the boundary of each interesting cell separately. Starting from a vertex $v$ of $\calS$, we trace along each edge incident to $v$. Using the above ray-shooting data structure, we determine whether the next cell vertex is a vertex of $\calS$ or the first point on $\partial \calP$ hit by the ray. As $\calS$ has $O(h)$ edges and vertices, this tracing takes $O(h\log n)$ time in total. Tracing along obstacle edges is done by starting from each vertex of $\calV$ and following each of its incident elementary chains. For each elementary chain, the next vertex is either the next obstacle vertex on $e$, where $e$ is the current tracing edge of the elementary chain, or the intersection of $e$ with a transparent edge of the current cell. Hence, tracing all elementary chains takes $O(n)$ time in total. The third type of edges can be computed in linear time by local operations on each cell containing a vertex of $\calV$.

Finally, we assemble all these edges together to obtain an adjacency
structure for $\calS'$. For this, we could use a plane sweep algorithm as in the
HS algorithm. However, that would take $O(n\log n)$ time as there are
$O(n)$ edges in $\calS'$. To obtain an $O(n+h\log h)$ time algorithm,
we propose the following approach. During the above tracing of
elementary chains, we record the fragment of the chain that lies in a
single cell. Since each such portion may not be of constant size, we
represent it by a line segment connecting its two endpoints; note that
this segment does not intersect any other cell edges because the
elementary chain fragment is on the boundary of an obstacle (and thus
the segment is inside the obstacle). Then, we apply the plane sweep
algorithm to stitch all edges with each elementary chain fragment in a
cell replaced by a segment as above. The algorithm takes $O(h\log h)$ time and
$O(h)$ space. Finally, for each segment representing an elementary
chain portion, we locally replace it by the chain fragment in linear
time. Hence, this step takes $O(n)$ time altogether for all such
segments. As such, the total time for computing the adjacency
information for $\calS'$ is $O(n+h\log n)$, which is $O(n+h\log h)$.
Clearly, the space complexity of the algorithm is $O(n)$.
\end{proof}



\subsection{Basic concepts and notation}
\label{sec:notation}

Our shortest path algorithm uses the continuous Dijkstra method. The algorithm initially generates a wavefront from $s$, which is a circle centered at $s$. During the algorithm, the wavefront consists of all points of $\calF$ with the same geodesic distance from $s$ (e.g., see Fig.~\ref{fig:wavefront}). We expand the wavefront until all points of the free space are covered. The conforming subdivision $\calS'$ is utilized to guide the wavefront expansion. Our wavefront expansion algorithm follows the high-level scheme as the HS algorithm. The difference is that our algorithm somehow considers each elementary chain as a whole, which is in some sense similar to the algorithm of Hershberger, Suri, and Y{\i}ld{\i}z~\cite{ref:HershbergerA13} (called the {\em HSY} algorithm). Indeed, the HSY algorithm considers each $xy$-monotone convex arc as a whole, but the difference is that each arc in the HSY algorithm is of constant size while in our case each elementary chain may not be of constant size. As such, techniques from both the HS algorithm and the HSY algorithm are borrowed.

\begin{figure}[t]
\begin{minipage}[t]{\textwidth}
\begin{center}
\includegraphics[height=2.0in]{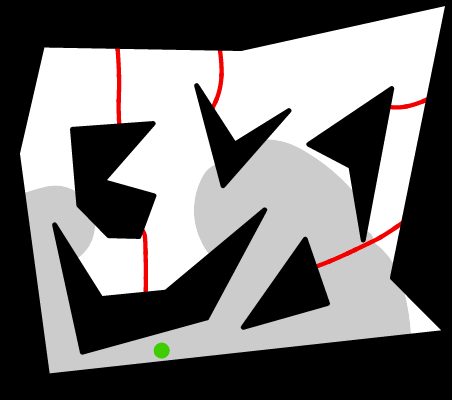}
\caption{\footnotesize Illustrating the wavefront. The black region are obstacles. The green point is $s$. The red curves are bisecting curves of $\spm(s)$. The gray region is the free space that has been covered by the wavefront. The boundary between the white region and the grey region is the wavefront. The figure is generated using the applet at~\cite{ref:HershbergerGe14}.}
\label{fig:wavefront}
\end{center}
\end{minipage}
\vspace{-0.15in}
\end{figure}

We use $\tau$ to denote the geodesic distance from $s$ to all points in the wavefront. One may also think of $\tau$ as a parameter representing time. The algorithm simulates the expansion of the wavefront as time increases from $0$ to $\infty$.
The wavefront comprises a sequence of {\em wavelets}, each emanating from a {\em generator}. In the HS algorithm, a generator is simply an obstacle vertex. Here, since we want to treat an elementary chain as a whole, similar to the HSY algorithm, we define a generator as a couple $\alpha=(A,a)$, where $A$ is an elementary chain and $a$ is an obstacle vertex on $A$, and further a clockwise or counterclockwise direction of $A$ is designated for $\alpha$; $a$ has a weight $w(a)$ (one may consider $w(a)$ as the geodesic distance between $s$ and $a$).
We call $a$ the {\em initial vertex} of the generator $\alpha$.

We say a point $q$ is {\em reachable} by a generator $\alpha=(A,a)$ if one can draw a path in $\calF$ from $a$ to $q$ by following $A$ in the designated direction to a vertex $v$ on $A$ such that $\overline{vq}$ is tangent to $A$ and then following the segment $\overline{vq}$ (e.g., see Fig.~\ref{fig:generator}). The (weighted) distance between the generator $\alpha$ and $q$ is the length of this path plus $w(a)$;
by slightly abusing the notation, we use $d(\alpha,q)$ to denote the distance.
From the definition of reachable points, the vertex $a$ partitions $A$ into two portions and only the portion following the designated direction is relevant (e.g., in Fig.~\ref{fig:generator}, only the portion containing the vertex $v$ is relevant). Henceforth, unless otherwise stated, we use $A$ to refer to its relevant portion only and we call $A$ the {\em underlying chain} of $\alpha$.
In this way the initial vertex $a$ becomes an endpoint of $A$.
For convenience, sometimes we do not differentiate $\alpha$ and $A$. For example, when we say ``the tangent from $q$ to $\alpha$'', we mean ``the tangent from $q$ to $A$''; also, when we say ``a vertex of $\alpha$'', we mean ``a vertex of $A$''.

\begin{figure}[h]
\begin{minipage}[t]{0.51\textwidth}
\begin{center}
\includegraphics[height=1.4in]{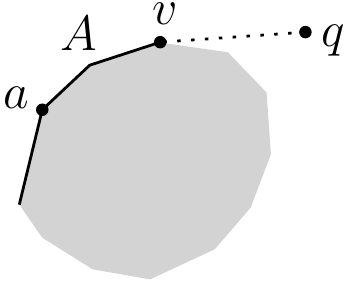}
\caption{\footnotesize Illustrating a generator $\alpha=(A,a)$. $A$ consists of the thick segments, and is designated clockwise direction around the obstacle. $q$ is a reachable point through vertex $v$.}
\label{fig:generator}
\end{center}
\end{minipage}
\hspace{0.02in}
\begin{minipage}[t]{0.47\textwidth}
\begin{center}
\includegraphics[height=1.4in]{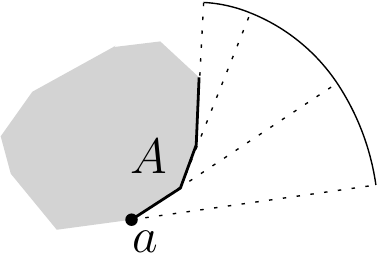}
\caption{\footnotesize Illustrating the wavelet generated by $\alpha=(A,q)$, designated counterclockwise direction. The wavelet has three pieces each of which is a circular arc. }
\label{fig:wavelet}
\end{center}
\end{minipage}
\vspace{-0.15in}
\end{figure}


The wavefront can thus be represented by the sequence of generators of
its wavelets. A wavelet generated by a generator $\alpha=(A,a)$ at time $\tau$ is a
contiguous set of reachable points $q$ such that $d(\alpha,q)=\tau$ and
$d(\alpha',q)\geq \tau$ for all other generators $\alpha'$ in the
wavefront; we also say that $q$ is claimed by $\alpha$. Note that as
$A$ may not be of constant size, a wavelet may
not be of constant size either; it actually consists of a
contiguous sequence of circular arcs centered at the obstacle vertices
$A$ (e.g., see Fig.~\ref{fig:wavelet}). If a point $q$ is claimed by $\alpha$, then
$d(s,q)=d(\alpha,q)=\tau$ and the predecessor $pred(q)$ of $q$ is $a$; sometimes for convenience we also say that the generator $\alpha$ is the predecessor of $q$. If $q$ is on an elementary
chain $A'$ and the tangent from $q$ to $a$ is also tangent to $A'$,
then a new generator $(A',q)$ is added to the wavefront (e.g., see Fig.~\ref{fig:newgenerator}(a)). A special
case happens when $q$ is the counterclockwise endpoint of $A$ (and thus
$q$ does not belong to $A$); in this case, a new generator $\alpha'=(A',q)$
is also added, where $A'$ is the elementary chain that contains $q$ (e.g., see Fig.~\ref{fig:newgenerator}(b)).

\begin{figure}[h]
\begin{minipage}[t]{\textwidth}
\begin{center}
\includegraphics[height=2.2in]{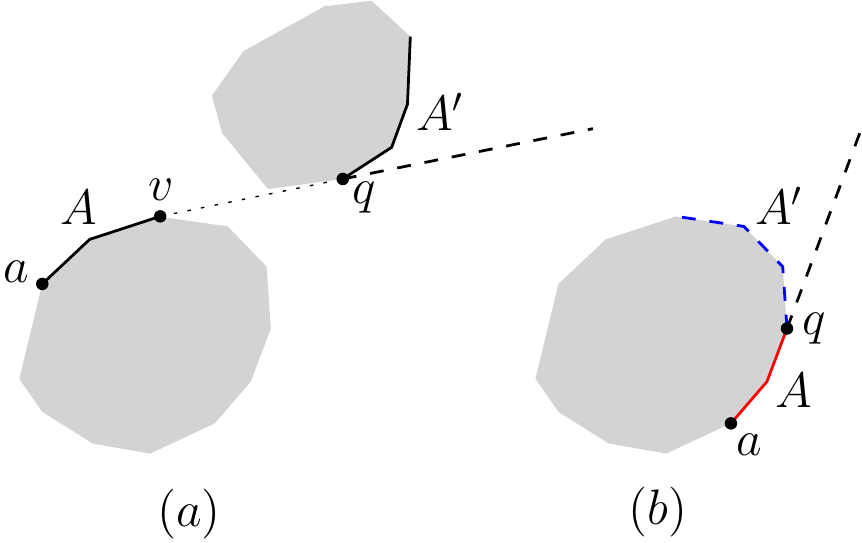}
\caption{\footnotesize Illustrating a new generator $\alpha'=(A',q)$. (a) A general case where both $A$ and $A'$ are marked with thick segments. (b) A special case where $A$ is marked with solid (red) segments and $A'$ is marked with dashed (blue) segments. }
\label{fig:newgenerator}
\end{center}
\end{minipage}
\vspace{-0.15in}
\end{figure}

As $\tau$ increases, the points bounding the adjacent wavelets trace out the bisectors that form the edges of the shortest path map $\spm(s)$. The {\em bisector} between the wavelets of two generators $\alpha$ and $\alpha'$, denoted
by $B(\alpha,\alpha')$, consists of points $q$ with $d(\alpha,q)=d(\alpha',q)$. Note that since $\alpha$ and $\alpha'$ may not be of constant size, $B(\alpha,\alpha')$ may not be of constant size either. More specifically, $B(\alpha,\alpha')$ has multiple pieces each of which is a hyperbola defined by two obstacle vertices $v\in \alpha$ and $v'\in \alpha'$ such that the hyperbola consists of all points that have two shortest paths from $s$ with $v$ and $v'$ as the anchors in the two paths, respectively (e.g., see Fig.~\ref{fig:bisector}).
A special case happens if $\alpha'$ is a generator created by the wavelet of $\alpha$, such as that illustrated in Fig.~\ref{fig:newgenerator}(a), then $B(\alpha,\alpha')$ is the half-line extended from $q$ along the direction from $v$ to $q$ (the dashed segment in Fig.~\ref{fig:newgenerator}(a)); we call such a bisector an {\em extension bisector}.
Note that in the case illustrated in Fig.~\ref{fig:newgenerator}(b), $B(\alpha,\alpha')$, which is also an extension bisector, is the half-line extended from $q$ along the direction from $v$ to $q$ (the dashed segment in Fig.~\ref{fig:newgenerator}(b)), where $v$ is the obstacle vertex adjacent to $q$ in $A$.

\begin{figure}[h]
\begin{minipage}[t]{\textwidth}
\begin{center}
\includegraphics[height=2.4in]{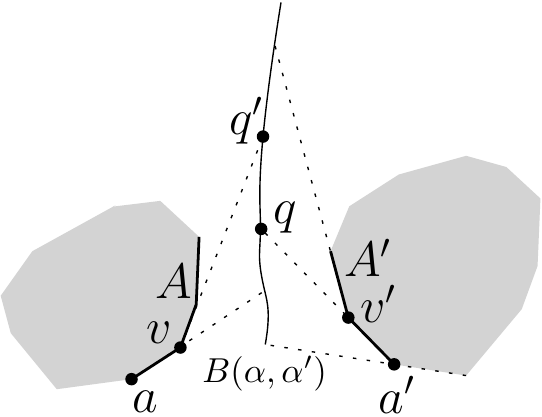}
\caption{\footnotesize Illustrating the bisector $B(\gamma,\gamma')$ defined by two generators $\gamma=(A,q)$ and $\gamma=(A',q')$. The portion between $q$ and $q'$ is a hyperbola defined by $v$ and $v'$. }
\label{fig:bisector}
\end{center}
\end{minipage}
\vspace{-0.15in}
\end{figure}


A wavelet gets eliminated from the wavefront if the two bisectors bounding it intersect, which is called a {\em bisector event}. Specifically, if $\alpha_1$, $\alpha$, and $\alpha_2$ are three consecutive generators of the wavefront, the wavelet generated by $\alpha$ will be eliminated when $B(\alpha_1,\alpha)$ intersects $B(\alpha,\alpha_2)$; e.g., see Fig.~\ref{fig:bisectorintersection}.
Wavelets are also eliminated by collisions with obstacles and other wavelets in front of it.
If a bisector $B(\alpha,\alpha')$ intersects an obstacle, their intersection is also called a {\em bisector event}.

\begin{figure}[t]
\begin{minipage}[t]{\textwidth}
\begin{center}
\includegraphics[height=1.4in]{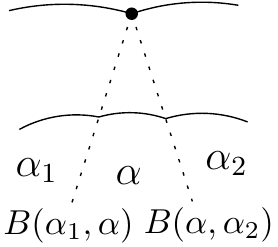}
\caption{\footnotesize Illustrating the intersection of two bisectors. }
\label{fig:bisectorintersection}
\end{center}
\end{minipage}
\vspace{-0.15in}
\end{figure}

Let $\spm'(s)$ be the subdivision of $\calF$ by the bisectors of all generators (e.g. see Fig.~\ref{fig:map}). The intersections of bisectors and intersections between bisectors and obstacle edges are vertices of $\spm'(s)$. Each bisector connecting two vertices is an edge of $\spm'(s)$, called a {\em bisector edge}.
As discussed before, a bisector, which consists of multiple hyperbola pieces, may not be of constant size. Hence, a bisector edge $e$ of $\spm'(s)$ may not be of constant size. For differentiation, we call each hyperbola piece of $e$, a {\em hyperbolic-arc}.
In addition, if the boundary of an obstacle $P$ contains more than one vertex of $\spm'(s)$, then the chain of edges of $\partial P$ connecting two adjacent vertices of $\spm'(s)$ also forms an edge of $\spm'(s)$, called a {\em convex-chain edge}.

\begin{figure}[h]
\begin{minipage}[t]{\textwidth}
\begin{center}
\includegraphics[height=2.5in]{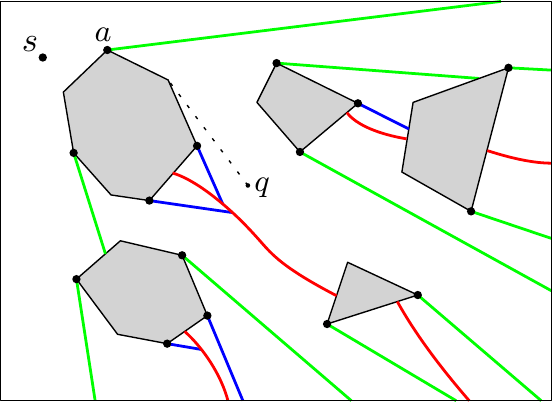}
\caption{\footnotesize Illustrating the map $\spm'(s)$. The red curves are non-extension bisectors. The green and blue segments are extension bisectors, where the green and blue ones are Type~(a) and (b) as illustrated in Fig.~\ref{fig:newgenerator}. The predecessor in each cell is also shown with a black point. For example, all points in the cell containing $q$ have $a$ as their predecessor. }
\label{fig:map}
\end{center}
\end{minipage}
\end{figure}

\begin{figure}[h]
\begin{minipage}[t]{\textwidth}
\begin{center}
\includegraphics[height=2.5in]{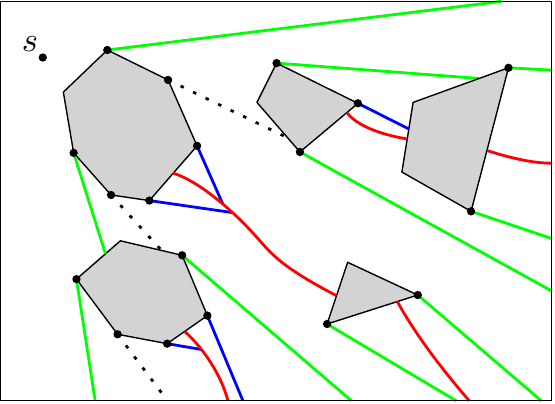}
\caption{\footnotesize Illustrating the map $\spm(s)$. The dashed segments are windows that are not in $\spm'(s)$; removing them becomes $\spm'(s)$. The anchor of each cell is also shown.}
\label{fig:fullmap}
\end{center}
\end{minipage}
\vspace{-0.15in}
\end{figure}

The following lemma shows that $\spm'(s)$ is very similar to $\spm(s)$. Refer to Fig.~\ref{fig:fullmap} for $\spm(s)$.

\begin{lemma}\label{lem:spm}
Each extension bisector edge of $\spm'(s)$ is a window of $\spm(s)$. The union of all non-extension bisector edges of $\spm'(s)$ is exactly the union of all walls of $\spm(s)$. $\spm'(s)$ can be obtained from $\spm(s)$ by removing all windows except those that are extension bisectors of $\spm'(s)$.
\end{lemma}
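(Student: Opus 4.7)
The plan is to prove all three assertions by matching the two decompositions through the correspondence between predecessors and anchors. I first observe that for any point $p$ interior to a region of $\spm'(s)$ that is claimed by a generator $\alpha=(A,a)$, the predecessor of $p$ is $a$ while the anchor of $p$ is the unique vertex $v$ on $A$ from which $\pi(s,p)$ leaves $A$ tangentially; hence $\spm(s)$ refines $\spm'(s)$, every bisector edge of $\spm'(s)$ lies on a bisecting-curve of $\spm(s)$, and only the identification of which bisecting-curves survive in $\spm'(s)$ remains. For statement~(1), let $e$ be an extension bisector edge of $\spm'(s)$ produced when $\alpha=(A,a)$ spawns $\alpha'=(A',q)$: by construction $e$ is the half-line from $q$ along the direction $v\to q$, where $v$ is the tangent vertex of $A$ (or the vertex of $A$ adjacent to $q$ in the special case of Fig.~\ref{fig:newgenerator}(b)). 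For any $p\in e$, the paths $\pi(s,v)\cup\overline{vp}$ and $\pi(s,q)\cup\overline{qp}$ coincide geometrically, yet on the two sides of $e$ the anchor of $p$ is $v$ and $q$ respectively; since $v$ is by construction the anchor of $q$ in $\pi(s,q)$, $e$ is precisely of the special form that defines a window of $\spm(s)$.

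For statement~(2), I argue both inclusions. In the forward direction, each hyperbolic-arc piece of a non-extension bisector $B(\alpha,\alpha')$ is defined by two obstacle vertices $u\in\alpha$ and $v\in\alpha'$; for a point $p$ on the arc, the two shortest paths to $p$ through $u$ and through $v$ are topologically distinct (since $\alpha\neq\alpha'$ and the two generators are not related by a tangent event), and neither $u$ nor $v$ is the anchor of the other, so the arc matches the wall condition. For the converse, given any wall $\sigma$ of $\spm(s)$ separating anchors $u$ and $v$ with neither an anchor of the other, I would argue that $u$ and $v$ must belong to different generators. Indeed, if $u$ and $v$ both lay on the underlying chain $A$ of a single generator $\alpha=(A,a)$, then inside $R(\alpha)$ the only bisecting-curves are those between adjacent vertices along $A$ (since all shortest paths into $R(\alpha)$ share a common prefix through $A$), and such curves are straight-line windows whose defining vertices are in an anchor relation, contradicting the wall hypothesis. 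Hence the wavelets of two distinct generators meet along $\sigma$, realizing $\sigma$ as a non-extension bisector piece of $\spm'(s)$.

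Statement~(3) then follows by combining (1), (2), and the refinement observation: walls of $\spm(s)$ are exactly the non-extension bisectors of $\spm'(s)$; extension bisectors of $\spm'(s)$ are exactly the windows of $\spm(s)$ that straddle two different generator regions; and the remaining windows of $\spm(s)$, lying strictly inside some $R(\alpha)$ and separating subregions whose anchors are adjacent vertices on $A$, are washed out on passage to $\spm'(s)$. I expect the main difficulty to be the converse half of (2), namely making precise the claim that no wall can arise inside a single generator's claim; this rests on the fact that all tangent vertices along a single underlying chain share a common shortest-path prefix, so their pairwise bisecting-curves are straight-line windows rather than hyperbolic walls.
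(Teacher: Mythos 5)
Your proof follows essentially the same route as the paper's, with one variation and one small gap in the converse of statement~(2). For~(1) you show each $p\in e$ lies on a window $w$, but do not explicitly argue the reverse inclusion $w\subseteq e$; the paper proves both inclusions to conclude $e=w$. The forward direction of~(2) matches the paper verbatim. In the converse of~(2) you take a more explicit route than the paper: you argue by contradiction that the two anchors $u,v$ of a wall cannot lie on the underlying chain of a single generator, since inside one claimed region $R(\alpha)$ the only bisecting-curves are straight-line edge-extensions (windows), not walls. The paper leaves this $\alpha\neq\alpha'$ step implicit, so your argument actually fills a small gap. However, you then assert the resulting bisector is \emph{non-extension} without justification, and since extension bisectors also separate two distinct generators, this does not follow from $\alpha\neq\alpha'$ alone. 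The paper closes this step by invoking the general position assumption to obtain $v'\notin\overline{qv}$ and $v\notin\overline{qv'}$, which is exactly the non-extension condition. You could instead close it by combining your statement~(1) with the fact that walls and windows are disjoint (a point on a wall cannot lie on a window, hence not on an extension bisector), but as written the step is missing.
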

\begin{proof}
We first prove that each extension bisector edge $e$ of $\spm'(s)$ is a window of $\spm(s)$.
By definition, one endpoint of $e$, denoted by $u$, is an obstacle vertex, such that $e$ is a half-line extended from $u$ in the direction from $v$ to $u$ for another obstacle vertex $v$. Hence, for each point $p\in e$, there is a shortest $s$-$p$ path $\pi(s,p)$ such that $\overline{vp}$ is a segment of $\pi(s,p)$. As $u\in \overline{vp}$, $p$ must be on a window $w$ of $\spm(s)$. This proves that $e\subseteq w$. On the other hand, for any point $p\in w$, there is a shortest $s$-$p$ path $\pi(s,p)$ that contains $\overline{vp}$. Since $u\in \overline{vp}$, $p$ must be on the extension bisector edge $e$ of $\spm'(s)$, and thus $w\subseteq e$. Therefore, $e=w$. This proves that each extension bisector edge of $\spm'(s)$ is a window of $\spm(s)$.

We next prove that the union of all non-extension bisector edges of $\spm'(s)$ is exactly the union of all walls of $\spm(s)$.

\begin{itemize}
\item
We first show that the union of all non-extension bisector edges of $\spm'(s)$ is a subset of the union of all walls of $\spm(s)$.

Let $q$ be a point on a non-extension bisector $B(\alpha,\alpha')$ of two generators $\alpha=(A,a)$ and $\alpha=(A',a')$. Let $v$ and $v'$ be the tangent points on $A$ and $A'$ from $q$, respectively.
Hence, $q$ has two shortest paths from $s$, one containing $\overline{qv}$ and the other containing $\overline{qv'}$. Since $B(\alpha,\alpha')$ is not an extension bisector, $v'\not\in \overline{qv}$ and $v\not\in\overline{qv'}$. This means that the two paths are combinatorially different and thus $q$ is on a wall of $\spm(s)$. This proves that the union of all non-extension bisector edges of $\spm'(s)$ is a subset of the union of all walls of $\spm(s)$.

\item
We then show that the union of all walls of $\spm(s)$ is a subset of the union of all non-extension bisector edges of $\spm'(s)$.

Let $q$ be a point on a wall of $\spm(s)$. By definition, there are two obstacle vertices $v$ and $v'$ such that there are two combinatorially different shortest paths from $s$ to $q$ whose anchors are $v$ and $v'$, respectively. Further, since the two paths are combinatorially different and also due to the general position assumption, $v'\not\in \overline{qv}$ and $v\not\in \overline{qv'}$. Therefore, $q$ must be on the bisector edge of the two generators $\alpha$ and $\alpha'$ in $\spm'(s)$ whose underlying chains containing $v$ and $v'$, respectively. Further, since $v\not\in \overline{qv'}$ and $v'\not\in \overline{qv}$, the bisector of $\alpha$ and $\alpha'$ is not an extension bisector. This proves that the union of all walls of $\spm(s)$ is a subset of the union of all non-extension bisector edges of $\spm'(s)$.
\end{itemize}

The above proves that the first and second statements of the lemma. The third statement immediately follows the first two.
\end{proof}

\begin{corollary}\label{coro:size}
The combinatorial size of $\spm'(s)$ is $O(n)$.
\end{corollary}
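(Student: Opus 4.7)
The plan is to derive the bound directly from Lemma \ref{lem:spm}, by reducing to the known $O(n)$ complexity of the standard shortest path map $\spm(s)$. Recall that it is classical (e.g.\ from Hershberger--Suri~\cite{ref:HershbergerAn99} and Mitchell~\cite{ref:MitchellSh96}) that $\spm(s)$ has $O(n)$ cells, $O(n)$ walls, $O(n)$ windows, and uses $O(n)$ obstacle-edge fragments on cell boundaries; every wall/window is a single hyperbolic arc or line segment, and every cell is associated with a unique anchor obstacle vertex.

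The first step is to set up the counting convention: the ``combinatorial size'' of $\spm'(s)$ should be interpreted to include both the number of hyperbolic arcs (across all bisector edges) and the number of obstacle edges (across all convex-chain edges), along with the number of vertices and cells. By Lemma~\ref{lem:spm}, $\spm'(s)$ is obtained from $\spm(s)$ by deleting every window that is not an extension bisector. Such deletions never create new vertices, edges, or arcs; they only merge cells and concatenate adjacent boundary pieces.

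Hence I would argue as follows. Each hyperbolic arc appearing inside a bisector edge of $\spm'(s)$ is, by Lemma~\ref{lem:spm}, either a wall of $\spm(s)$ or an extension window of $\spm(s)$, so the total number of hyperbolic arcs in $\spm'(s)$ is at most the number of walls plus windows of $\spm(s)$, which is $O(n)$. Similarly, each obstacle edge used in a convex-chain edge of $\spm'(s)$ lies on $\partial\calP$ and already appears on the boundary of some cell of $\spm(s)$; summing over all convex-chain edges gives $O(n)$. The number of vertices of $\spm'(s)$ is likewise no more than the number of vertices of $\spm(s)$, and the number of cells is at most the number of cells of $\spm(s)$ (merging only decreases this count), both $O(n)$.

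The only mildly subtle point, and what I expect to be the one thing worth stating carefully, is the bookkeeping that turning $\spm(s)$ into $\spm'(s)$ cannot inflate any of the four quantities (arcs, obstacle-edge fragments, vertices, cells); once that is observed, the bound is immediate. No new edges or vertices are introduced, sequences of walls in $\spm(s)$ separated by deleted windows become single bisector edges of $\spm'(s)$ (possibly of several arcs, but the arc count is preserved), and similarly for obstacle-edge fragments along an obstacle becoming a single convex-chain edge. The corollary then follows from the $O(n)$ size of $\spm(s)$.
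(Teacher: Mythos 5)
Your proof is correct and follows essentially the same route as the paper: both derive the bound from Lemma~\ref{lem:spm} together with the classical $O(n)$ bound on the size of $\spm(s)$, observing that $\spm'(s)$ is obtained by deletion only. Your write-up simply spells out the ``deletion cannot inflate complexity'' bookkeeping more explicitly than the paper does.
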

\begin{proof}
By Lemma~\ref{lem:spm}, $\spm'(s)$ is a subset of $\spm(s)$. As the combinatorial size of $\spm(s)$ is $O(n)$~\cite{ref:HershbergerAn99,ref:MitchellA91}, the combinatorial size of $\spm'(s)$ is also $O(n)$.
\end{proof}

\begin{lemma}\label{lem:sizespm}
The subdivision $\spm'(s)$ has $O(h)$ faces, vertices, and edges. In particular, $\spm'(s)$ has $O(h)$ bisector intersections and $O(h)$ intersections between bisectors and obstacle edges.
\end{lemma}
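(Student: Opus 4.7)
The plan is to first bound the number of faces of $\spm'(s)$ by $O(h)$, and then to deduce the bounds on vertices and edges from planarity together with the general position assumption.

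For the face count, observe that each face of $\spm'(s)$ is a maximal region of points sharing the same predecessor, equivalently the same generator $\alpha=(A,a)$ in the wavefront expansion algorithm introduced in Section~\ref{sec:notation}. Hence it suffices to show that the total number of generators created by the wavefront expansion is $O(h)$. A generator $(A',q)$ is spawned only by one of two mechanisms: either (i) a wavelet of an existing generator first becomes tangent to the elementary chain $A'$ at an interior vertex $q$, or (ii) a wavelet rounds the counterclockwise endpoint $q$ of its chain and crosses onto the neighboring chain $A'$ (as illustrated in Fig.~\ref{fig:newgenerator}). Mechanism (ii) can fire at most twice per chain, since each chain has only two endpoints. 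For mechanism (i), I would argue using the convexity of $A'$ that only $O(1)$ first-tangent events per chain produce surviving generators: once a wavelet has claimed a portion of $A'$, any subsequent arrival meets the existing wavelet on $A'$ and is recorded as a bisector event rather than a new generator, and there are only $O(1)$ angular ranges from which a truly-first tangency to the convex chain $A'$ can still occur. Summing over all $O(h)$ elementary chains yields $O(h)$ generators, and hence $O(h)$ faces.

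Once $F = O(h)$ is established, the remaining bounds follow from planarity. By the general position assumption every interior vertex of $\spm'(s)$ has degree at most three, and each face of $\spm'(s)$ is incident to $O(1)$ bisector edges and $O(1)$ convex-chain edges on average (since the sum of edge-face incidences is twice the number of edges). Applying Euler's formula to the planar subdivision $\spm'(s)$ then yields $V = O(F) = O(h)$ and $E = O(F) = O(h)$. The two ``in particular'' statements follow immediately because every bisector-bisector intersection and every bisector-obstacle intersection is one of the vertices just counted.

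The main obstacle is the $O(1)$-generators-per-chain claim for mechanism (i); while convexity of an elementary chain makes the statement intuitive, a rigorous proof will likely require a topological charging scheme, for instance pairing each ``extra'' generator on $A'$ with a nearby structural event such as the disappearance of an adjacent wavelet or the termination of a bisector edge on a neighboring chain, so that the sum over all chains can be bounded by $O(h)$ such events. A convenient alternative, available because $\spm'(s)$ is by Lemma~\ref{lem:spm} a contraction of $\spm(s)$ along non-extension windows, is to prove instead that for each elementary chain the anchors of $\spm(s)$ lying on that chain form $O(1)$ maximal consecutive sequences linked by removable windows, which would collapse to $O(1)$ faces of $\spm'(s)$ per chain by the same counting.
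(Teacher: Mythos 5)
Your proposal has a genuine gap at its crux: the claim that mechanism (i) spawns only $O(1)$ generators per elementary chain is neither proved nor obviously true, and you acknowledge this yourself. The ``topological charging scheme'' you gesture at is exactly where the hard work lies, and there is also a circularity concern: in the paper the $O(h)$ bound on generators is \emph{derived} as Corollary~\ref{coro:complexity} from Lemma~\ref{lem:sizespm} (``each face of $\spm'(s)$ has a unique generator; since $\spm'(s)$ has $O(h)$ faces\ldots''), not established independently. So a direct generator count is not available as a black box, and your plan to use it as the starting point essentially reverses the paper's logical order without filling in the missing step. Your suggested fallback — arguing about maximal consecutive anchor sequences on $\spm(s)$ — also leaves the real work undone: the issue is precisely to bound, over all chains, the number of ``breaks'' between such sequences, which requires a global counting argument rather than per-chain reasoning.

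The paper's actual proof goes a completely different route and does not attempt to count generators directly. It splits the edge set of $\spm'(s)$ into non-extension bisectors and extension bisectors. For the non-extension part, it observes via Lemma~\ref{lem:spm} that these coincide with the walls of $\spm(s)$, and intersections of two non-extension bisectors are triple points; it then builds a planar graph $G$ whose nodes are the $h$ obstacles plus the triple points, with edges being maximal wall chains, and cites~\cite{ref:WangQu19} for the fact that $G$ has $O(h)$ nodes, edges, and faces. For the extension bisectors, it handles the rectilinear-extreme-vertex case (Fig.~\ref{fig:newgenerator}(b)) by a trivial $2$-per-vertex bound, and the common-tangent case (Fig.~\ref{fig:newgenerator}(a)) by constructing a \emph{different} simple planar graph $G'$ on the $h$ obstacles, with an edge between two obstacles whenever a common tangent between them defines an extension bisector; planarity of $G'$ (from the interior-disjointness of the tangents, which lie on shortest paths) bounds this count by $O(h)$. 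None of this structure appears in your proposal, and in particular you never treat the extension bisectors separately, which is essential since they are windows of $\spm(s)$ rather than walls and hence are not captured by the wall/triple-point graph $G$. Finally, your Euler-formula step is stated backwards: ``degree at most three'' does not by itself bound $V$ and $E$ in terms of $F$; you would need vertices to have degree at least three (or a careful handling of degree-$1$ and degree-$2$ vertices on obstacles) to run the standard $2E\geq 3V$ argument.
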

\begin{proof}
By Lemma~\ref{lem:spm}, for any vertex of $\spm'(s)$ that is the intersection of two non-exenstion bisectors, it is also the intersection of two walls of $\spm(s)$, which is a {\em triple point}.
Let $\spm''(s)$ refer to $\spm'(s)$ excluding all extension bisectors.
We define a planar graph $G$ corresponding to $\spm''(s)$ as follows. Each obstacle defines a node of $G$ and each triple point also defines a node of $G$. For any two nodes of $G$, if they are connected by a chain of bisector edges of $\spm'(s)$ such that the chain does not contain any other node of $G$, then $G$ has an edge connecting the two nodes.
It is proved in~\cite{ref:WangQu19} that $G$ has $O(h)$ vertices, faces, and edges.

It is not difficult to see that a face of $\spm''(s)$ corresponds to a face of $G$ and thus the number of faces of $\spm''(s)$ is $O(h)$. For each bisector intersection of $\spm''(s)$, it must be a triple point and thus it is also a node of $G$. As $G$ has $O(h)$ nodes, the number of bisector intersections of $\spm''(s)$ is $O(h)$.
For each intersection $v$ between a bisector and an obstacle edge in $\spm''(s)$, $G$ must has an edge $e$ corresponding a chain of bisector edges and $v$ is the endpoint of the chain; we charge $v$ to $e$. As $e$ as two incident nodes of $G$, $e$ can be charged at most twice. Since $G$ has $O(h)$ edges, the number of intersections between bisectors and obstacle edges in $\spm''(s)$ is $O(h)$.

We next prove that the total number of extension bisector edges of $\spm'(s)$ is $O(h)$. For each extension bisection edge $e$, it belongs to one of the two cases $(a)$ and $(b)$ illustrated in Fig.~\ref{fig:newgenerator}. For Case~(b), one endpoint of $e$ is a rectilinear extreme vertex. Since each rectilinear extreme vertex can define at most two extension bisector edges and there are $O(h)$ rectilinear extreme vertices, the total number of Case (b) extension bisectors is $O(h)$. For Case (a), $e$ is an extension of a common tangent of two obstacles; we say that $e$ is {\em defined} by the common tangent. Note that all such common tangents are interior disjoint as they belong to shortest paths from $s$ encoded by $\spm'(s)$. We now define a planar graph $G'$ as follows. The node set of $G'$ consists of all obstacles of $\calP$. Two obstacles have an edge in $G'$ if they have at least one common tangent that defines an extension bisector. Since all such common tangents are interior disjoint, $G'$ is a planar graph. Apparently, no two nodes of $G'$ share two edges and no node of $G'$ has a self-loop. Therefore, $G'$ is a simple planar graph. Since $G'$ has $h$ vertices, the number of edges of $G'$ is $O(h)$. By the definition of $G'$, each pair of obstacles whose common tangents define extension bisectors have an edge in $G'$. Because each pair of obstacles can define at most four common tangents and thus at most four extension bisectors and also because $G'$ has $O(h)$ edges, the total number of Case (a) extension bisectors is $O(h)$.

The bisector intersections of $\spm'(s)$ that are not vertices of $\spm''(s)$ involve extension bisectors of $\spm'(s)$. The intersections between bisectors and obstacle edges in $\spm'(s)$ that are not in $\spm''(s)$ also involve extension bisectors. Since all extension bisectors and all edges of $\spm''(s)$ are interior disjoint, each extension bisector can involve in at most two bisector intersections and at most two intersections between bisectors and obstacle edges. Because the total number of extension bisector edges of $\spm'(s)$ is $O(h)$, the number of bisector intersections involving extension bisectors is $O(h)$ and the number of intersections between extension bisectors and obstacle edges is also $O(h)$.
Therefore, comparing to $\spm''(s)$, $\spm'(s)$ has $O(h)$ additional vertices and $O(h)$ additional edges. Note that the number of convex-chain edges of $\spm'(s)$ is $O(h)$, for $\spm'(s)$ has $O(h)$ vertices.
The lemma thus follows.
\end{proof}


\begin{corollary}\label{coro:complexity}
There are $O(h)$ bisector events and $O(h)$ generators in $\spm'(s)$.
\end{corollary}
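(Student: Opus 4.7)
The plan is to derive both bounds essentially as corollaries of Lemma~\ref{lem:sizespm} and its proof, without needing any new combinatorial work. I begin with the bisector events, which is the easier half. By the definition preceding Fig.~\ref{fig:bisectorintersection}, every bisector event is either a crossing of two bisectors (eliminating a wavelet sandwiched between two adjacent ones in the wavefront) or a crossing of a bisector with an obstacle edge. In both cases, such a crossing is a vertex of $\spm'(s)$, and Lemma~\ref{lem:sizespm} has already bounded the number of bisector intersections and the number of bisector-obstacle intersections in $\spm'(s)$ by $O(h)$. Hence there are $O(h)$ bisector events.

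For the generators, my plan is to set up an injection from the set of generators (other than the initial one rooted at the source $s$) to the set of extension bisector edges of $\spm'(s)$. Recall from the discussion around Fig.~\ref{fig:newgenerator} that there are exactly two mechanisms by which a new generator $\alpha' = (A',q)$ is born, depicted in parts (a) and (b); in both cases the birth simultaneously produces an extension bisector, namely the half-line emanating from $q$ in the direction from the adjacent tangent vertex $v$ through $q$. Since each generator has a unique initial vertex $q$ together with a unique direction at its moment of birth, distinct generators yield distinct extension bisectors, and the injection is well defined. Thus the number of generators born after time $0$ is at most the number of extension bisector edges of $\spm'(s)$. The proof of Lemma~\ref{lem:sizespm} already shows this count is $O(h)$, by charging Case~(b) extensions to the $O(h)$ rectilinear extreme vertices and bounding Case~(a) extensions via a planar-graph argument on the obstacles. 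Adding the single generator at $s$ itself gives the desired $O(h)$ total.

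The main point I would want to verify carefully is that the association between generators and extension bisector edges is well defined even for short-lived wavelets that are born and then immediately killed by a bisector event. This is not a real obstacle: $\spm'(s)$ is defined as the subdivision of $\calF$ by the bisectors of \emph{all} generators, irrespective of how long any particular wavelet persists, so the extension half-line is still present in $\spm'(s)$ and is therefore counted in the extension-bisector bound used in Lemma~\ref{lem:sizespm}. With this observation, both bounds in the corollary follow immediately.
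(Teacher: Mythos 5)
Your proof of the $O(h)$ bound on bisector events is essentially identical to the paper's: both directly quote Lemma~\ref{lem:sizespm}. For the generator bound, however, you take a genuinely different route. The paper's argument is a one-liner: each face of $\spm'(s)$ is claimed by exactly one generator, and since Lemma~\ref{lem:sizespm} gives $O(h)$ faces, there are $O(h)$ generators. You instead map each non-source generator to the extension bisector created at its birth and invoke the $O(h)$ bound on extension-bisector edges from the proof of Lemma~\ref{lem:sizespm}. Both routes are valid, and your injection is correctly set up: the initial vertex $q$ of a generator, together with the direction from the parent's tangent vertex $v$ through $q$, pins down a distinct half-line for each generator. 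One small inaccuracy worth noting: your resolution of the short-lived-wavelet concern is not quite right as stated. If a newly born generator were instantly dominated, its extension half-line would not bound any face and hence would \emph{not} appear as an edge of $\spm'(s)$ (edges are pieces of bisectors separating two faces, not arbitrary loci). The concern evaporates for a simpler reason: a generator that claims no face is, by the paper's usage, not a ``generator in $\spm'(s)$'' and thus does not need to be counted at all, while every generator that \emph{does} claim a face has its initial face bounded near $q$ by the extension bisector, so the injection into edges goes through. Net: the paper's face-counting argument is cleaner, but your extension-bisector argument also closes the proof once that clarification is made.
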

\begin{proof}
Each bisector event is either a bisector intersection or an intersection between a bisector and an obstacle edge. By Lemma~\ref{lem:sizespm}, there are $O(h)$ bisector intersections and $O(h)$ intersections between bisectors and obstacle edges. As such, there are $O(h)$ bisector events in $\spm'(s)$.

By definition, each face of $\spm'(s)$ has a unique generator.
Since $\spm'(s)$ has $O(h)$ faces by Lemma~\ref{lem:sizespm}, the total number of generators is $O(h)$.
\end{proof}

By the definition of $\spm'(s)$, each cell $C$ of $\spm'(s)$ has a unique generator $\alpha=(A,a)$, all points of the cell are reachable from the generator, and $a$ is the predecessor of all points of $C$ (e.g., see Fig.~\ref{fig:map}; all points in the cell containing $q$ have $a$ as their predecessor).
Hence, for any point $q\in C$, we can compute $d(s,q)=d(\alpha,q)$ by computing the tangent from $q$ to $A$. Thus, $\spm'(s)$ can also be used to answer shortest path queries. In face, given $\spm'(s)$, we can construct $\spm(s)$ in additional $O(n)$ time by inserting the windows of $\spm(s)$ to $\spm'(s)$, as shown in the lemma below.

\begin{lemma}\label{lem:spmconstruction}
Given $\spm'(s)$, $\spm(s)$ can be constructed in $O(n)$ time (e.g., see Fig.~\ref{fig:map} and Fig.~\ref{fig:fullmap}).
\end{lemma}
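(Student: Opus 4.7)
By Lemma~\ref{lem:spm}, $\spm(s)$ is obtained from $\spm'(s)$ by inserting exactly the windows of $\spm(s)$ that are not extension bisectors of $\spm'(s)$, and each such missing window lies entirely in the interior of a single face of $\spm'(s)$. My plan is to process each face $C$ of $\spm'(s)$ independently, compute the missing windows inside it, and insert them into the subdivision, then argue that the total work is $O(n)$.

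Fix a face $C$ of $\spm'(s)$ with generator $\alpha=(A,a)$, and let $a=v_0, v_1, v_2, \ldots$ denote the vertices of the underlying chain $A$ in the designated direction. Every point $q\in C$ has predecessor $a$, and its anchor in $\pi(s,q)$ is the tangent vertex from $q$ to $A$. The locus inside $C$ separating the sub-region whose anchor is $v_i$ from the one whose anchor is $v_{i+1}$ is the ray $r_i$ emanating from $v_{i+1}$ in the direction from $v_i$ to $v_{i+1}$, clipped to $C$; by construction each such $r_i$ is precisely one of the missing windows of $\spm(s)$. Thus, building $\spm(s)\cap C$ reduces to computing the rays $r_i$ that actually enter $C$ and finding where each meets $\partial C$.

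I would do this in $O(|\partial C| + k_C)$ time per face, where $k_C$ denotes the number of such rays. Because $A$ is convex and every $r_i$ extends an edge of $A$, the rays are radially ordered around $A$, so their intersection points with $\partial C$ appear in the same order along $\partial C$, starting from where $\partial C$ leaves $A$. A standard two-pointer merge, advancing one pointer along the vertices of $A$ and the other along the edges of $\partial C$, computes all the intersections in $O(1)$ amortized time per step and terminates as soon as the next candidate ray fails to enter $C$; by convexity, every subsequent ray then also misses $C$. Inserting each computed ray as a new edge of the subdivision inside $C$ is a local $O(1)$ update. Summing over all faces, $\sum_{C} |\partial C| = O(n)$ by Corollary~\ref{coro:size}, and $\sum_{C} k_C$ is at most the total number of windows of $\spm(s)$, which is $O(n)$; the overall running time is therefore $O(n)$.

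The main obstacle I anticipate is certifying the monotonicity that makes the two-pointer walk correct: one must show that consecutive rays $r_i$ and $r_{i+1}$ cannot cross inside $C$ and that their hit points on $\partial C$ are encountered in the order in which we traverse $\partial C$. Both facts ultimately rest on the convexity of $A$, since $r_{i+1}$ lies strictly on one side of the supporting line through $v_{i+1}$ that contains $r_i$, but combining this with a careful identification of the portion of $\partial C$ that coincides with $A$ (and hence of the correct initial position of the boundary pointer) is the delicate bookkeeping step.
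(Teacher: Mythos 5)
Your proposal is correct and takes essentially the same approach as the paper: process each face $C$ with generator $\alpha=(A,a)$, observe that the missing windows are exactly the extensions of the obstacle edges of $A$, use the star-shapedness of $C$ with respect to the tangent rays of $A$ to conclude that the extensions hit $\partial C$ in the same radial order as the edges of $A$, and recover them by a single ordered traversal of $\partial C$, charging the total cost to $\sum_C |\partial C|=O(n)$ via Corollary~\ref{coro:size}. The paper states this more tersely (it does not spell out the two-pointer merge or the termination condition), but the argument is the same.
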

\begin{proof}
It suffices to insert all windows of $\spm(s)$ (except those that are also extension bisectors of $\spm'(s)$) to $\spm'(s)$. We consider each cell $C$ of $\spm'(s)$ separately. Let $\alpha=(A,a)$ be the generator of $C$. 
Our goal is to extend each obstacle edge of $A$ along the designated direction of $A$ until the boundary of $C$. To this end, since all points of $C$ are reachable from $\alpha$, the cell $C$ is ``star-shaped'' with respect to the tangents of $A$ (along the designated direction) and
the extension of each obstacle edge of $A$ intersects the boundary of $C$ at most once. Hence, the order of the endpoints of these extensions is consistent with the order of the corresponding edges of $A$. Therefore,  these extension endpoints can be found in order by traversing the boundary of $C$, which takes linear time in the size of $C$. Since the total size of all cells of $\spm'(s)$ is $O(n)$ by Corollary~\ref{coro:size}, the total time of the algorithm is $O(n)$.
\end{proof}

In light of Lemma~\ref{lem:spmconstruction}, we will focus on computing the map $\spm'(s)$.

\subsection{The wavefront expansion algorithm}
\label{sec:algo}

To simulate the wavefront expansion, we compute the wavefront passing
through each transparent edge of the conforming subdivision $\calS'$.
As in the HS algorithm, since it is difficult to compute a true
wavefront for each transparent edge $e$ of $\calS'$, a key idea
is to compute two one-sided wavefronts (called {\em
approximate wavefronts}) for $e$, each representing the wavefront
coming from one side of $e$. Intuitively, an approximate wavefront
from one side of $e$ is what the true wavefront would be if the
wavefront were blocked off at $e$ by considering $e$ as an
opaque segment (with open endpoints).

In the following, unless otherwise stated, a wavefront at a
transparent edge $e$ of $\calS'$ refers to an approximate wavefront.
Let $W(e)$ denote a wavefront at $e$. To make the description concise,
as there are two wavefronts at $e$, depending on the context, $W(e)$
may refer to both wavefronts, i.e., the discussion on $W(e)$ applies
to both wavefronts. For example, ``compute the wavefronts $W(e)$''
means ``compute both wavefronts at $e$''.

For each transparent edge $e$ of $\calS'$,
define $input(e)$ as the set of transparent edges on the boundary of
the well-covering region $\calU(e)$, and define $output(e)=\{g\ |\ e\in
input(g)\}\cup input(e)$\footnote{We include $input(e)$ in $output(e)$
mainly for the proof of Lemma~\ref{lem:markcell}, which relies on
$output(e)$ having a cycle enclosing $e$.}. Because $\partial
\calU(e')$ for each transparent edge $e'$ of $\calS'$ has $O(1)$
transparent edges, both $|input(e)|$ and $|output(e)|$ are $O(1)$.

\paragraph{The wavefront propagation and merging  procedures.}
The wavefronts $W(e)$ at $e$ are computed from the wavefronts at edges of
$input(e)$; this guarantees the correctness because $e$ is in $\calU(e)$ (and thus
any shortest path $\pi(s,p)$ must cross some edge $f\in input(e)$ for any point $p\in e$).
After the wavefronts $W(e)$ at $e$ are computed, they will pass
to the edges of $output(e)$. Also, the geodesic distances
from $s$ to both endpoints of $e$ will be computed. Recall that
$\calV$ is the set of rectilinear extreme vertices of all obstacles
and each
vertex of $\calV$ is incident to a transparent edge of $\calS'$. As such,
after the algorithm is finished, geodesic distances from $s$ to all
vertices of $\calV$ will be available. The process of passing the
wavefronts $W(e)$ at $e$ to all edges $g\in output(e)$ is called the
{\em wavefront propagation procedure}, which will compute the
wavefront $W(e,g)$, where $W(e,g)$ is
the portion of $W(e)$ that passes to $g$ through the well-covering
region $\calU(g)$ of $g$ if $e\in input(g)$ and through $\calU(e)$
otherwise (in this case $g\in input(e)$);
whenever the procedure is invoked on $e$, we say that $e$ is {\em
processed}. The wavefronts $W(e)$ at $e$ are constructed by
merging the wavefronts $W(f,e)$ for edges $f\in input(e)$; this procedure is called the
{\em wavefront merging procedure}.

\paragraph{The main algorithm.}
The transparent edges of $\calS'$ are processed in a rough time order.
The wavefronts $W(e)$ of each transparent edge $e$ are constructed at the time $\td(s,e)+|e|$, where $\td(s,e)$ is the minimum geodesic distance from $s$ to the two endpoints of $e$.
Define $covertime(e)=\td(s,e)+|e|$. The value $covertime(e)$ will be
computed during the algorithm. Initially, for each edge $e$ whose
well-covering region $\calU(e)$ contains $s$, $W(e)$ and $covertime(e)$ are computed
directly (and set $covertime(e)=\infty$ for all other edges);
we refer to this step as the {\em initialization step}, which will be elaborated below.
The algorithm maintains a timer $\tau$ and processes the
transparent edges $e$ of $\calS'$ following the order of
$covertime(e)$.

The main loop of the algorithm works as follows.
As long as $\calS'$ has an unprocessed transparent edge, we do the following.
First, among all unprocessed transparent edges, choose the one $e$
with minimum $covertime(e)$ and set $\tau=covertime(e)$. Second, call
the
wavefront merging procedure to construct the wavefronts $W(e)$ from
$W(f,e)$ for all edges $f\in input(e)$ satisfying
$covertime(f)<covertime(e)$; compute $d(s,v)$ from $W(e)$ for each
endpoint $v$ of $e$. Third, process $e$, i.e., call the wavefront
propagation procedure on $W(e)$ to compute $W(e,g)$ for all edges
$g\in output(e)$; in particular, compute the time $\tau_g$ when the
wavefronts $W(e)$ first encounter an endpoint of $g$ and set
$covertime(g)= \min\{covertime(g),\tau_g+|g|\}$.

The details of the wavefront merging procedure and the wavefront propagation procedure will be described in Section~\ref{sec:merge} and Section~\ref{sec:propagation}, respectively.

\paragraph{The initialization step.} We provide the details on the initialization step. Consider a transparent edge $e$ whose well-covering region $\calU(e)$ contains $s$. To compute $W(e)$, we only consider straight-line paths from $s$ to $e$ inside $\calU(e)$. If $\calU(e)$ does not have any island inside, then the points of $e$ that can be reached from $s$ by straight-line paths in $\calU(e)$ form an interval of $e$ and the interval can be computed by considering the tangents from $s$ to the elementary chains on the boundary of $\calU(e)$. Since $\calU(e)$ has $O(1)$ cells of $\calS'$, $\partial \calU(e)$ has $O(1)$ elementary chain fragments and thus computing the interval on $e$ takes $O(\log n)$ time. If $\calU(e)$ has at least one island inside, then $e$ may have multiple such intervals. As $\calU(e)$ is the union of $O(1)$ cells of $\calS'$, the number of islands inside $\calU(e)$ is $O(1)$. Hence, $e$ has $O(1)$ such intervals, which can be computed in $O(\log n)$ time altogether. These intervals form the wavefront $W(e)$, i.e., each interval corresponds to a wavelet with generator $s$. From $W(e)$, the value $covertime(e)$ can be immediately determined. More specifically, for each endpoint $v$ of $e$, if one of the wavelets of $W(e)$ covers $v$, then the segment $\overline{sv}$ is in $\calU(e)$ and thus $d(s,v)=|\overline{sv}|$; otherwise, we set $d(s,v)=\infty$. In this way, we find an upper bound for $\td(s,e)$ and set $covertime(e)$ to this upper bound plus $|e|$.


\paragraph{The algorithm correctness.}
At the time $\td(s,e)+|e|$, all edges $f\in input(e)$ whose wavefronts
$W(f)$ contribute a wavelet to $W(e)$ must have already been
processed. This is due to the property of the well-covering regions of
$\calS'$ that $d(e,f)\geq 2\cdot \max\{|e|,|f|\}$ since $f$ is on
$\partial\calU(e)$. The proof is the same as that of
Lemma~4.2~\cite{ref:HershbergerAn99}, so we omit it. Note that this
also implies that the geodesic distance $d(s,v)$ is correctly computed
for each endpoint $v$ of $e$. Therefore, after the algorithm is
	finished, geodesic distances from $s$ to endpoints of all
	transparent edges of $\calS'$ are correctly computed.

\paragraph{Artificial wavelets.}
As in the HS algorithm, to limit the interaction between wavefronts from different sides of each transparent edge $e$, when a wavefront propagates across $e$, i.e., when $W(e)$ is computed in the wavefront merging procedure, an {\em artificial wavelet} is generated at each endpoint $v$ of $e$, with weight $d(s,v)$. This is to eliminate a wavefront from one side of $e$ if it arrives at $e$ later than the wavefront from the other side of $e$.

\paragraph{Topologically different paths.}
In the wavefront propagation procedure to pass $W(e)$ to all edges $g\in output(e)$, $W(e)$ travels through the cells of the well-covering region $\calU$ of either $g$ or $e$. Since $\calU$ may not be simply connected (e.g., the square-annulus), there may be multiple topologically different shortest paths between $e$ and $g$ inside $\calU$; the number of such paths is $O(1)$ as $\calU$ is the union of $O(1)$ cells of $\calS'$. We propagate $W(e)$ in multiple components of $\calU$, each corresponding to a topologically different shortest path and defined by the particular sequence of transparent edges it crosses. These wavefronts are later combined in the wavefront merging step at $g$. This also happens in the initialization step, as discussed above.

\paragraph{Claiming a point.}
During the wavefront merging procedure at $e$, we have a set of wavefronts $W(f,e)$ that reach $e$ from the same side for edges $f\in input(e)$. We say that a wavefront $W(f,e)$ {\em claims} a point $p\in e$ if $W(f,e)$ reaches $p$ before any other wavefront. Further, for each wavefront $W(f,e)$, the set of points on $e$ claimed by it forms an interval (the proof is similar to that of Lemma~4.4~\cite{ref:HershbergerAn99}).
Similarly, a wavelet of a wavefront {\em claims} a point $p$ of $e$ if the wavelet reaches $p$ before any other wavelet of the wavefront, and the set of points on $e$ claimed by any wavelet of the wavefront also forms an interval. For convenience, if a wavelet claims a point, we also say that the generator of the wavelet claims the point.

\bigskip

Before presenting the details of the wavefront merging procedure and the wavefront propagation procedure in the next two subsections, we first discuss the data structure (for representing elementary chains, generators, and wavefronts) and a monotonicity property of bisectors, which will be used later in our algorithm.

\subsubsection{The data structure}
We use an array to represent each elementary chain. Then, a generator $\alpha=(A,a)$ can be represented by recording the indices of the two end vertices of its underlying chain $A$. In this way, a generator takes $O(1)$ additional space to record and binary search on $A$ can be supported in $O(\log n)$ time (e.g., given a point $p$, find the tangent from $p$ to $A$).
In addition, we maintain the lengths of the edges in each elementary chain so that given any two vertices of the chain, the length of the sub-chain between the two vertices can be obtained in $O(1)$ time. All these preprocessing can be done in $O(n)$ time and space for all elementary chains.

For a wavefront $W(e)$ of one side of $e$, it is a list of generators $\alpha_1,\alpha_2,\ldots$ ordered by the intervals of $e$ claimed by these generators. Note that these generators are in the same side of $e$. Formally, we say that a generator $\alpha$ is in one side of $e$ if the initial vertex of $\alpha$ lies in that side of the supporting line of $e$.
We maintain these generators by a balanced binary search tree so that the following operations can be supported in logarithmic time each. Let $W$ be a wavefront with generators $\alpha_1,\alpha_2,\ldots,\alpha_k$.

\begin{description}
\item[Insert] Insert a generator $\alpha$ to $W$.
In our algorithm, $\alpha$ is inserted either in the front of $\alpha_1$ or after $\alpha_k$.

\item[Delete] Delete a generator $\alpha_i$ from $W$, for any $1\leq i\leq k$.

\item[Split] Split $W$ into two sublists at some generator $\alpha_i$ so that the first $i$ generators form a wavefront and the rest form another wavefront.

\item[Concatenate] Concatenate $W$ with another list $W'$ of generators so that all generators of $W$ are in the front of those of $W'$ in the new list.
\end{description}

We will show later that each wavefront involved in our algorithm has $O(h)$ generators. Therefore, each of the above operation can be performed in $O(\log h)$ time.
We make the tree fully persistent by path-copying~\cite{ref:DriscollMa89}. In this way, each operation on the tree will cost $O(\log h)$ additional space but the older version of the tree will be kept intact (and operations on the old tree can still be supported).




\subsubsection{The monotonicity property of bisectors}

\begin{lemma}\label{lem:intersection}
Suppose $\alpha_1=(A_1,a_1)$ and $\alpha_2=(A_2,a_2)$ are two generators on the same side of an axis-parallel line $\ell$ (i.e., $a_1$ and $a_2$ are on the same side of $\ell$; it is possible that $\ell$ intersects $A_1$ and $A_2$). Then, the bisector $B(\alpha_1,\alpha_2)$ intersects $\ell$ at no more than one point.
\end{lemma}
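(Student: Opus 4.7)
Translate/rotate so that $\ell$ is the $x$-axis and both $a_1,a_2$ lie in the closed upper half-plane. Parameterize $\ell$ by $q_x=(x,0)$ and set $h(x)=d(\alpha_1,q_x)-d(\alpha_2,q_x)$, so that $B(\alpha_1,\alpha_2)\cap\ell$ equals the zero set of $h$; the claim is that $h$ has at most one zero.

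The approach I would take is to show that, as a curve in the plane, $B(\alpha_1,\alpha_2)$ is $y$-monotone under the ``same side'' hypothesis; this immediately gives that it meets any horizontal line in at most one point. At any $q\in B(\alpha_1,\alpha_2)$, the tangent direction of $B$ is perpendicular to $\nabla d(\alpha_1,q)-\nabla d(\alpha_2,q)=\hat{u}_1-\hat{u}_2$, where $\hat{u}_i$ is the unit vector from the active tangent point $v_i\in A_i$ to $q$. Hence $y$-monotonicity amounts to showing $\hat{u}_1^x(q)\neq\hat{u}_2^x(q)$ along $B$, with a sign that is consistent as $q$ varies on $B$.

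I plan to establish this by combining three ingredients: (i) each chain $A_i$ is convex, so the active tangent point $v_i$ slides monotonically along $A_i$ as $q$ moves along $B$; (ii) the anchors $a_1,a_2$ lie above $\ell$, which controls the angles of the tangent segments $\overline{v_i q}$ relative to $\ell$; and (iii) the equidistance constraint $w(a_1)+\mathrm{chain}(a_1,v_1)+|v_1 q|=w(a_2)+\mathrm{chain}(a_2,v_2)+|v_2 q|$ at points of $B$. A short angular argument then rules out $\hat{u}_1^x=\hat{u}_2^x$ in the principal case where both $v_i$ lie above $\ell$, by showing that such an equality, combined with (iii), forces a degenerate collinear configuration of the tangent segments that is incompatible with $\alpha_1\neq\alpha_2$ on the same side of $\ell$.

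The main obstacle is the case where $\ell$ intersects one of the chains $A_i$, so the active tangent point $v_i$ can lie below $\ell$ for some choices of $q\in B$. Here the naive angular argument (that $\hat{u}_i$ points downward because $v_i$ is above $\ell$) breaks down, and one must reason about what happens as $v_i$ crosses $\ell$. I would handle this by tracking, as $q$ moves along $B$, the transitions at which $v_i$ slides across $\ell$; the convexity and $xy$-monotonicity of the elementary chain $A_i$, together with the fact that $a_i$ remains above $\ell$, should ensure that such transitions do not introduce additional intersections of $B$ with $\ell$, so $y$-monotonicity is preserved throughout and the lemma follows.
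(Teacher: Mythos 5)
Your plan is genuinely different from the paper's proof, which does not use a differential/gradient argument at all. The paper proceeds by (i) defining the reachable region $R(\alpha_i)$ of each generator and showing that $\ell\cap R(\alpha_i)$ is an interval (using $xy$-monotonicity of the elementary chain and the two bounding rays of the reachable region), (ii) observing that $B(\alpha_1,\alpha_2)\cap\ell$ lies inside $I=\ell\cap R(\alpha_1)\cap R(\alpha_2)$, and (iii) deriving a contradiction from two intersection points $q_1,q_2$: as $q$ moves from $q_1$ to $q_2$ along $\ell$, the tangents to $A_1$ and to $A_2$ must cross in their interiors (because both chains are on one side of $\ell$), and the crossing point would then have two distinct predecessors, which is impossible. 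This is a finite, combinatorial argument that never needs global $y$-monotonicity of $B$ and that handles the ``$\ell$ intersects $A_i$'' case automatically because all it uses is the interval structure of $\ell\cap R(\alpha_i)$.

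There are two genuine gaps in your proposal. First, your key inequality $\hat{u}_1^x(q)\neq\hat{u}_2^x(q)$ is simply false at points of an \emph{extension bisector}: there $\hat{u}_1=\hat{u}_2$ by definition (the two tangent segments are collinear, with one a subsegment of the other), so $\nabla\bigl(d(\alpha_1,\cdot)-d(\alpha_2,\cdot)\bigr)=0$ and the gradient argument says nothing about the tangent direction of $B$. Your assertion that such a collinear configuration ``is incompatible with $\alpha_1\neq\alpha_2$ on the same side of $\ell$'' is exactly backwards: the paper defines and heavily uses such bisectors. You would need to split off this case and argue separately that an extension bisector is a ray, hence meets $\ell$ at most once; as written, your argument is wrong at precisely these points. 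Second, you explicitly flag the case where $\ell$ cuts a chain $A_i$ (so a tangent point can fall below $\ell$) as ``the main obstacle'' and then only state a plan (``track transitions... should ensure...''). That is not a proof; in fact, this is the place where the sign of $\hat{u}_1^x-\hat{u}_2^x$ is hardest to control, and you give no argument for it. A further, more minor issue: to conclude $y$-monotonicity of $B$ from a pointwise nondegeneracy you also need the sign of $\hat{u}_1^x-\hat{u}_2^x$ to be \emph{consistent} along $B$; you mention this but do not prove it, and it is not obvious. By contrast, the paper's proof sidesteps all three issues by reasoning only about the fixed line $\ell$ and about intersections of specific tangent segments, making it both shorter and robust to the extension-bisector and chain-crossing cases.
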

\begin{proof}
Consider a generator $\alpha=(A,a)$. Recall that $a$ is an endpoint of $A$. Let $a'$ be the other endpoint of $A$.
We define $R(\alpha)$ as the set of points in the plane that are reachable from $\alpha$ (without considering any other obstacles), and we call it the {\em reachable region} of $\alpha$.
The reachable region $R(\alpha)$ is bounded by $A$, a ray $\rho(a)$ with origin $a$, and another ray $\rho(a')$ with origin $a'$ (e.g., see Fig.~\ref{fig:reachableregion}).
Specifically, $\rho(a)$ is along the direction from $z$ to $a$, where $z$ is the anchor of $a$ in the shortest path from $s$ to $a$;
$\rho(a')$ is the ray from the adjacent obstacle vertex of $a'$ on $A'$ to $a'$.

We claim that the intersection $\ell\cap R(\alpha)$ must be an interval of $\ell$. Indeed, since $A$ is $xy$-monotone, without loss of generality, we assume that $a'$ is to the northwest of $a$. Let $\rho'(a)$ be the vertically downward ray from $a$ (e.g., see Fig.~\ref{fig:reachableregion10}). Observe that the concatenation of $\rho(a')$, $A$, and $\rho'(a)$ is $xy$-monotone; let
$R'$ be the region of the plane to the right of their concatenation. Since the boundary of $R'$ is $xy$-monotone and $\ell$ is axis-parallel, $R'\cap \ell$ must be an interval of $\ell$. Notice that $\rho(a)$ must be in $R'$ and thus it partitions $R'$ into two subregions, one of which is $R(\alpha)$. Since $\ell$ intersects $\rho(a)$ at most once and $R'\cap \ell$ is an interval, $R(\alpha)\cap \ell$ must also be an interval.

\begin{figure}[t]
\begin{minipage}[t]{0.49\textwidth}
\begin{center}
\includegraphics[height=1.2in]{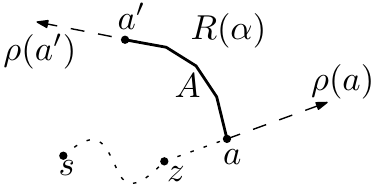}
\caption{\footnotesize Illustrating the reachable region $R(\alpha)$ of a generator $\alpha=(A,a)$.}
\label{fig:reachableregion}
\end{center}
\end{minipage}
\hspace{0.02in}
\begin{minipage}[t]{0.49\textwidth}
\begin{center}
\includegraphics[height=1.5in]{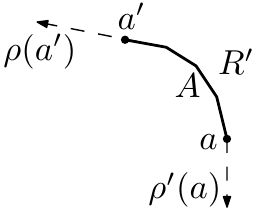}
\caption{\footnotesize Illustrating the ray $\rho'(a)$ and the region $R'$.}
\label{fig:reachableregion10}
\end{center}
\end{minipage}
\vspace{-0.15in}
\end{figure}

According to the above claim, both $\ell\cap R(\alpha_1)$ and $\ell\cap R(\alpha_2)$ are intervals of $\ell$.
Let $I$ denote the common intersection of the two intervals. Since $B(\alpha_1,\alpha_2)\subseteq R(\alpha_1)$ and $B(\alpha_1,\alpha_2)\subseteq R(\alpha_2)$, we obtain that $B(\alpha_1,\alpha_2)\cap \ell\subseteq I$.

\begin{figure}[h]
\begin{minipage}[t]{\textwidth}
\begin{center}
\includegraphics[height=1.4in]{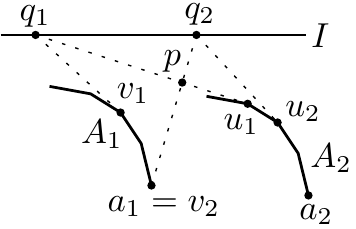}
\caption{\footnotesize Illustrating the proof of Lemma~\ref{lem:intersection}.}
\label{fig:monotone}
\end{center}
\end{minipage}
\vspace{-0.15in}
\end{figure}

Assume to the contrary that $B(\alpha_1,\alpha_2)\cap \ell$ contains two points, say, $q_1$ and $q_2$. Then, both points are in $I$. Let $v_1$ and $u_1$ be the tangents points of $q_1$ on $A_1$ and $A_2$, respectively.
Let $v_2$ and $u_2$ be the tangents points of $q_2$ on $A_1$ and $A_2$, respectively (e.g., see Fig.~\ref{fig:monotone}). As $I=\ell\cap R(\alpha_1)\cap R(\alpha_2)$ and $I$ contains both $q_1$ and $q_2$, if we move a point $q$ from $q_1$ to $q_2$ on $\ell$, the tangent from $q$ to $A_1$ will continuously change from $\overline{q_1v_1}$ to $\overline{q_2v_2}$ and the tangent from $q$ to $A_2$ will continuously change from $\overline{q_1u_1}$ to $\overline{q_2u_2}$. Since $A_1$ and $A_2$ are in the same side of $\ell$, either $\overline{q_1u_1}$ intersects $\overline{q_2v_2}$ in their interiors or $\overline{q_1v_1}$ intersects $\overline{q_2u_2}$ in their interiors; without loss of generality, we assume that it is the former case. Let $p$ be the intersection of $\overline{q_1u_1}$ and $\overline{q_2v_2}$ (e.g., see Fig.~\ref{fig:monotone}). Since $q_1\in B(\alpha_1,\alpha_2)$, points of $\overline{q_1u_1}$ other than $q_1$ have only one predecessor, which is $a_2$. As $p\in \overline{q_1u_1}$ and $p\neq q_1$, $p$ has only one predecessor $a_2$. Similarly, since $q_2\in B(\alpha_1,\alpha_2)$ and $p\in \overline{q_2v_2}$, $a_1$ is also $p$'s predecessor. We thus obtain a contradiction.
\end{proof}


\begin{corollary}\label{coro:monotone}
Suppose $\alpha_1=(A_1,a_1)$ and $\alpha_2=(A_2,a_2)$ are two generators both below a horizontal line $\ell$. Then, the portion of the bisector $B(\alpha_1,\alpha_2)$ above $\ell$ is $y$-monotone.
\end{corollary}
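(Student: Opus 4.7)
The plan is to reduce the corollary directly to Lemma~\ref{lem:intersection} by sweeping with horizontal lines. To say that the portion of $B(\alpha_1,\alpha_2)$ above $\ell$ is $y$-monotone is equivalent to saying that every horizontal line meets this portion in at most one point, so I would show this intersection property for an arbitrary horizontal line $\ell'$ lying (weakly) above $\ell$.

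Fix any such horizontal line $\ell'$. Since both generators $\alpha_1$ and $\alpha_2$ lie below $\ell$, and $\ell'$ is at or above $\ell$, the initial vertices $a_1$ and $a_2$ are both below $\ell'$; that is, $a_1$ and $a_2$ lie on the same side of the axis-parallel line $\ell'$. Lemma~\ref{lem:intersection} applies verbatim (it allows $\ell'$ to intersect the underlying chains $A_1$ and $A_2$), and it tells us that $B(\alpha_1,\alpha_2)\cap \ell'$ consists of at most one point.

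Since this holds for every horizontal line $\ell'$ on or above $\ell$, the portion of $B(\alpha_1,\alpha_2)$ strictly above $\ell$ meets every horizontal line at most once, which is exactly $y$-monotonicity. There is no real obstacle here: the corollary is essentially a repackaging of Lemma~\ref{lem:intersection} applied to the family of horizontal sweep lines above $\ell$, and the only thing to check is that the ``same side'' hypothesis of the lemma is inherited, which is immediate from the assumption that both generators sit below $\ell$.
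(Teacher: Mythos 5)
Your proof is correct and is essentially identical to the paper's: both reduce the claim to Lemma~\ref{lem:intersection} by observing that any horizontal line $\ell'$ at or above $\ell$ still has both generators on one side, so the bisector meets $\ell'$ at most once.
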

\begin{proof}
For any horizontal line $\ell'$ above $\ell$, since both generators are below $\ell$, they are also below $\ell'$. By Lemma~\ref{lem:intersection}, $B(\alpha_1,\alpha_2)\cap \ell'$ is either empty or a single point. The corollary thus follows.
\end{proof}

\subsection{The wavefront merging procedure}
\label{sec:merge}

In this section, we present the details of the wavefront merging procedure. Given all contributing wavefronts $W(f,e)$ of $f\in input(e)$ for $W(e)$, the goal of the procedure is to compute $W(e)$.
The algorithm follows the high-level scheme of the HS algorithm (i.e., Lemma 4.6~\cite{ref:HershbergerAn99}) but the implementation details are quite different.


We only consider the wavefronts $W(f,e)$ and $W(e)$ for one side of $e$ since the algorithm for the other side is analogous. Without loss of generality, we assume that $e$ is horizontal and all wavefronts $W(f,e)$ are coming from below $e$. We describe the algorithm for computing the interval of $e$ claimed by $W(f,e)$ if only one other wavefront $W(f',e)$ is present. The common intersection of these intervals of all such $f'$ is the interval of $e$ claimed by $W(f,e)$. Since $|input(e)|=O(1)$, the number of such $f'$ is $O(1)$.


We first determine whether the claim of $W(f,e)$ is to the left or right of that of $W(f',e)$. To this end,
depending on whether both $W(f,e)$ and $W(f',e)$ reach the left endpoint $v$ of $e$, there are two cases. Note that the intervals of $e$ claimed by $W(f,e)$ and $W(f',e)$ are available from $W(f,e)$ and $W(f',e)$; let $I_f$ and $I_{f'}$ denote these two intervals, respectively.

\begin{itemize}
\item
If both $I_f$ and $I_{f'}$ contain $v$, i.e., both $W(f,e)$ and $W(f',e)$ reach $v$, then we compute the (weighted) distances from $v$ to the two wavefronts. This can be done as follows. Since $v\in I_f$, $v$ must be reached by the leftmost generator $\alpha$ of $W(f,e)$. We compute the distance $d(\alpha,v)$ by computing the tangent from $v$ to $\alpha$ in $O(\log n)$ time. Similarly, we compute $d(\alpha',v)$, where $\alpha'$ is leftmost generator of $W(f',e)$. If $d(\alpha, v)\leq d(\alpha',v)$, then the claim of $W(f,e)$ is to the left of that of $W(f',e)$; otherwise, the claim of $W(f,e)$ is to the right of that of $W(f',e)$.

\item
If not both $I_f$ and $I_{f'}$ contain $v$, then the order of the left endpoints of $I_f$ and $I_{f'}$ will give the answer.
\end{itemize}

Without loss of generality, we assume that the claim of $W(f,e)$ is to the left of that of $W(f',e)$. We next compute the interval $I$ of $e$ claimed by $W(f,e)$ with respect to $W(f',e)$. Note that the left endpoint of $I$ is the left endpoint of $I_f$. Hence, it remains to find the right endpoint of $I$, as follows.

Let $\ell_e$ be the supporting line of $e$.
Let $\alpha$ be the rightmost generator of $W(f,e)$ and let $\alpha'$ be the leftmost generator of $W(f',e)$.
Let $q_1$ be the left endpoint of the interval on $e$ claimed by $\alpha$ in $W(f,e)$, i.e., $q_1$ is the intersection of the bisector $B(\alpha_1,\alpha)$ and $e$, where $\alpha_1$ is the left neighboring generator of $\alpha$ in $W(f,e)$. Similarly, $q_2$ be the right endpoint of the interval on $e$ claimed by $\alpha'$ in $W(f',e)$, i.e., $q_2$ is the intersection of $e$ and the bisector $B(\alpha',\alpha_1')$, where $\alpha_1'$ is the right neighboring generator of $\alpha'$ in $W(f',e)$.
Let $q_0$ be the intersection of the bisector $B(\alpha,\alpha')$ and $e$. We assume that the three points $q_i$, $i=0,1,2$ are available and we will discuss later that each of them can be computed in $O(\log n)$ time by a {\em bisector-line intersection operation} given in Lemma~\ref{lem:bl-intersection}. If $q_0$ is between $q_1$ and $q_2$, then $q_0$ is the right endpoint of $I$ and we can stop the algorithm. If $q_0$ is to the left of $q_1$, then we delete $\alpha$ from $W(f,e)$. If $q_0$ is to the right of $q_2$, then we delete $\alpha'$ from $W(f',e)$. In either case, we continue the same algorithm by redefining $\alpha$ or $\alpha'$ (and recomputing $q_i$ for $i=0,1,2$).

Clearly, the above algorithm takes $O((1+k)\log n)$ time, where $k$ is the number of generators that are deleted. We apply the algorithm on $f$ and other $f'$ in $input(e)$ to compute the corresponding intervals for $f$. The common intersection of all these intervals is the interval of $e$ claimed by $W(f,e)$. We do so for each $f\in input(e)$, after which $W(e)$ is obtained. Since the size of $input(e)$ is $O(1)$, we obtain the following lemma.

\begin{lemma}\label{lem:merge}
Given all contributing wavefronts $W(f,e)$ of edges $f\in input(e)$ for $W(e)$, we can compte the interval of $e$ claimed by each $W(f,e)$ and thus construct $W(e)$ in $O((1+k)\log n)$ time, where $k$ is the total number of generators in all wavefronts $W(f,e)$ that are absent from $W(e)$.
\end{lemma}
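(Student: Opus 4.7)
The plan is to implement and analyze the iterative procedure described in the paragraphs preceding the lemma statement, showing that it is correct and runs in $O((1+k)\log n)$ time overall.

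First I would verify that each atomic step inside one iteration of the inner loop costs only $O(\log n)$ time. The ``left vs.\ right'' test at an endpoint $v$ of $e$ needs at most two tangent queries to elementary chains, each of which costs $O(\log n)$ on the array representation set up in Section~\ref{sec:algo}. The three points $q_0, q_1, q_2$ are produced by bisector--line intersection operations; the forthcoming Lemma~\ref{lem:bl-intersection} is meant to provide them in $O(\log n)$ time each. Accessing or deleting the rightmost generator of $W(f,e)$ (or the leftmost of $W(f',e)$), and then fetching the new extremal generator, takes $O(\log h)=O(\log n)$ time in the balanced-tree representation of a wavefront. Hence the body of each iteration runs in $O(\log n)$ time.

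The main obstacle will be arguing correctness of the permanent pruning step, namely that when $q_0$ lies strictly to the left of $q_1$ the generator $\alpha$ may be safely and permanently discarded from $W(f,e)$, and symmetrically for $\alpha'$ when $q_0$ is to the right of $q_2$. For this I would invoke Corollary~\ref{coro:monotone}: because all generators of $W(f,e)$ and $W(f',e)$ lie on the same side of the horizontal line $\ell_e$, every pairwise bisector between two such generators is $y$-monotone above $\ell_e$ and so meets $\ell_e$ at most once. Combined with the fact that within a single wavefront the generators partition the points of $e$ they reach into contiguous intervals, this yields: if $\alpha$ has already lost its sub-interval on $e$ to its left neighbour $\alpha_1$ inside $W(f,e)$ before its bisector with $\alpha'$ even reaches $e$, then no point of $e$ can ever be claimed by $\alpha$ in $W(e)$; the case of $\alpha'$ is symmetric. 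When instead $q_0$ lies between $q_1$ and $q_2$, I would show $q_0$ is precisely the right endpoint of the interval of $e$ on which $W(f,e)$ beats $W(f',e)$, by noting that to the left of $q_0$ some generator of $W(f,e)$ is closer than every generator of $W(f',e)$, and that to the right of $q_0$ the opposite holds, using monotonicity along $e$ of the signed distance difference between the two wavefronts.

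Finally I would carry out the amortization and assemble the outer loop. Every non-terminating iteration permanently removes a generator from $W(f,e)$ or $W(f',e)$ that therefore does not appear in $W(e)$, so the cumulative deletions over all iterations can be charged to the $k$ missing generators; together with the $O(\log n)$ cost per iteration this gives $O((1+k_{f,f'})\log n)$ time for a single ordered pair $(f,f')$. Running the subroutine on each of the $O(1)$ pairs with $f'\in input(e)\setminus\{f\}$, intersecting the $O(1)$ resulting intervals to produce the interval of $e$ claimed by $W(f,e)$, and then repeating over the $O(1)$ choices of $f\in input(e)$ assembles $W(e)$. Summing the per-pair bounds and using $|input(e)|=O(1)$ yields the claimed total running time of $O((1+k)\log n)$.
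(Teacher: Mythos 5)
Your proposal is correct and follows essentially the same algorithm and analysis as the paper: the three-point comparison $q_0,q_1,q_2$ via bisector-line intersections, deletion of a provably-useless extremal generator per non-terminating iteration charged to the $k$ generators absent from $W(e)$, and the $O(1)$ fan-out over pairs $(f,f')$. You additionally spell out why a deleted generator can never reappear (via the monotonicity of Corollary~\ref{coro:monotone}), which the paper leaves implicit, but the overall route is the same.
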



\begin{lemma}{\em\bf (Bisector-line intersection operation)}\label{lem:bl-intersection}
Each bisector-line intersection operation can be performed in $O(\log n)$ time.
\end{lemma}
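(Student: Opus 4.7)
My plan is to apply the tentative prune-and-search technique of Kirkpatrick and Snoeyink~\cite{ref:KirkpatrickTe95}. Recall that the bisector $B(\alpha_1,\alpha_2)$ for two generators $\alpha_i=(A_i,a_i)$ is a concatenation of hyperbolic arcs, one arc for each pair $(v_1,v_2)\in A_1\times A_2$: that arc is the locus of points $q$ whose tangent from $q$ to $A_i$ touches $A_i$ at $v_i$ and for which the two weighted geodesic distances $d(\alpha_1,q)$ and $d(\alpha_2,q)$ coincide. Every line $\ell$ for which this operation is invoked in the algorithm is axis-parallel (it is the supporting line of a transparent edge), so by Lemma~\ref{lem:intersection}, in the relevant regime where both generators lie on the same side of $\ell$, the intersection $B(\alpha_1,\alpha_2)\cap\ell$ consists of at most one point.

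I would maintain index intervals $[l_1,r_1]\subseteq A_1$ and $[l_2,r_2]\subseteq A_2$ of candidate tangent vertices; initially these are the full chains. In each round I pick the medians $v_1^*$ and $v_2^*$. Using the prefix edge-length arrays of the elementary chains (set up in preprocessing), the arc-lengths $\text{arc}(a_i,v_i^*)$ are available in $O(1)$ time; hence the equation $d(\alpha_1,q)=d(\alpha_2,q)$, assuming the contact vertices are exactly $v_1^*$ and $v_2^*$, is the equation of one specific hyperbola $H(v_1^*,v_2^*)$ with a constant-size description. Its intersection with the axis-parallel line $\ell$ reduces to a quadratic, giving in $O(1)$ time at most two candidate points $q$. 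For each candidate $q$, I test in $O(1)$ time whether $v_1^*$ really is the contact vertex from $q$ to $A_1$: since $A_1$ is convex, this is equivalent to checking whether $q$ lies in the wedge bounded by the supporting lines of the two edges of $A_1$ adjacent to $v_1^*$. I apply the symmetric test for $v_2^*$.

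If both tests succeed for some candidate $q$, then $q$ is the sought intersection and I am done. Otherwise, the outcomes of the two wedge tests, combined with the sign of $F(t):=d(\alpha_1,q(t))-d(\alpha_2,q(t))$ at a convenient reference point on $\ell$ (for instance an endpoint of the $t$-range on which $v_1^*$ would remain the $A_1$-contact), reveal on which side of $v_1^*$ or $v_2^*$ the true contact vertex, if any, must lie. The monotonicity of the tangent map along $\ell$ --- a consequence of the convexity of the chains and of Lemma~\ref{lem:intersection} --- guarantees that at least one of the two intervals can be halved in each round without losing the true intersection. After $O(\log n)$ rounds both intervals shrink to constant size, and the answer is then found (or ruled out) in $O(1)$ further time, giving the claimed $O(\log n)$ bound.

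The hard part, as I see it, is the case analysis justifying the pruning rule: for each combination of outcomes of the two wedge tests and the sign of $F$ at the reference point, one must verify that the correct half of $[l_1,r_1]$ or $[l_2,r_2]$ is discarded without losing the unique intersection. All remaining ingredients --- solving a quadratic, querying a prefix-sum array, and testing membership in a wedge at a vertex of a convex chain --- are purely local and contribute only $O(1)$ work per round.
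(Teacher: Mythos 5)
The paper's proof casts the problem into the Kirkpatrick--Snoeyink framework by explicitly defining a decreasing function $f:A_1\to A_2$ (which matches a tangent point of $\alpha_1$ to the tangent point of $\alpha_2$ whose rays meet on the bisector) and an increasing function $g:A_2\to A_1$ (which matches a tangent point of $\alpha_2$ to the tangent point of $\alpha_1$ whose rays meet on $\ell$), and then invoking Theorem 3.6 of~\cite{ref:KirkpatrickTe95} to locate the fixed point of $g\cdot f$. All the delicate pruning logic is packaged inside that theorem; the only thing the paper must verify is that the comparisons $f(a)\ge b$ and $g(b)\ge a$ take $O(1)$ time, which follows from the prefix-length arrays.

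Your proposal gestures at the same technique and cites the same reference, but what you actually describe is a homegrown double binary search: choose medians $v_1^*,v_2^*$, solve the single hyperbola $H(v_1^*,v_2^*)$ against $\ell$, run wedge tests, and infer a pruning direction from sign information. This has a genuine gap, and you flag it yourself: you never justify that in every round at least one interval can be safely halved. The difficulty is more than clerical. The candidate point $q=H(v_1^*,v_2^*)\cap\ell$ is built from an assumption about contact vertices that may be wrong; when a wedge test fails, $q$ is not on the true bisector, so it is not on the true hyperbolic arc whose $\ell$-crossing you seek, and the sign of $F$ at $q$ or at a nearby reference point does not straightforwardly locate the true contact vertices relative to $v_1^*,v_2^*$. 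Simultaneous median-picking on two chains is precisely the situation where naive pruning can fail to make progress on either interval, which is the motivation for the ``tentative'' prune-and-search machinery: you sometimes have to commit provisionally and later retract. To close the gap you would either have to carry out the full case analysis (in effect reproving a version of Theorem 3.6), or simply set up the two monotone functions as the paper does and hand the work to the cited theorem. As written, the proposal identifies the right $O(1)$-time primitives (prefix-sum arc lengths, constant-time hyperbola-line intersection, wedge membership), but the core correctness argument for the search is missing.
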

\begin{proof}
Given two generators $\alpha_1=(A_1,a_1)$ and $\alpha_2=(A_2,a_2)$ below a horizontal line $\ell$, the goal of the operation is to compute the intersection between the bisector $B(\alpha_1,\alpha_2)$ and $\ell$. By Lemma~\ref{lem:intersection}, $B(\alpha_1,\alpha_2)\cap \ell$ is either empty or a single point.


We apply a prune-and-search technique from Kirkpatrick and Snoeyink~\cite{ref:KirkpatrickTe95}. To avoid the lengthy background explanation, we follow the notation in~\cite{ref:KirkpatrickTe95} without definition.
We will rely on Theorem~3.6 in~\cite{ref:KirkpatrickTe95}. To do so,
we need to define a decreasing function $f$ and an increasing function $g$.



We first compute the intersection of $\ell$ and the reachable region $R(\alpha_1)$ of $\alpha_1$, as defined in the proof of Lemma~\ref{lem:intersection}. This can be easily done by computing the intersections between $\ell$ and the boundary of $R(\alpha_1)$ in $O(\log n)$ time. Let $I_1$ be their intersection, which is an interval of $\ell$ as proved in Lemma~\ref{lem:intersection}. Similarly, we compute the intersection $I_2$ of $\ell$ and the reachable region $R(\alpha_2)$ of $\alpha_2$. Let $I=I_1\cap I_2$.
For each endpoint of $I$, we compute its tangent point on $A_1$ (along its designated direction) to determine the portion of $A_1$ whose tangent rays intersect $I$ and let $A$ denote the portion.  Similarly, we determine the portion $B$ of $A_2$ whose tangent rays intersect $I$. All above takes $O(\log n)$ time.

\begin{figure}[t]
\begin{minipage}[t]{\textwidth}
\begin{center}
\includegraphics[height=1.7in]{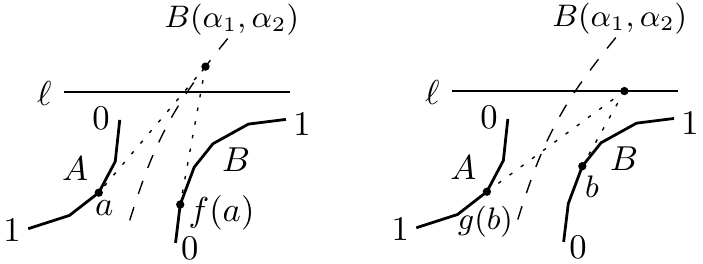}
\caption{\footnotesize Illustrating the definitions of $f(a)$ and $g(b)$.}
\label{fig:bisectorline}
\end{center}
\end{minipage}
\vspace{-0.15in}
\end{figure}

We parameterize over $[0,1]$ each of the two convex chains $A$ and
$B$ in clockwise order, i.e., each value of $[0,1]$ corresponds to a slope of a tangent at a point on
$A$ (resp., $B$). For each point $a$ of $A$, we define $f(a)$ to be the parameter of the point $b\in B$ such that the tangent ray of $A$ at $a$ along the designated direction and the tangent ray of $B$ at $b$ along the designated direction intersect at a point on the bisector $B(\alpha_1,\alpha_2)$ (e.g., see Fig.~\ref{fig:bisectorline} left); if the tangent ray at $a$ does not intersect $B(\alpha_1,\alpha_2)$, then define $f(a)=1$. For each point $b$ of $B$, we define $g(b)$ to be the parameter of the point $a\in A$ such that the tangent ray of $A$ at $a$ and the tangent ray of $B$ at $b$ intersect at a point on the line $\ell$ (e.g., see Fig.~\ref{fig:bisectorline} right).
One can verify that $f$ is a continuous decreasing function while $g$ is a continuous increasing function (the tangent at an obstacle vertex of $A$ and $B$ is not unique but the issue can be handled~\cite{ref:KirkpatrickTe95}). The fixed-point of the
composition of the two functions $g \cdot f$ corresponds to
the intersection of $\ell$ and $B(\alpha_1,\alpha_2)$,
which can be computed by applying the
prune-and-search algorithm of Theorem 3.6~\cite{ref:KirkpatrickTe95}.

As both chains $A$ and $B$ are represented by arrays (of size $O(n)$),
to show that the algorithm can be implemented in $O(\log n)$ time, it suffices to show that given
any $a\in A$ and any $b\in B$, we can determine whether $f(a)\geq b$ in $O(1)$
time and determine whether $g(b)\geq a$ in $O(1)$ time.

To determine whether $f(a)\geq b$, we do the following.
We first find the intersection $q$ of the tangent ray of $A$ at
$a$ and the tangent ray of $B$ at $b$. Then, $f(a)\geq b$ if and only if $d(\alpha_1,q)\leq  d(\alpha_2,q)$. Notice that $d(\alpha_1,q)=w(a_1)+|\widehat{a_1a}|+|\overline{aq}|$, where $|\widehat{a_1a}|$ is the length of the portion of $A_1$ between $a_1$ and $a$. Recall that we have a data structure on each elementary chain $C$ such that given any two vertices on $C$, the length of the sub-chain of $C$ between the two vertices can be computed in $O(1)$ time. Using the data structure,
$|\widehat{a_1a}|$ can be computed in constant time. Since $w(a_1)$ is already known, $d(\alpha_1,q)$ can be computed in constant time. So is $d(\alpha_2,q)$.
In the case where the two tangent rays do not intersect, either the tangent ray of $A$ at $a$ intersects the backward extension of the tangent ray of $B$ at $b$ or the tangent ray of $B$ at $b$ intersects the backward extension of the tangent ray of $A$ at $a$. In the former case $f(a)\leq b$ holds while in the latter case $f(a)\geq b$ holds. Hence, whether $f(a)\geq b$ can be determined in constant time.

To determine whether $g(b)\geq a$, we do the following. Find the intersection $p_a$ between $\ell$ and the tangent ray of $A$ at $a$ and the intersection $p_b$ between $\ell$ and the tangent ray of $B$ at $b$. If $p_a$ is to the left of $p_b$, then $g(b)\geq a$; otherwise $g(b)\leq  a$.
Note that by the definition of $A$, the tangent ray at any point of $A$ intersects $I$; the same is true for $B$. Hence, whether $g(b)\geq a$ can be determined in constant time.

The above algorithm returns a point $q$ in $O(\log n)$ time. If the intersection of $\ell$ and $B(\alpha_1,\alpha_2)$ exists, then $q$ is the intersection. Because we do not know whether the intersection exists, we finally validate $q$ by computing $d(\alpha_1,q)$ and $d(\alpha_2,q)$ in $O(\log n)$ time as well as checking whether $q\in \ell$. The point $q$ is valid if and only if $d(\alpha_1,q)=d(\alpha_2,q)$ and $q\in \ell$.
\end{proof}

\subsection{The wavefront propagation procedure}
\label{sec:propagation}
In this section, we discuss the wavefront propagation procedure, which
is to compute the wavefront $W(e,g)$ for all transparent edges $g\in output(e)$ based on
$W(e)$. Consider a transparent edge $g\in output(e)$. The wavefront
$W(e,g)$ refers to the portion of $W(e)$ that arrives at $g$ through
the well-covering region $\calU(g)$ of $g$ if $e\in
input(g)$ and through $\calU(e)$ otherwise (in the latter case $g\in
input(e)$). We will need to handle the bisector events, i.e., the
intersections between bisectors and the intersections between
bisectors and obstacle edges.
The HS algorithm processes the bisector events in {\em temporal} order,
i.e., in order of the simulation time $\tau$. The HSY algorithm
instead proposes a simpler
approach that processes the events in {\em spatial order}, i.e., in
order of their geometric locations. We will adapt the HSY's
spacial-order method.

Recall that each wavefront $W(e)$ is represented by a list of generators, which are
maintained in the leaves of a fully-persistent balanced binary search
tree $T(e)$. We further assign each generator a ``next bisector event'', which
is the intersection of its two bounding bisectors (it is set to null if the two bisectors do not intersect). More
specifically, for each bisector $\alpha$, we assign it
the intersection of the two bisectors $B(\alpha_l,\alpha)$ and
$B(\alpha,\alpha_r)$, where $\alpha_l$ and $\alpha_r$ are $\alpha$'s
left and right neighboring generators in $W(e)$, respectively; we store the intersection at the leaf for $\alpha$. Our algorithm
maintains a variant that the next bisector event for each generator in
$W(e)$ has already been computed and stored in $T(e)$.
We further endow the tree $T(e)$ with additional node-fields so that each internal node stores a value
that is equal to the minimum (resp., maximum) $x$-coordinate (resp., $y$-coordinate) among all bisector
events stored at the leaves of the subtree rooted at the node. Using these extra values, we
can find from a query range of generators the generator whose bisector event has the minimum/maximum $x$- or $y$-coordinate in logarithmic time.


The propagation from $W(e)$ to $g$ through $\calU$ is done cell by cell, where $\calU$ is either $\calU(e)$ or $\calU(g)$. We start propagating $W(e)$ to the adjacent cell $c$ of $e$ in $\calU$ to compute the wavefront passing through all edges of $c$. Then by using the computed wavefronts on the edges of $c$, we recursively run the algorithm on cells of $\calU$ adjacent to $c$. As $\calU$ has $O(1)$ cells, the propagation passes through $O(1)$ cells. Hence, the essential ingredient of the algorithm is to propagate a single wavefront, say, $W(e)$, across a single cell $c$ with $e$ on its boundary.
Depending on whether $c$ is an empty rectangle, there are two cases.

\subsubsection{$c$ is an empty rectangle}

We first consider the case where $c$ is an empty rectangle, i.e., there is no island inside $c$ and $c$ does not intersect any obstacle. Without loss of generality, we assume that $e$ is an edge on the bottom side of $c$, and thus all generators of $W(e)$ are below $e$. Our goal is to compute $W(e,g)$, i.e., the generators of $W(e)$ claiming $g$, for all other edges $g$ of $c$. Our algorithm is similar to the HSY algorithm in the high level but the low level implementations are quite different. The main difference is that each bisector in the HSY algorithm is of constant size while this is not the case in our problem. Due to this, it takes constant time to compute the intersection of two bisectors in the HSY algorithm while in our problem this costs $O(\log n)$ time.

The technical crux of the algorithm is to process the intersections in $c$ among the bisectors of generators of $W(e)$. Since all generators of $W(e)$ are below $e$, their bisectors in $c$ are $y$-monotone by Corollary~\ref{coro:monotone}. This is a critical property our algorithm relies on. Due to the property, we only need to compute $W(e,g)$ for all edges $g$ on the left, right, and top sides of $c$.
Another helpful property is that since we propagate $W(e)$ through $e$ inside $c$, if a generator of $\alpha$ of $W(e)$ claims a point $q\in c$, then the tangent from $q$ to $\alpha$ must cross $e$; we refer it as the {\em propagation property}. Due to this property, the points of $c$ claimed by $\alpha$ must be to the right of the tangent ray from the left endpoint of $e$ to $\alpha$ (the direction of the ray is the from the tangent point to the left endpoint of $e$), as well as to the left of the tangent ray from the right endpoint of $e$ to $\alpha$ (the direction of the ray is the from the tangent point to the right endpoint of $e$). We call the former ray the {\em left bounding ray} of $\alpha$ and the latter the {\em right bounding ray} of $\alpha$.
As such, for the leftmost generator of $W(e)$, we consider its left bounding ray as its left bounding bisector; similarly, for the rightmost generator of $W(e)$, we consider its right bounding ray as its right bounding bisector.

Starting from $e$, we use a horizontal line segment $\ell$ to sweep $c$ upwards until its top side. At any moment during the algorithm, the algorithm maintains a subset $W(\ell)$ of generators of $W(e)$ for $\ell$ by a balanced binary search tree $T(\ell)$; initially $W(\ell)=W(e)$ and $T(\ell)=T(e)$.
Let $[x_1,x_2]\times[y_1,y_2]$ denote the coordinates of $c$. Using the extra fields on the nodes of the tree $T(\ell)$, we compute a maximal prefix $W_1(\ell)$ (resp., $W_2(\ell)$) of generators of $W(\ell)$ such that the bisector events assigned to all generators in it have $x$-coordinates less than $x_1$ (resp., larger than $x_2$). Let $W_m(\ell)$ be the remaining elements of $W(\ell)$. By definition, the first and last generators of $W_m(\ell)$ have their bisector events with $x$-coordinates in $[x_1,x_2]$. As all bisectors are $y$-monotone in $c$, the lowest bisector intersection in $c$ above $\ell$ must be the ``next bisector event'' $b$ associated with a generator in $W_m(\ell)$,
which can be found in $O(\log n)$ time using the tree $T(\ell)$. We advance $\ell$ to the $y$-coordinate of $b$ by removing the generator $\alpha$ associated with the event $b$. Finally, we recompute the next bisector events for the two neighbors of $\alpha$ in $W(\ell)$.
Specifically, let $\alpha_l$ and $\alpha_r$ be the left and right neighboring generators of $\alpha$ in $W(\ell)$, respectively. We need to compute the intersection of the two bounding bisectors of $\alpha_l$,  
and update the bisector event of $\alpha_l$ to this intersection. Similarly, we need to compute the intersection of the bounding bisectors of $\alpha_r$,
and update the bisector event of $\alpha_r$ to this intersection. Lemma~\ref{lem:bb-intersection} below shows that each of these bisector intersections can be computed in $O(\log n)$ time by a bisector-intersection operation, using the tentative prune-and-search technique of Kirkpatrick and Snoeyink~\cite{ref:KirkpatrickTe95}.
Note that if $\alpha$ is the leftmost generator, then $\alpha_r$ becomes the leftmost after $\alpha$ is deleted, in which case we compute the left bounding ray of $\alpha_r$ as its left bounding generator.
If $\alpha$ is the rightmost generator, the process is symmetric.

\begin{lemma}{\em\bf (Bisector-bisector intersection operation)}\label{lem:bb-intersection}
Each bisector-bisector intersection operation can be performed in $O(\log n)$ time.
\end{lemma}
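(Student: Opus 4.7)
The plan is to adapt the tentative prune-and-search strategy of Lemma~\ref{lem:bl-intersection}, replacing the line $\ell$ by a second bisector. The input consists of three generators $\alpha_i=(A_i,a_i)$, $i=1,2,3$, where $\alpha_2$ is shared between the two bisectors, and the goal is to find the point $q$ with $d(\alpha_1,q)=d(\alpha_2,q)=d(\alpha_3,q)$, i.e., $q\in B(\alpha_1,\alpha_2)\cap B(\alpha_2,\alpha_3)$. First I would prune each chain to the relevant portion: any such $q$ must lie in the common intersection $R(\alpha_1)\cap R(\alpha_2)\cap R(\alpha_3)$ of the reachable regions introduced in the proof of Lemma~\ref{lem:intersection}. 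In $O(\log n)$ time I compute this intersection, which restricts the admissible tangent directions on each chain and yields convex subchains $A\subseteq A_1$, $B\subseteq A_2$, $C\subseteq A_3$ that I parameterize over $[0,1]$ by tangent slope in the designated direction.

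Next, in the spirit of the maps $f$ and $g$ in the proof of Lemma~\ref{lem:bl-intersection}, I would define two continuous monotone functions tying together the three chains. For instance, let $f\colon B\to A$ send $b$ to the point $a\in A$ such that the tangent rays at $a$ and $b$ (in the designated directions) meet at a point on $B(\alpha_1,\alpha_2)$, and let $g\colon B\to C$ send $b$ to the point $c\in C$ such that the tangent rays at $b$ and $c$ meet at a point on $B(\alpha_2,\alpha_3)$. By convexity of the underlying chains and the monotonicity argument used in the proof of Lemma~\ref{lem:intersection}, both $f$ and $g$ are continuous and strictly monotone. The desired point $q$ corresponds to the unique $b^{*}\in B$ for which the intersection of the tangent ray at $b^{*}$ with $B(\alpha_1,\alpha_2)$ coincides with its intersection with $B(\alpha_2,\alpha_3)$; equivalently, both intersection points lie on the tangent ray at $b^{*}$ at the same distance from $b^{*}$. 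I would recast this as a fixed-point condition of the form handled by Theorem~3.6 of~\cite{ref:KirkpatrickTe95}, which gives the tentative prune-and-search algorithm.

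To make the overall cost $O(\log n)$, each oracle query inside the prune-and-search must run in $O(1)$ time: given candidate tangent points on the three chains, the induced ray intersection points and the weighted distances $d(\alpha_i,\cdot)$ can each be evaluated in constant time using the preprocessed sub-chain arc-length data on each elementary chain, exactly as in Lemma~\ref{lem:bl-intersection}. After the procedure returns a candidate $q$, I would validate it in $O(\log n)$ time by recomputing $d(\alpha_1,q)$, $d(\alpha_2,q)$, $d(\alpha_3,q)$ directly and checking equality, which also correctly handles the case in which the two bisectors do not actually intersect. The main obstacle, as with Lemma~\ref{lem:bl-intersection}, is specifying $f$ and $g$ precisely enough (including their behavior at chain endpoints and when one of the induced tangent rays fails to reach the relevant bisector) and verifying strict monotonicity; once that is done, the Kirkpatrick--Snoeyink framework applies essentially verbatim.
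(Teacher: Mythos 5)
Your high-level plan (tentative prune-and-search, restrict to reachable regions, $O(1)$ oracles using the precomputed arc-length data, validate at the end) is the right framework, but there is a genuine structural gap in the way you set up the functions. You define $f\colon B\to A$ via $B(\alpha_1,\alpha_2)$ and $g\colon B\to C$ via $B(\alpha_2,\alpha_3)$, but these two maps share the domain $B$ and neither $g\circ f$ nor $f\circ g$ is defined, so there is no composed function whose fixed point you can hunt for with Theorem~3.6. The condition you actually want (``the tangent ray at $b^*$ meets the two bisectors at the same point'') is a root-finding condition, not a fixed-point condition, and if you try to evaluate it directly you must, for a given $b$, locate where the tangent ray crosses $B(\alpha_1,\alpha_2)$ (a search over $A$) and where it crosses $B(\alpha_2,\alpha_3)$ (a search over $C$); that is $O(\log n)$ per probe and $O(\log^2 n)$ overall, which is exactly what the technique is supposed to avoid.

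The missing idea is the third bisector: since $q$ has $d(\alpha_1,q)=d(\alpha_2,q)=d(\alpha_3,q)$, it also lies on $B(\alpha_1,\alpha_3)$, and that bisector gives you a third map $h\colon C\to A$ closing the cycle $A\to B\to C\to A$. The paper defines $f\colon A\to B$ via $B(\alpha_1,\alpha_2)$, $g\colon B\to C$ via $B(\alpha_2,\alpha_3)$, and $h\colon C\to A$ via $B(\alpha_3,\alpha_1)$, all continuous and \emph{decreasing}; the intersection then corresponds to a fixed point of $h\cdot g\cdot f$, and this is precisely the setting of Kirkpatrick--Snoeyink's Theorem~3.9 (the three-function, genuinely \emph{tentative} prune-and-search), not Theorem~3.6 (which is the two-function, non-tentative version and is what Lemma~\ref{lem:bl-intersection} uses). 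With the cyclic three-function setup, each oracle ``is $f(a)\geq b$?'' reduces to computing the intersection of two tangent rays and comparing two weighted distances, which is $O(1)$ with the precomputed arc-length data; the rest of your proposal (initial chain restriction via reachable regions, final validation to handle non-intersecting bisectors) then goes through as in the paper.
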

\begin{proof}
We are given a horizontal line $\ell$ and three generators $\alpha_1$, $\alpha_2$, and $\alpha_3$ below $\ell$ such that they claim points on $\ell$ in this order. The goal is to compute the intersection of the two bisectors $B(\alpha_1,\alpha_2)$ and $B(\alpha_2,\alpha_3)$ above $\ell$, or determine that such an intersection does not exist. Using the tentative prune-and-search technique of Kirkpatrick and Snoeyink~\cite{ref:KirkpatrickTe95}, we present an $O(\log n)$ time algorithm.

To avoid the lengthy background explanation, we follow the notation
in~\cite{ref:KirkpatrickTe95} without definition.
We will rely on Theorem 3.9 in~\cite{ref:KirkpatrickTe95}. To this
end, we need to define three continuous and decreasing functions $f$, $g$,
and $h$. We define them in a way similar to Theorem~4.10 in~\cite{ref:KirkpatrickTe95} for finding a point equidistant to three convex polygons. Indeed, our problem may be considered as a weighted
case of their problem because each point in the underlying chains of the generators has a
weight that is equal to its weighted distance from its generator.

Let $A$, $B$, and $C$ be the underlying chains of $\alpha_1$, $\alpha_2$, and $\alpha_3$, respectively.

We parameterize over $[0,1]$ each of the three convex chains $A$,
$B$, and $C$ from one end to the other in
clockwise order, i.e.,
each value of $[0,1]$ corresponds to a slope of a tangent at a point on
the chains. For each point $a$ of $A$, we define
$f(a)$ to be the parameter of the point $b\in B$ such that the tangent ray
of $A$ at $a$ (following the designated direction) and the tangent ray of $B$ at $b$ intersect at a point on the bisector $B(\alpha_1,\alpha_2)$ (e.g., see Fig.~\ref{fig:bisectorline} left); if the tangent ray at $a$ does not intersect $B(\alpha_1,\alpha_2)$, then define $f(a)=1$.
In a similar manner, we define $g(b)$ for $b\in B$ with respect to $C$ and define $h(c)$ for $c\in C$
with respect to $A$. One can verify that all three functions are
continuous and decreasing (the tangent at an obstacle vertex of the chains is not unique but the issue can be handled~\cite{ref:KirkpatrickTe95}). The fixed-point of the
composition of the three functions $h\cdot g \cdot f$ corresponds to
the intersection of $B(\alpha_1,\alpha_2)$ and $B(\alpha_2,\alpha_3)$, which can be computed by applying the tentative prune-and-search algorithm of Theorem~3.9~\cite{ref:KirkpatrickTe95}.

To guarantee that the algorithm can be implemented in $O(\log n)$ time, since each of the chains $A$, $B$, and $C$ is represented by an array, we need to show that given
any $a\in A$ and any $b\in B$, we can determine whether $f(a)\geq b$ in $O(1)$
time. This can be done in the same way as in the proof of Lemma~\ref{lem:bl-intersection}.
Similarly, given any $b\in B$ and $c\in C$, we can determine whether $g(b)\geq c$ in $O(1)$ time, and given any $c\in C$ and $a\in A$, we can determine whether $h(c)\geq a$ in $O(1)$ time.
Therefore, applying Theorem 3.9~\cite{ref:KirkpatrickTe95} can compute the intersection of $B(\alpha_1,\alpha_2)$ and $B(\alpha_2,\alpha_3)$ in $O(\log n)$ time.

The above algorithm is based on the assumption that the intersection of the two bisectors exists. However, we do not know whether that is true or not. To determine it, we finally validate the solution as follows. Let $q$ be the point returned by the algorithm. We compute the distances $d(\alpha_i,q)$ for $i=1,2,3$. The point $q$ is a true bisector intersection if and only if the three distances are equal. Finally, we return $q$ if and only if $q$ is above $\ell$.
\end{proof}

The algorithm finishes once $\ell$ is at the top side of $c$. At this moment, no bisector events of $W(\ell)$ are in $c$. Finally, we run the following {\em wavefront splitting step} to split $W(\ell)$ to obtain $W(e,g)$ for all edges $g$ on the left, right, and top sides $c$. Our algorithm relies on the following observation.
Let $\zeta$ be the union of the left, top, and right sides of $c$.

\begin{lemma}
The list of generators of $W(\ell)$ are exactly those in $W(\ell)$ claiming $\zeta$ in order.
\end{lemma}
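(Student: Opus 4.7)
My plan is to establish a sweep invariant and then apply it at termination. The invariant reads: at every intermediate height $y\in[y_1,y_2]$, the list $W(\ell_y)$ records, in left-to-right order, exactly those generators whose claim region in $c$ meets the horizontal cross section $\ell_y\cap c$. The base case $y=y_1$ is immediate since $W(\ell)=W(e)$ is listed in the order of claims on $e$. For the inductive step, whenever the sweep processes a bisector event at $(x^*,y^*)\in c$ and deletes the associated generator $\alpha$, Corollary~\ref{coro:monotone} (the $y$-monotonicity of bisectors in $c$) guarantees that the two bounding bisectors of $\alpha$ pinch to a point at $y^*$ and enclose no region above, so $\alpha$ makes no claim on any higher cross section and its deletion is correct; the two bisectors that become newly adjacent after the deletion are themselves $y$-monotone, so the left-to-right order of the remaining generators on cross sections just above $y^*$ is preserved.

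At termination $\ell$ lies on the top side of $c$, so the invariant directly yields the part of the lemma for the top: the generators of $W(\ell)$ are in left-to-right order of their claims on that side. To extend this to $\zeta=\text{left}\cup\text{top}\cup\text{right}$, I would argue that only the first generator of the final $W(\ell)$ can own a point on the left side of $c$: every other generator $\alpha_i$ has its left boundary given by a bisector $B(\alpha_{i-1},\alpha_i)$, and the propagation property together with the $y$-monotonicity of this bisector forces $\alpha_i$'s claim interval on every cross section to lie strictly to the right of $\alpha_{i-1}$'s, hence never reaches $x=x_1$; a symmetric argument covers the right side. The first and last generators' claims on the left and right sides extend continuously from their top-corner claims, so traversing $\zeta$ from the left endpoint of $e$ up the left side, across the top, and down the right side to the right endpoint of $e$ visits the generators in exactly the order they appear in $W(\ell)$.

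The delicate point will be the prefix $W_1(\ell)$ and suffix $W_2(\ell)$ of generators whose next bisector events have $x$-coordinate outside $[x_1,x_2]$: these are never eliminated by the sweep even though one of their bounding bisectors may already have exited $c$ through its left or right side, so the effective claim on $c$ could degenerate. I would reconcile them with the lemma by a case analysis on where each bounding bisector exits $\partial c$, using $y$-monotonicity and the propagation property to show that in every case such a generator's $\zeta$-claim is either non-empty and sits at the correct extremal position of the traversal or is empty and is discarded by the splitting step that immediately follows the lemma; this upgrades the order statement into the set equality asserted by the lemma.
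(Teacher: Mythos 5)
Your proposal has a genuine gap. The step that fails is the assertion that ``only the first generator of the final $W(\ell)$ can own a point on the left side of $c$.'' This is false. Consider a configuration where several consecutive bisectors $B(\alpha_{i-1},\alpha_i)$, $B(\alpha_i,\alpha_{i+1}),\ldots$ all exit through $\zeta_l$ at increasing heights without ever meeting inside $c$. None of the corresponding bisector events lies in $[x_1,x_2]$, so none of $\alpha_{i-1},\alpha_i,\alpha_{i+1}$ is removed by the sweep, and all of them survive into the final $W(\ell)$. Each intermediate generator's claim region is a slab pinched between two left-exiting $y$-monotone bisectors; each such slab reaches $x=x_1$ on some range of $y$, namely between the two exit heights. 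Your argument that $\alpha_i$'s cross-section interval ``lies strictly to the right of $\alpha_{i-1}$'s, hence never reaches $x=x_1$'' is incorrect precisely because $\alpha_{i-1}$'s interval can become empty (once $B(\alpha_{i-2},\alpha_{i-1})$ and $B(\alpha_{i-1},\alpha_i)$ have both exited left), after which $\alpha_i$'s interval is free to abut $x=x_1$. For the same reason, your sweep invariant (``exactly those generators whose claim region meets the cross section'') fails for exactly the prefix and suffix $W_1(\ell),W_2(\ell)$ that you flag at the end: those generators remain in $W(\ell)$ after their claim regions have wholly slipped out the left or right, so the ``exactly'' is false. You acknowledge this as a ``delicate point'' to be handled by unspecified case analysis, but it is not a corner case — it is the entire content of the lemma for the vertical sides of $\zeta$, so deferring it leaves the proof incomplete.

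The paper instead proves one clean fact that makes everything else immediate: once a bisector $B(\alpha_1,\alpha_2)$ of two generators of $W(e)$ crosses $\zeta$, it never crosses $\partial c$ again. For a crossing on $\zeta_t$ this follows directly from Lemma~\ref{lem:intersection} because both generators are below $e$, hence below $\zeta_t$. For a crossing on $\zeta_l$ (symmetrically $\zeta_r$), the propagation property — the tangents from the crossing point to both generators must pass through $e$ — forces both generators to lie to the right of the supporting line of $\zeta_l$, so Lemma~\ref{lem:intersection} applies with that vertical line. Since the bisectors partition $c$ into disjoint regions in the left-to-right order of $W(\ell)$ along $e$, and none re-enters after leaving through $\zeta$, the exit points along $\zeta$ appear in that same order, which is the claim. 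This is both shorter and avoids having to say anything about which generators do or don't reach a vertical side: a generator squeezed out to the left simply contributes a short interval on $\zeta_l$, and the ordering is automatic.
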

\begin{proof}
It suffices to show that during the sweeping algorithm whenever a bisector $B(\alpha_1,\alpha_2)$ of two generators $\alpha_1$ and $\alpha_2$ intersects $\zeta$, it will never intersect $\partial c$ again. Let $q$ be such an intersection. Let $\zeta_l$, $\zeta_t$, and $\zeta_r$ be the left, top, and right sides of $c$, respectively.

If $q$ is on $\zeta_t$, then since both $\alpha_1$ and $\alpha_2$ are below $e$, they are also below $\zeta_t$. By Lemma~\ref{lem:intersection}, $B(\alpha_1,\alpha_2)$ will not intersect the supporting line of $\zeta_t$ again and thus will not intersect $\partial c$ again.

If $q$ is on $\zeta_l$, then we claim that both generators $\alpha_1$ and $\alpha_2$ are to the right of the supporting line $\ell_l$ of $\zeta_l$. Indeed, since both generators claim $q$, the bounding rays (i.e., the left bounding ray of the leftmost generator of $W(\ell)$ and the right bounding ray of the rightmost generator of $W(\ell)$ during the sweeping algorithm) guarantee the propagation property: the tangents from $q$ to the two generators must cross $e$. Therefore, both generators must be to the right of $\ell_l$. By Lemma~\ref{lem:intersection}, $B(\alpha_1,\alpha_2)$ will not intersect the supporting line of $\zeta_l$ again and thus will not intersect $\partial c$ again.

If $q$ is on $\zeta_r$, the analysis is similar to the above second case.
\end{proof}

In light of the above lemma, our wavefront splitting step algorithm for computing $W(e,g)$ of all edges $g\in \zeta$ works as follows. Consider an edge $g\in \zeta$. Without loss of generality, we assume that the points of $\zeta$ are clockwise around $c$ so that we can talk about their relative order.

Let $p_l$ and $p_r$ be the front and rear endpoints of
$g$, respectively. Let $\alpha_l$ and $\alpha_r$ be the generators of
$W(\ell)$ claiming $p_l$ and $p_r$, respectively. Then all generators
of $W(\ell)$ to the left of $\alpha_l$ including $\alpha_l$ form
the wavefront for all edges of $\zeta$ in the front of $g$; all generators of $W(\ell)$ to
the right of $\alpha_r$ including $\alpha_r$ form the wavefront for all edges of $\zeta$ after $g$; all generators of $W(\ell)$ between $\alpha_l$ and $\alpha_r$ including $\alpha_l$ and $\alpha_r$ form
$W(e,g)$. Hence, once $\alpha_l$ and $\alpha_r$ are known, $W(e,g)$ can be obtained
by splitting $W(\ell)$ in $O(\log n)$ time. It remains to compute $\alpha_l$ and $\alpha_r$.
Below, we only discuss how to compute the generator $\alpha_l$ since $\alpha_r$ can be computed analogously.

Starting from the root $v$ of $T(\ell)$, we determine the intersection $q$ between $B(\alpha_1,\alpha_2)$ and $\zeta$, where $\alpha_1$ is the rightmost generator in the left subtree of $v$ and $\alpha_2$ is the leftmost generator of the right subtree of $v$.
If $q$ is in the front of $g$ on $\zeta$, then we proceed to the right subtree of $v$; otherwise, we proceed to the left subtree of $v$.

It is easy to see that the runtime of the algorithm is bounded by $O(\eta\cdot \log n)$ time, where $\eta$ is the time for computing $q$. In the HSY algorithm, each bisector is of constant size and an oracle is assumed to exist that can compute $q$ in $O(1)$ time. In our problem, since a bisector may not be of constant size, it is not clear how to bound $\eta$ by $O(1)$. But $\eta$ can be bounded by $O(\log n)$ using the bisector-line intersection operation in Lemma~\ref{lem:bl-intersection}. Thus, $\alpha_l$ can be computed in $O(\log^2 n)$ time. However, this is not sufficient for our purpose, as this would lead to an overall $O(n+h\log^2 h)$ time algorithm. We instead use the following {\em binary search plus bisector tracing} approach.

During the wavefront expansion algorithm, for each pair of neighboring generators $\alpha=(A,a)$ and $\alpha'=(A',a')$ in a wavefront (e.g., $W(e)$), we maintain a special point $z(\alpha,\alpha')$ on the bisector $B(\alpha,\alpha')$. For example, in the above sweeping algorithm, whenever a generator $\alpha$ is deleted from $W(\ell)$ at a bisector event $b=B(\alpha_l,\alpha)\cap B(\alpha,\alpha_r)$, its two neighbors $\alpha_l$ and $\alpha_r$ now become neighboring in $W(\ell)$. Then, we initialize $z(\alpha_l,\alpha_r)$ to $b$ (the tangent points from $b$ to $\alpha_l$ and $\alpha_r$ are also associated with $b$). During the algorithm, the point $z(\alpha_l,\alpha_r)$ will move on $B(\alpha,\alpha')$ further away from the two defining generators $\alpha$ and $\alpha'$ and the movement will trace out the hyperbolic-arcs of the bisector. We call $z(\alpha,\alpha')$ the {\em tracing-point} of $B(\alpha,\alpha')$.
Our algorithm maintains a variant that the tracing point of each bisector of $W(\ell)$ is below the sweeping line $\ell$ (initially, the tracing point of each bisector of $W(e)$ is below $e$).

With the help of the above $z$-points, we compute the generator $\alpha_l$ as follows. Like the above algorithm, starting from the root $v$ of $T(\ell)$, let $\alpha_1$ and $\alpha_2$ be the two generators as defined above.
To compute the intersection $q$ between $B(\alpha_1,\alpha_2)$ and $\zeta$, we trace out the bisector $B(\alpha_1,\alpha_2)$ by moving its tracing-point $z(\alpha_1,\alpha_2)$ upwards (each time trace out a hyperbolic-arc of $B(\alpha_1,\alpha_2)$) until the current tracing hyperbolic-arc intersects $\zeta$ at $q$.
If $q$ is in the front of $e$ on $\zeta$, then we proceed to the right subtree of $v$; otherwise, we proceed to the left subtree of $v$.

After $W(e,g)$ is obtained, we compute $W(e,g')$ for other edges $g'$ on $\zeta$ using the same algorithm as above. For the time analysis, observe that each bisector hyperbolic-arc will be traced out at most once in the wavefront splitting step for all edges of $\zeta$ because the tracing point of each bisector will never move ``backwards''.

This finishes the algorithm for propagating $W(e)$ through the cell $c$. Except the final wavefront splitting step, the algorithm runs in $O((1+h_c)\log n)$ time, where $h_c$ is the number of bisector events in $c$. Because $c$ has $O(1)$ edges, the wavefront splitting step takes $O(\log n + n_c)$ time, where $n_c$ is the number of hyperbolic-arcs of bisectors that are traced out.

\subsubsection{$c$ is not an empty rectangle}

We now discuss the case in which the cell $c$ is not an empty rectangle. In this case, $c$ has a square hole inside or/and the boundary of $c$ contains obstacle edges.
Without loss of generality, we assume that $e$ is on the bottom side of $c$.

If $c$ contains a square hole, then we partition $c$ into four subcells by cutting $c$ with two lines parallel to $e$, each passing through an edge of the hole. If $c$ has obstacle edges on its boundary, recall that these obstacles edges belong to $O(1)$ convex chains (each of which is a fragment of an elementary chain); we further partition $c$ by additional edges parallel to $e$, so that each resulting subcell contains at most two convex chains, one the left side and the other on the right side. Since $\partial c$ has $O(1)$ convex chains, $O(1)$ additional edges are sufficient to partition $c$ into $O(1)$ subcells as above. Then, we propagate $e$ through the subcells of $c$, one by one. In the following, we describe the algorithm for one such subcell. By slightly abusing the notation, we still use $c$ to denote the subcell with $e$ on its bottom side.

Since $\partial c$ has obstacle edges, the propagation algorithm becomes more complicated. As in the HSY algorithm, comparing with the algorithm for the previous case, there are two new bisector events.

\begin{itemize}
\item
First, a bisector may intersect a convex chain (and thus intersect an obstacle). The HSY algorithm does not explicitly compute these bisector events because such an oracle is not assumed to exist. In our algorithm, however, because the obstacles in our problem are polygonal, we can explicitly determine these events without any special assumption. This is also a reason that the high-level idea of our algorithm is simpler than the HSY algorithm.


\begin{figure}[t]
\begin{minipage}[t]{\textwidth}
\begin{center}
\includegraphics[height=1.7in]{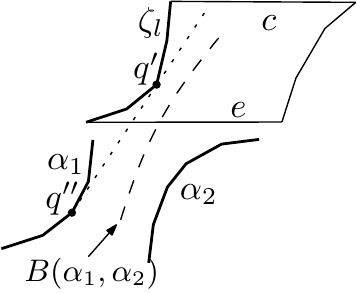}
\caption{\footnotesize Illustrating the creation of a new generator at $q'$.}
\label{fig:creategenerator10}
\end{center}
\end{minipage}
\vspace{-0.15in}
\end{figure}

\item
Second, new generators may be created at the convex chains. We still sweep a horizontal line $\ell$ from $e$ upwards. Let $W(\ell)$ be the current wavefront at some moment during the algorithm. Consider two neighboring generators $\alpha_1$ and $\alpha_2$ in $W(\ell)$ with $\alpha_1$ on the left of $\alpha_2$. We use $\zeta_l$ to denote the convex chain on the left side of $c$. Let $q'$ be the tangent point on $\zeta_l$ of the common tangent between $\zeta_l$ and $\alpha_1$ and let $q''$ be the tangent point on $\alpha_1$ (e.g., see Fig.~\ref{fig:creategenerator10}). If $d(\alpha_1,q')<d(\alpha_2,q')$, then a new generator $\alpha$ on $\zeta_l$ with initial vertex $q'$ and weight equal to $d(\alpha_1,q')$ is created (designated counterclockwise direction) and inserted into $W(\ell)$ right before $\alpha_1$. The bisector $B(\alpha,\alpha_1)$ is the ray emanating from $q'$ and extending away from $q''$. The region to the left of the ray has $\alpha$ as its predecessor. When the sweeping line $\ell$ is at $q'$, all wavelets in $W(\ell)$ to the left of $\alpha_1$ have already collided with $\zeta_l$ and thus the first three generators of $W(\ell)$ are $\alpha$, $\alpha_1$, and $\alpha_2$.
\end{itemize}

In what follows, we describe our sweeping algorithm to propagate $W(e)$ through $c$.
We begin with an easier case where only the left side of $c$ is a convex chain, denoted by $\zeta_l$ (and the right side is a vertical transparent edge, denoted by $\zeta_r$). We use $\zeta_t$ to denote the top side of $c$, which is a transparent edge.
As in the previous case, we sweep a line $\ell$ from $e$ upwards until the top side $\zeta_t$.
During the algorithm, we maintain a list $W(\ell)$ of generators by a balanced binary search tree $T(\ell)$. Initially, $W(\ell)=W(e)$ and $T(\ell)=T(e)$.

We compute the intersection $q$ of the convex chain $\zeta_l$ and the bisector $B(\alpha_1,\alpha_2)$, for the leftmost bisectors of $\alpha_1$ and $\alpha_2$ of $W(\ell)$. We call it the {\em bisector-chain intersection operation}. The following lemma shows that this operation can be performed in $O(\log n)$ time.

\begin{lemma}{\em\bf (Bisector-chain intersection operation)}\label{lem:bc-intersection}
Each bisector-chain intersection operation can be performed in $O(\log n)$ time.
\end{lemma}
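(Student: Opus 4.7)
The plan is to adapt the Kirkpatrick--Snoeyink tentative prune-and-search used in the proof of Lemma~\ref{lem:bb-intersection}, now taking $\zeta_l$ to play the role of the third convex chain. Let $A_1$ and $A_2$ be the underlying chains of $\alpha_1$ and $\alpha_2$; parameterize each of $A_1$, $A_2$, and $\zeta_l$ over $[0,1]$ in clockwise order, so that each parameter value corresponds to a tangent slope on its chain. All three chains are convex, $xy$-monotone, and (by the preprocessing of Section~\ref{sec:notation}) support $O(1)$-time queries for the length of any subchain; for $A_1$ and $A_2$ this lets weighted distances $d(\alpha_i,\cdot)$ be read off in $O(1)$ time.

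Define three functions, mirroring the definition in Lemma~\ref{lem:bb-intersection}. For $a\in A_1$, let $f(a)\in A_2$ be the parameter $b$ such that the tangent rays of $A_1$ at $a$ and of $A_2$ at $b$ (both along the designated directions) meet at a point of the bisector $B(\alpha_1,\alpha_2)$; set $f(a)=1$ if no such $b$ exists. For $b\in A_2$, let $g(b)\in\zeta_l$ be the parameter of the point $c$ where the tangent ray of $A_2$ at $b$ crosses $\zeta_l$; set $g(b)=1$ if the ray misses $\zeta_l$. For $c\in\zeta_l$, let $h(c)\in A_1$ be the parameter of the tangent point $a$ on $A_1$ from $c$. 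A fixed point $a^*$ of $h\circ g\circ f$ yields points $b^*=f(a^*)$ and $c^*=g(b^*)$ with $h(c^*)=a^*$: the tangent rays at $a^*$ and $b^*$ meet at a common point $q$ which lies on $B(\alpha_1,\alpha_2)$ (from $f$), and $g$ and $h$ together force $q=c^*\in\zeta_l$. Thus the fixed point encodes exactly the desired intersection.

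Continuity and monotonicity of $f$, $g$, $h$ follow the same style of argument as in Lemmas~\ref{lem:bl-intersection} and~\ref{lem:bb-intersection}: as $a$ rotates clockwise along $A_1$, the matching tangent parameters on $A_2$ and on $\zeta_l$ rotate monotonically, because all three chains are convex and $xy$-monotone and the two generators lie on the same (lower) side of the supporting line of $e$, so Lemma~\ref{lem:intersection} applies. Theorem~3.9 of~\cite{ref:KirkpatrickTe95} then locates the fixed point in $O(\log n)$ iterations, provided each primitive comparison $f(a)\geq b$, $g(b)\geq c$, $h(c)\geq a$ can be answered in $O(1)$ time. The first is handled exactly as in the proof of Lemma~\ref{lem:bl-intersection}, using the subchain-length arrays to compare $d(\alpha_1,p)$ and $d(\alpha_2,p)$ at the intersection $p$ of the two tangent rays. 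The second and third reduce to intersecting two explicit tangent rays and performing an $O(1)$ side test against the tangent line of $\zeta_l$ at the queried parameter $c$; crucially, the tangent at $c$ is determined directly by its parameter, so no binary search along $\zeta_l$ is needed.

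The main obstacle will be the careful treatment of the degenerate cases in which a tangent ray fails to meet the other chain or the bisector, and verifying that the $O(1)$ side tests for $g$ and $h$ really work out given that $\zeta_l$ is an obstacle-boundary chain rather than an axis-parallel line. As in Lemma~\ref{lem:bl-intersection}, these are absorbed by the sentinel value $1$ and by validating the candidate $q$ returned by the search: in $O(\log n)$ time we recompute $d(\alpha_1,q)$ and $d(\alpha_2,q)$ and verify $q\in\zeta_l$, returning $q$ only if both checks pass and reporting ``no intersection'' otherwise.
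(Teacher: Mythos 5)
Your high-level plan matches the paper's: set up three functions $f$, $g$, $h$ between chains, argue they are continuous and decreasing, and apply Theorem~3.9 of Kirkpatrick--Snoeyink with $O(1)$-time comparison primitives, followed by a final validation of the returned candidate. The definitions of $f$, $g$, $h$ you give are essentially identical to the paper's. However, there is one substantive step you omit, and it is not a degeneracy you can absorb with a sentinel value.

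Before defining the functions, the paper first restricts the domains: it computes a subchain $C\subseteq\zeta_l$ consisting of exactly those points $p$ whose tangent segment to $A_1$ does not re-enter $\zeta_l$ other than at $p$, and a subchain $B\subseteq A_2$ consisting of those points whose tangent ray is guaranteed to hit $C$. Both restrictions are found in $O(\log n)$ time by common-tangent computations between $\zeta_l$ and $A_1$, and between $C$ and $A_2$. This is what makes $g$ and $h$ single-valued and monotone: without it, a tangent ray from a point $b\in A_2$ may cross the convex chain $\zeta_l$ at two points (so ``the parameter where it crosses $\zeta_l$'' is ambiguous), and the tangent from a point $c\in\zeta_l$ to $A_1$ may itself pass through $\zeta_l$ again, so the fixed point found by the prune-and-search need not correspond to a geometrically realizable intersection. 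Your sentinel value $1$ only handles the case where a ray \emph{misses} the other chain entirely; it does not cure multi-valuedness or the loss of monotonicity when the ray hits twice. Your final validation step is also insufficient on its own, because tentative prune-and-search is only guaranteed to return the true fixed point when the three functions are continuous and decreasing over their entire domains --- if they are not, the search can return a spurious point, the validation will reject it, and you would incorrectly report ``no intersection'' even when one exists. Adding the common-tangent preprocessing step (which you already have the $O(\log n)$-time machinery for) closes this gap and makes your argument match the paper's.
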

\begin{proof}
We are given a convex chain $\zeta_l$ above a horizontal line $\ell$ and two generators $\alpha_1=(A_1,a_1)$ and $\alpha_2=(A_2,a_2)$ below $\ell$ such that they claim points on $\ell$ in this order. The goal is to compute the intersection of $B(\alpha_1,\alpha_2)$ and $\zeta_l$, or determine that they do not intersect. In the following, using the tentative prune-and-search technique of Kirkpatrick and Snoeyink~\cite{ref:KirkpatrickTe95}, we present an $O(\log n)$ time algorithm.

To avoid the lengthy background explanation, we follow the notation
in~\cite{ref:KirkpatrickTe95} without definition.
We will rely on Theorem 3.9 in~\cite{ref:KirkpatrickTe95}. To this
end, we need to define three continuous and decreasing functions $f$, $g$,
and $h$.

Suppose $q$ is the intersection of $B(\alpha_1,\alpha_2)$ and $\zeta_l$. Let $p_1$ and $p_2$ be the tangent points from $q$ to $A_1$ and $A_2$, respectively. Then, $\overline{qp_1}$ (resp., $\overline{qp_2}$) does not intersect $\zeta_l$ other than $q$. We determine the portion $C$ of $\zeta_l$ such that for each point $p\in C$, its tangent to $A_1$ does not intersect $\zeta_l$ other than $p$. Hence, $q\in C$. $C$ can be determined by computing the common tangents between $\zeta_l$ and $A_1$, which can be done in $O(\log n)$ time~\cite{ref:GuibasCo91,ref:OvermarsMa81}. Also, we determine the portion $B$ of $A_2$ such that the tangent ray at any point of $B$ must intersect $C$. This can be done by computing the common tangents between $C$ and $A_2$ in $O(\log n)$ time~\cite{ref:GuibasCo91,ref:OvermarsMa81}. Let $A=A_1$.

We parameterize over $[0,1]$ each of the three convex chains $A$, $B$, and $C$ from one end to the other in
clockwise order, i.e.,
each value of $[0,1]$ corresponds to a slope of a tangent at a point on
the chains $A$ and $B$, while each value of $[0,1]$ corresponds to a point of $C$.
For each point $a$ of $A$, we define
$f(a)$ to be the parameter of the point $b\in B$ such that the tangent ray
of $A$ at $a$ (following the designated direction of $\alpha_1$) and the tangent ray of $B$ at $b$ intersect at a point on the bisector $B(\alpha_1,\alpha_2)$ (e.g., see Fig.~\ref{fig:bisectorline} left); if the tangent ray at $a$ does not intersect $B(\alpha_1,\alpha_2)$, then define $f(a)=1$.
For each point $b$ of $B$, define $g(b)$ to be the parameter of the point $c\in C$ such that $\overline{cb}$ is tangent to $B$ at $b$ (e.g., see Fig.~\ref{fig:bcintersection} left); note that by the definition of $B$, the tangent ray from any point of $B$ must intersect $C$ and thus such a point $c\in C$ must exist.
For each point $c\in C$, define $h(c)$ to be the parameter of the point of $a\in A$ such that $\overline{ac}$ is tangent to $A$ at $a$ (e.g., see Fig.~\ref{fig:bcintersection} right); note that by the definition of $C$, such a point $a\in A$ must exist and $\overline{ac}$ does not intersect $C$ other than $c$. One can verify that all three functions are
continuous and decreasing (the tangent at an obstacle vertex of the chains is not unique but the issue can be handled~\cite{ref:KirkpatrickTe95}). The fixed-point of the
composition of the three functions $h\cdot g \cdot f$ corresponds to
the intersection $q$ of $B(\alpha_1,\alpha_2)$ and $\zeta_l$, which can be computed by applying the tentative prune-and-search algorithm of Theorem 3.9~\cite{ref:KirkpatrickTe95}.

\begin{figure}[t]
\begin{minipage}[t]{\textwidth}
\begin{center}
\includegraphics[height=1.5in]{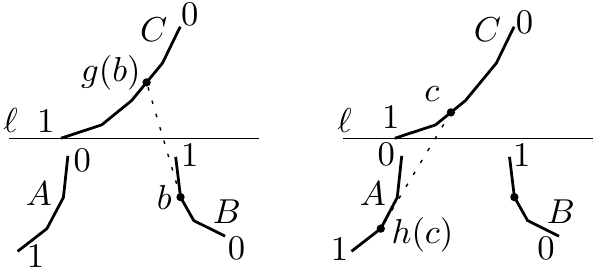}
\caption{\footnotesize Illustrating the definitions of $g(b)$ and $h(c)$.}
\label{fig:bcintersection}
\end{center}
\end{minipage}
\vspace{-0.15in}
\end{figure}

To make sure that the algorithm can be implemented in $O(\log n)$ time, since each convex chain is part of an elementary chain and thus is represented by an array, it suffices to show the following: (1) given
any $a\in A$ and any $b\in B$, whether $f(a)\geq b$ can be determined in $O(1)$ time; (2) given any $b\in B$ and any $c\in C$, whether $g(b)\geq c$ can be determined in $O(1)$ time; (3) given any $c\in C$ and any $a\in A$, whether $h(c)\geq a$ can be determined in $O(1)$ time. We prove them below.

Indeed, for (1), it can be done in the same way as in the proof of
Lemma~\ref{lem:bl-intersection}.
For (2), $g(b)\geq c$ if and only if $c$ is to the right of the tangent ray of $B$ at $b$, which can be easily determined in $O(1)$ time.
For (3), $h(c)\geq a$ if and only $c$ is to the right of the tangent ray of $A$ at $a$, which can be easily determined in $O(1)$ time.

Therefore, applying the tentative prune-and-search technique in Theorem 3.9~\cite{ref:KirkpatrickTe95} can compute $q$ in $O(\log n)$ time.

Note that the above algorithm is based on the assumption that the intersection of $B(\alpha_1,\alpha_2)$ and $\zeta_l$ exists. However, we do not know whether this is true or not. To determine that, we finally validate the solution as follows. Let $q$ be the point returned by the algorithm. We first determine whether $q\in \zeta_l$. If not, then the intersection does not exist. Otherwise, we further compute the two distances $d(\alpha_i,q)$ for $i=1,2$ in $O(\log n)$ time. If the distances are equal, then $q$ is the true intersection; otherwise, the intersection does not exist.
\end{proof}

If the intersection $q$ of $\zeta_l$ and $B(\alpha_1,\alpha_2)$ does not exist, then we compute the tangent between $\zeta_l$ and $\alpha_1$, which can be done in $O(\log n)$ time~\cite{ref:OvermarsMa81}; let $q'$ be the tangent point at $\zeta_l$.
Regardless whether $q$ exists or not, we compute the lowest bisector intersection $b$ in $c$ above $\ell$ in the same way as in the algorithm for the previous case where $c$ is an empty rectangle. Depending on whether $q$ exists or not, we proceed as follows. For any point $p$ in the plane, let $y(p)$ denote the $y$-coordinate of $p$.

\begin{enumerate}
\item
If $q$ exists, then depending on whether $y(q)\leq y(b)$, there are two subcases.
If $y(q)\leq y(b)$, then we process the bisector event $q$: remove $\alpha_1$ from $W(\ell)$ and then recompute $q$, $q'$, and $b$. Otherwise, we process the bisector event at $b$ in the same way as in the previous case and then recompute $q$, $q'$, and $b$.

\item
If $q$ does not exist, then depending on whether $y(b)\leq y(q')$, there are two subcases. If $y(b)\leq y(q')$, then we process the bisector event at $b$ in the same way as before and then recompute $q$, $q'$, and $b$.
Otherwise, we insert a new generator $\alpha=(A,q')$ to $W(\ell)$ as the leftmost generator, where $A$ is the fragment of the elementary chain containing $\zeta_l$ from $q'$ counterclockwise to the end of the chain, and $\alpha$ is designated the counterclockwise direction of $A$ and the weight of $q'$ is $d(\alpha_1,q')$; e.g., see Fig.~\ref{fig:creategenerator10}. The ray from $q'$ in the direction from $q''$ to $q'$ is the bisector of $\alpha$ and $\alpha_1$, where $q''$ is the tangent point on $\alpha_1$ of the common tangent between $\alpha_1$ and $\zeta_l$. We initialize the tracing-point $z(\alpha,\alpha_1)$ of $B(\alpha,\alpha_1)$ to $q'$. Finally, we recompute $q$, $q'$, and $b$.
\end{enumerate}

Once the sweep line $\ell$ reaches the top side $\zeta_t$ of $c$, the algorithm stops. Finally, as in the previous case, we run a wavefront
splitting step. Because the left side $\zeta_l$ consists of obstacle
edges, we split $W(\ell)$ to compute $W(e,g)$ for all transparent edges $g$ on the top side $\zeta_t$ and the right side $\zeta_r$ of $c$.
The algorithm is the same as the previous case.

The above discusses the case that only the left side $\zeta_l$ of $c$ is a
convex chain. For the general case where both the left and right
sides of $c$ are convex chains, the algorithm is similar. The
difference is that we have to compute a point $p$ corresponding to $q$
and a point $p'$ corresponding to $q'$ on the right side $\zeta_r$ of
$c$. More specifically, $p$ is the intersection of
$B(\alpha_2',\alpha_1')$ with $\zeta_r$, where $\alpha_2'$ and $\alpha_1'$
are the two rightmost generators of $W$. If $p$ does not exist, then we
compute the common tangent between $\zeta_r$ and $\alpha_1'$, and $p'$ is the tangent
point on $\zeta_r$.

In the following, if $q$ does not exist, we let $y(q)$ be
$\infty$; otherwise, $q'$ is not needed and we let $y(q')$ be
$\infty$. We apply the same convention to $p$ and $p'$. We define $b$
as the bisector event in the same way as before. In each step, we
process the lowest point $r$ of $\{q,q',b,p,p'\}$. If $r$
is $q$ or $p$, we process it in the same way as before for $q$.
If $r$ is $q'$ or $p'$, we process it in the same way as before for $q'$.
If $r$ is $b$, we process it in the same way as before.
After processing $r$, we recompute the five points. Each step
takes $O(\log n)$ time.
After the sweep line $\ell$ reaches the top side $\zeta_t$ of $c$,
$W(\ell)$ is $W(e,\zeta_t)$ for the top side $\zeta_t$ of $c$ because both the
left and right sides of $c$ are obstacle edges.
Finally, we run the wavefront splitting step on $W(\ell)$ to compute the wavefronts $W(e,g)$ for all transparent edges $g$ on $\zeta_t$.

In summary, propagating $W(e)$ through $c$ takes $O((1+h_c)\cdot
\log n+n_c)$ time, where $h_c$ is the number of bisector events (including both the bisector-bisector intersection events and the bisector-obstacle intersection events) and $n_c$ is the number of hyperbolic-arcs of bisectors that are traced out in the wavefront splitting step.

We use the following lemma to summarize the algorithm for both cases (i.e., regardless whether $c$ is an empty rectangle or not).

\begin{lemma}\label{lem:propagation}
Suppose $W(e)$ is a wavefront on a transparent edge of a cell $c$ of the subdivision $\calS'$. Then, $W(e)$ can be propagated through $c$ to all other transparent edges of $c$ in $O((1+h_c)\log n+n_c)$ time, where $h_c$ is the number of bisector events (including both the bisector-bisector intersection events and bisector-obstacle intersection events) and $n_c$ is the number of hyperbolic-arcs of bisectors that are traced out in the wavefront splitting step.
\end{lemma}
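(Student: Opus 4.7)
The plan is to prove the lemma by assembling the two sweeping subroutines described above into one statement, doing a clean case analysis on whether the cell $c$ is an empty rectangle or not, and then charging the total running time to the events and hyperbolic-arcs named in the bound.

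First I would handle the empty-rectangle case. After observing that all generators of $W(e)$ lie on one side of $e$ (say below $e$) and that, by Corollary~\ref{coro:monotone}, all their bisectors are $y$-monotone inside $c$, I would describe the upward sweep by a horizontal line $\ell$ that maintains the current wavefront $W(\ell)$ in a fully-persistent balanced binary search tree $T(\ell)$ augmented with, at each internal node, the extreme $x$- and $y$-coordinates of the bisector events stored in its subtree. Using these augmentations one can, in $O(\log n)$ time per step, locate the generator whose ``next bisector event'' has the lowest $y$-coordinate inside the slab $[x_1,x_2]$; processing that event amounts to deleting the generator from $T(\ell)$ and recomputing the ``next bisector events'' for its two new neighbors. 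The key subroutine here is the bisector-bisector intersection operation of Lemma~\ref{lem:bb-intersection}, which costs $O(\log n)$ per call. Thus each of the $h_c$ events is charged $O(\log n)$ time.

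Next I would treat the non-empty case. Here I first cut $c$ with $O(1)$ horizontal chords so that each resulting subcell is either empty or has at most two convex chains, one on each vertical side; this follows because $\partial c$ has $O(1)$ convex chains and an interior square hole contributes only two additional cuts. For each subcell the sweep proceeds as before, but the candidate events are now the five points $\{b,q,q',p,p'\}$: the next bisector intersection $b$ (handled via Lemma~\ref{lem:bb-intersection}), the intersection $q$ (resp.\ $p$) of the leftmost (resp.\ rightmost) internal bisector with the left (resp.\ right) convex chain (handled via the bisector-chain intersection operation of Lemma~\ref{lem:bc-intersection}), and the tangency points $q',p'$ at which a new generator is created from an adjacent generator. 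Each step picks the lowest of these five events, performs $O(1)$ updates on $W(\ell)$, and recomputes the five candidates in $O(\log n)$ time. Since every iteration removes or inserts a generator, or records a bisector-obstacle crossing, the total number of iterations is $O(1+h_c)$, giving $O((1+h_c)\log n)$ total time for the sweep portion.

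Finally I would describe the wavefront splitting step used to extract $W(e,g)$ for each transparent edge $g$ on $\partial c$. For each such $g$ I do a descent in $T(\ell)$, at each node asking whether the bisector between the rightmost generator of its left subtree and the leftmost generator of its right subtree crosses $\partial c$ in front of or behind $g$. To answer such a query in $O(\log n)$ amortized cost I use the binary-search-plus-bisector-tracing idea: the tracing-point $z(\alpha,\alpha')$ is advanced along $B(\alpha,\alpha')$, one hyperbolic-arc at a time, never backwards, so across all the $O(1)$ transparent edges $g$ of $c$ the total tracing cost is $O(n_c)$, plus $O(\log n)$ per descent. Summing the sweep time, the splitting-step overhead, and the tracing cost yields the claimed $O((1+h_c)\log n + n_c)$ bound, and the two cases together prove the lemma. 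The main obstacle, which is resolved by the binary-search-plus-bisector-tracing scheme, is that bisectors are not of constant size in our setting, so a naive split using Lemma~\ref{lem:bl-intersection} at every step would blow up to $O(\log^2 n)$ per transparent edge and ultimately give $O(n+h\log^2 h)$ overall, which is why the tracing-point invariant is essential.
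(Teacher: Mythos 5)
Your proposal is correct and follows essentially the same approach as the paper: the paper states Lemma~\ref{lem:propagation} as a summary of the two sweep subroutines it develops in the preceding discussion (the empty-rectangle case swept by a horizontal line with bisector-bisector intersections via Lemma~\ref{lem:bb-intersection}, the non-empty case partitioned into $O(1)$ subcells with the five candidate events $\{q,q',b,p,p'\}$ handled via Lemmas~\ref{lem:bb-intersection} and~\ref{lem:bc-intersection}, and the final splitting step using binary search plus forward-only bisector tracing so that the $O(\log^2 n)$ per split is avoided). One small imprecision to note: your claim that ``every iteration removes or inserts a generator, or records a bisector-obstacle crossing, hence $O(1+h_c)$ iterations'' slightly elides the accounting for insertions, since generator creations at $q'$ or $p'$ are not themselves bisector events as defined in the lemma; the paper handles this separately (cf.\ the charging argument in the proof of Lemma~\ref{lem:rule1}, where each creation beyond the first per subcell is charged to a bisector event), so the bound still holds, but the justification deserves a sentence.
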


\subsection{Time analysis}
\label{sec:time}

In this section, we show that the running time of our wavefront expansion algorithm described above is bounded by $O(n+h\log h)$.
For this and also for the purpose of constructing the shortest path map $\spm'(s)$ later in Section~\ref{sec:spm}, as in the HS algorithm, we mark generators in the way that if a generator $\alpha$ is involved in a true bisector event of $\spm(s)$ (either a bisector-bisector intersection or a bisector-obstacle intersection) in a cell $c$ of the subdivision $\calS'$, then $\alpha$ is guaranteed to be in a set of marked generators for $c$ (but a marked generator for $c$ may not actually participate in a true bisector event in $c$).
The generator marking rules are presented below, which are consistent with those in the HS algorithm.

\paragraph{Generator marking rules:}
\begin{enumerate}
\item
For any generator $\alpha=(A,a)$, if its initial vertex $a$ lies in a cell $c$, then mark $\alpha$ in $c$.

\item
Let $e$ be a transparent edge and let $W(e)$ be a wavefront coming from some generator $\alpha$'s side of $e$.

\begin{enumerate}
\item
If $\alpha$ claims an endpoint $b$ of $e$ in $W(e)$, or if it would do so except for an artificial wavefront, then mark $\alpha$ in all cells $c$ incident to $b$.

\item
If $\alpha$'s claim in $W(e)$ is shortened or eliminated by an artificial wavelet, then mark $\alpha$ for $c$, where $c$ is the cell having $e$ as an edge and on $\alpha$'s side of $e$.
\end{enumerate}

\item

Let $e$ and $g$ be two transparent edges with $g\in output(e)$. Mark a generator $\alpha$ of $W(e)$ in both cells having $e$ as an edge if one of the following cases happens:

\begin{enumerate}
\item
$\alpha$ claims an endpoint of $g$ in $W(e,g)$;

\item
$\alpha$ participates in a bisector event either during the wavefront propagation procedure for computing $W(e,g)$ from $W(e)$, or during the wavefront merging procedure for computing $W(g)$. Note that $\alpha$ is also considered to participate in a bisector event if its claim on $g$ is shortened or eliminated by an artificial wavelet.
\end{enumerate}

\item

If a generator $\alpha$ of $W(e)$ claims part of an obstacle edge during the wavefront propagation procedure for propagating $W(e)$ toward $output(e)$ (this includes the case in which $\alpha$ participates in a bisector-obstacle intersection event), then mark $\alpha$ in both cells having $e$ as an edge.
\end{enumerate}

Note that each generator may be marked multiple times and each mark applies to one instance of the generator.


\begin{lemma}\label{lem:numgen}
The total number of marked generators during the algorithm is at most $O(h)$.
\end{lemma}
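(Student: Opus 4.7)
The plan is to categorize the marks by the four rules and charge each marked generator to a feature whose global count is $O(h)$. The relevant $O(h)$-count features are: the generators themselves (Corollary~\ref{coro:complexity}); the vertices, edges, and bisector events of $\spm'(s)$ (Lemma~\ref{lem:sizespm}); the transparent edges of $\calS'$ together with their endpoints (both $O(h)$ since $|\calS'|=O(h)$); and the artificial wavelets, which are generated at transparent-edge endpoints and hence also $O(h)$ in total.

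Rules~1, 2a, 3a, and 4 are handled directly. Rule~1 contributes exactly one mark per generator, totalling $O(h)$. Rules~2a and 3a assign at most $O(1)$ marks at each of the $O(h)$ transparent-edge endpoints, because a given endpoint is incident to $O(1)$ wavefronts, in each such wavefront at most one generator can claim it, and the endpoint lies on $O(1)$ cells of $\calS'$; this yields $O(h)$ marks in total. Rule~4 marks are charged either to bisector-obstacle intersections of $\spm'(s)$ (of which there are $O(h)$ by Lemma~\ref{lem:sizespm}) or to the creation of new generators on elementary chains (again $O(h)$).

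The delicate cases are Rules~2b and 3b. I split the associated events into three classes: (i)~\emph{genuine} bisector events that appear as vertices of $\spm'(s)$, of which there are $O(h)$; (ii)~shortenings or eliminations caused by an artificial wavelet, where the mark is charged to the artificial wavelet, and since each artificial wavelet affects only $O(1)$ generators in this way the total is $O(h)$; and (iii)~\emph{spurious} bisector-bisector events in the approximate wavefront that fully eliminate a generator. For class~(iii) I charge each spurious event to the eliminated generator and must argue that each generator is eliminated from wavefronts $O(1)$ times globally. This step, the main obstacle of the proof, follows the HS analysis via the quantitative well-covering inequality $d(e,f)\ge 2\max\{|e|,|f|\}$ for every $f\in \partial\calU(e)$, which confines each wavelet's region of potential interference to an $O(1)$-size neighborhood in $\calS'$. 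The extra subtlety in adapting the HS argument is that a cell of $\calS'$ can contain convex chains rather than only $O(1)$-size boundary pieces, but once each elementary chain is treated as a single atomic object (as our generator definition enforces), the topological argument of HS carries over and yields the desired $O(h)$ bound.
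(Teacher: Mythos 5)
Your decomposition by rule matches the paper's overall structure, but the heart of the argument --- what you call class~(iii), the spurious eliminations --- is handled with a claim that is both unsupported and false as stated. You propose to ``charge each spurious event to the eliminated generator and argue that each generator is eliminated from wavefronts $O(1)$ times globally.'' A single generator $\alpha$ can appear in the approximate wavefronts $W(e)$ of many different transparent edges $e$ (these are two-sided, per-edge wavefronts, not a single global wavefront), and can be eliminated independently in each of them; nothing confines the number of such eliminations to $O(1)$ per generator. The paper's actual key tool is a different kind of bound (Lemma~\ref{lem:pairs}): the number of \emph{pairs} $(e,B)$ such that a bisector $B$ crosses a transparent edge $e$ in $W(e)$ but this crossing does not occur in $\spm'(s)$ is $O(h)$. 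The proof of that lemma is not a direct appeal to the HS well-covering inequality; it charges each spurious crossing to a genuine bisector event of $\spm'(s)$ lying inside a pseudo-triangle $\triangle(q,q_1,q_2)$ bounded by tangent paths and a chord of $\partial\calU(e)$, and argues these pseudo-triangles are pairwise disjoint for a fixed $e$, with each cell of $\calS'$ belonging to only $O(1)$ well-covering regions. Once that lemma is in hand, the paper's Rule~3(b) bound proceeds by assuming (up to $O(h)$ exceptions) that the bounding bisectors of $\alpha$ actually cross $e$ in $\spm'(s)$, and then charging the mark to the endpoint of the prematurely-terminating bisector, which is a vertex of $\spm'(s)$ inside the $O(1)$-size neighborhood $D(e)$.

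Two further gaps: (1)~Your handling of Rule~2(b) charges to artificial wavelets with the claim that each artificial wavelet ``affects only $O(1)$ generators,'' which is not obvious --- an artificial wavelet inserted at an endpoint of $e$ can in principle clip the claims of many generators of the incoming wavefront. The paper instead observes that such a shortening/elimination necessarily coincides with a bisector event during the computation of $W(f,e)$, hence a Rule~3(b) mark, and charges to that. (2)~Lemma~\ref{lem:pairs} is also essential in the paper's Rule~4 argument (and Rule~1 is bounded by first bounding Rule~3(b), since new generator creations are charged to bisector events); your Rule~4 charging ``to bisector-obstacle intersections of $\spm'(s)$'' does not account for the spurious case where a generator claims an obstacle edge during the approximate propagation but does not do so in $\spm'(s)$, which again requires Lemma~\ref{lem:pairs}. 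So the proposal is missing the central structural lemma on spurious bisector--transparent-edge crossings and the pseudo-triangle charging that makes it work; the remaining cases largely reduce to it rather than being independent.
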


Because the proof of Lemma~\ref{lem:numgen} is quite technical and lengthy, we devote the entire Section~\ref{sec:lemnumgen} to it. In the rest of this subsection, we use Lemma~\ref{lem:numgen} to show that the running time of our wavefront expansion algorithm is bounded by $O(n+h\log h)$. Our goal is to prove the following lemma.

\begin{lemma}\label{lem:algocom}
The wavefront expansion algorithm runs in $O(n+h\log h)$ time and space.
\end{lemma}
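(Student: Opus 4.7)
The plan is to add up the costs of four phases---building $\calS'$, running the outer priority-queue loop, all wavefront-merging calls, and all wavefront-propagation calls---and then verify the space bound separately. The key leverage comes from Lemma~\ref{lem:numgen}, which caps the total amount of ``interesting'' generator activity at $O(h)$, and Corollary~\ref{coro:size}, which caps the combinatorial size of $\spm'(s)$ at $O(n)$.

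First I would invoke Lemma~\ref{lem:subalgo} to construct $\calS'$ in $O(n+h\log h)$ time and $O(n)$ space; this gives $O(h)$ transparent edges, $O(h)$ cells, and $|output(e)|=O(1)$ for every transparent edge $e$. The outer loop extracts $O(h)$ entries from a priority queue keyed by $covertime(\cdot)$ and performs $O(h)$ insertions/decrease-updates issued by the propagation procedure, contributing $O(h\log h)$.

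Next I would bound the merging and propagation work. By Lemma~\ref{lem:merge}, merging at $e$ costs $O((1+k_e)\log n)$, where $k_e$ counts generators lost during the fuse; Rule~3(b) marks each such generator. By Lemma~\ref{lem:propagation}, propagating $W(e)$ through a cell $c$ costs $O((1+h_c)\log n + n_c)$, where $h_c$ counts bisector events inside $c$ and $n_c$ counts hyperbolic-arcs traced during the wavefront-splitting step; each of the $h_c$ events is also charged a mark by Rule~3 or Rule~4. Lemma~\ref{lem:numgen} then gives $\sum_e k_e + \sum_c h_c = O(h)$, so the logarithmic portion totals $O(h\log n)=O(n+h\log h)$ by Observation~\ref{obser:hlogh}. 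For the $\sum_c n_c$ term, the monotone motion of the $z$-points guarantees that each hyperbolic-arc is traced at most once within a given cell, and arcs traced in different cells are interior-disjoint subsets of $\spm'(s)$; hence $\sum_c n_c \le |\spm'(s)|=O(n)$ by Corollary~\ref{coro:size}.

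For space, the fully persistent balanced BSTs representing wavefronts incur $O(\log h)$ additional space per insertion/deletion/split/concatenate, because each such tree ever carries only $O(h)$ generators. Lemma~\ref{lem:numgen} bounds the total number of such structural updates across the whole algorithm by $O(h)$, so the persistent trees consume $O(h\log h)$ in total. Added to the $O(n)$ for $\calS'$, the horizontal and vertical decompositions of $\calF$, the elementary-chain arrays, and the stored edges of $\spm'(s)$, the overall space is $O(n+h\log h)$. The most delicate step I expect is the $\sum_c n_c$ bound: I must confirm both that every traced arc coincides with a genuine edge-fragment of $\spm'(s)$ (so Corollary~\ref{coro:size} bites) and that the monotonicity of $z$-point motion really does rule out duplicate tracing across the repeated cell-level sweeps; once this is nailed down, the remaining accounting is a mechanical composition of Lemmas~\ref{lem:subalgo}, \ref{lem:merge}, \ref{lem:propagation}, and \ref{lem:numgen}.
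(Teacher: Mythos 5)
The high-level architecture of your argument matches the paper's: build $\calS'$ via Lemma~\ref{lem:subalgo}, bound the merging cost via Lemma~\ref{lem:merge} and Rule~3(b) marks, bound the propagation cost via Lemma~\ref{lem:propagation} and Rules~3/4 marks, invoke Lemma~\ref{lem:numgen} to cap the marked-generator count at $O(h)$, and tack on the space of the persistent trees. Your extra remark about the $O(h\log h)$ priority-queue overhead is fine and harmless.

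The genuine gap is exactly where you flagged your own unease: the bound $\sum_c n_c = O(n)$. You propose to deduce it from the monotonicity of the $z$-point motion plus the claim that arcs traced in different cells are interior-disjoint sub-arcs of $\spm'(s)$, so that Corollary~\ref{coro:size} applies. Neither half of that claim holds as stated. First, the traced arcs live on bisectors of \emph{approximate} wavefronts $W(e)$, which are computed by treating $e$ as opaque with open endpoints; Lemma~\ref{lem:pairs} is precisely an acknowledgement that approximate wavefronts contain bisector fragments that do \emph{not} survive into $\spm'(s)$, so a traced arc need not be an edge-fragment of $\spm'(s)$ at all. Second, the $z$-point monotonicity only prevents re-tracing within a single invocation of the wavefront splitting step; the same pair of generators can appear adjacent in $W(e)$ for several different transparent edges $e$, and the propagation from $W(e)$ to distinct $g,g'\in output(e)$ also traces independently, so the same hyperbolic-arc can be walked over repeatedly across propagation calls. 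The paper closes this gap with a separate Lemma~\ref{lem:bisectorvertices}, whose proof does \emph{not} appeal to $|\spm'(s)|=O(n)$ at all: it charges each traced hyperbolic-arc endpoint $o$ to the obstacle edge $\overline{uv}$ whose extension ray defines $o$, and then runs a delicate argument using the well-covering property of $\calS'$ (the $2\cdot\max\{|e|,|f|\}$ separation) to show that each obstacle edge can be charged only $O(1)$ times over the whole algorithm. Without that lemma, or an argument of comparable strength, your accounting of the $n_c$ term is unsupported and the $O(n+h\log h)$ time bound does not follow.
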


First of all, by Lemma~\ref{lem:subalgo}, constructing the conforming subdivision $\calS'$ can be done in $O(n+h\log h)$ time and $O(n)$ space.

The wavefront expansion algorithm has two main subroutines: the wavefront merging procedure and the wavefront propagation procedure.

The wavefront merging procedure is to construct $W(e)$ based on $W(f,e)$ for the edges $f\in input(e)$. By Lemma~\ref{lem:merge}, this step takes $O((1+k)\log n)$ time, where $k$ is the total number of generators in all wavefronts $W(f,e)$ that are absent from $W(e)$. According to the algorithm, if a generator $\alpha$ is absent from $W(e)$, it must be deleted at a bisector event. Thus, $\alpha$ must be marked by Rule~3(b). Due to Lemma~\ref{lem:numgen}, the total sum of $k$ in the entire algorithm is $O(h)$. As such, the wavefront merging procedure in the entire algorithm takes $O(h\log n)$ time in total.

The wavefront propagation procedure is to compute $W(e,g)$ by propagating $W(e)$ to all edges $g\in output(e)$ either through $\calU(g)$ or $\calU(e)$.
By Lemma~\ref{lem:propagation}, the running time of the procedure is $O((1+h_c)\log n+n_c)$ time, where $h_c$ is the number of bisector events (including both the bisector-bisector intersection events and bisector-obstacle intersection events) and $n_c$ is the number of hyperbolic-arcs of bisectors that are traced out in the wavefront splitting step.
For each bisector-bisector intersection event, at least one involved generator is marked by Rule~3(b). For each bisector-obstacle intersection event, at least one involved generator is marked by Rule~4. Hence, by Lemma~\ref{lem:numgen}, the total sum of $h_c$ in the entire algorithm is $O(h)$. In addition, Lemma~\ref{lem:bisectorvertices} below shows that the total sum of $n_c$ in the entire algorithm is $O(n)$. Therefore, the wavefront propagation procedure in the entire algorithm takes $O(n+h\log n)$ time in total.

\begin{lemma}\label{lem:bisectorvertices}
The total number of traced hyperbolic-arcs of the bisectors in the entire algorithm is $O(n)$.
\end{lemma}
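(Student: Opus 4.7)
The plan is to show that every traced hyperbolic-arc is a piece of some bisector edge of $\spm'(s)$, and then to bound the total number of hyperbolic-arcs in $\spm'(s)$ by $O(n)$.

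First, I would argue that the tracing-point $z(\alpha,\alpha')$ on any bisector $B(\alpha,\alpha')$ is initialized when $\alpha$ and $\alpha'$ first become consecutive in the wavefront (at the bisector event eliminating the generator previously between them), and it only advances forward along $B(\alpha,\alpha')$ thereafter, never backward. The tracing becomes irrelevant once one of $\alpha$ or $\alpha'$ is eliminated from the wavefront. Consequently each hyperbolic-arc of $B(\alpha,\alpha')$ is traced at most once across all cells and iterations, and the portion swept out is exactly the edge of $\spm'(s)$ separating the cells of $\alpha$ and $\alpha'$, which is nonempty precisely during the interval in which these two generators are adjacent.

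Second, I would bound the total number of hyperbolic-arcs occurring in edges of $\spm'(s)$. By Lemma~\ref{lem:spm}, the union of all non-extension bisector edges of $\spm'(s)$ coincides with the union of all walls of $\spm(s)$. Each wall of $\spm(s)$ is a single hyperbola defined by the anchors of its two neighboring cells, hence a single hyperbolic-arc. Since $\spm(s)$ has combinatorial size $O(n)$, its walls contribute $O(n)$ hyperbolic-arcs. For extension bisector edges of $\spm'(s)$, Lemma~\ref{lem:sizespm} gives $O(h)$ of them, each a single line segment counted as one arc. Summing yields $O(n)+O(h)=O(n)$ hyperbolic-arcs in total inside $\spm'(s)$.

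Combining the two parts gives an $O(n)$ bound on the total number of traced hyperbolic-arcs. The main obstacle I anticipate is the rigorous identification of the cumulative trace with the actual edge of $\spm'(s)$: one must verify that the trace, assembled over the various cells the bisector crosses and the wavefront-splitting invocations in which it participates, neither overshoots the endpoints of the corresponding $\spm'(s)$ edge (which are bisector events terminating the adjacency of $\alpha$ and $\alpha'$) nor revisits portions already traced, so that each hyperbolic-arc of the edge is charged exactly once to the global counter.
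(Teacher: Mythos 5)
Your second part is fine: by Lemma~\ref{lem:spm} the non-extension bisector edges of $\spm'(s)$ coincide with the walls of $\spm(s)$, each wall is a single hyperbola, and $\spm(s)$ has combinatorial size $O(n)$, so $\spm'(s)$ contains $O(n)$ hyperbolic-arcs. That is the easy half. The gap is in your first part, and it is exactly the obstacle you flag at the end: neither the claim that each arc is traced at most once nor the claim that the trace stays inside $\spm'(s)$ is true as stated. The tracing-point $z(\alpha,\alpha')$ lives inside a particular persistent wavefront tree $T(e)$, not globally; if the same pair of generators is adjacent in $W(e_1)$ and $W(e_2)$ for distinct transparent edges $e_1,e_2$, each wavefront carries its own tracing-point initialized independently, and the well-covering regions $\calU(e_1)\cup\calU(g_1)$ and $\calU(e_2)\cup\calU(g_2)$ can overlap, so the same arcs get traced multiple times. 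Worse, the wavefronts are \emph{approximate}: a bisector $B(\alpha,\alpha')$ can cross a transparent edge in some $W(e)$ even though the same crossing does not occur in $\spm'(s)$ (this is precisely the content of Lemma~\ref{lem:pairs}), so the trace can march through hyperbolic-arcs that never appear as edges of $\spm'(s)$ at all. Your proposal does not account for either of these over-counting sources.

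The paper's proof avoids identifying traced arcs with $\spm'(s)$ edges entirely. It disposes of extension bisectors as $O(h)$ single arcs, and for non-extension bisectors it counts traced hyperbolic-arc \emph{endpoints} and charges each endpoint to the obstacle edge $\overline{uv}$ whose extension generates it. The crux is then a geometric argument, via ``defining pairs'' $(e,g)$ and the distance guarantees of well-covering regions, that each obstacle edge can produce only $O(1)$ traced endpoints over the whole algorithm — this simultaneously bounds both the re-tracing across wavefronts and the arcs not present in $\spm'(s)$. That charging structure is the idea your proof is missing, and without it the $O(n)$ bound does not follow.
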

\begin{proof}
First of all, notice that each extension bisector consists of a single hyperbolic-arc, which is a straight line. As each generator is marked by Rule~1, by Lemma~\ref{lem:numgen}, the total number of generators created in the algorithm is $O(h)$. Since each generator can define at most one extension bisector, the number of hyperbolic-arcs on extension bisectors is at most $O(h)$. In the following, we focus on hyperbolic-arcs of non-extension bisectors. Instead of counting the number of traced hyperbolic-arcs, we will count the number of their endpoints.

\begin{figure}[t]
\begin{minipage}[t]{\textwidth}
\begin{center}
\includegraphics[height=1.7in]{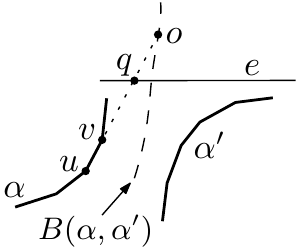}
\caption{\footnotesize Illustrating the definitions of $u$, $v$, $q$, and $o$.}
\label{fig:tracepoint}
\end{center}
\end{minipage}
\vspace{-0.15in}
\end{figure}

Consider a hyperbolic-arc endpoint $o$ that is traced out. According to our algorithm, $o$ is traced out during the wavefront propagation procedure for propagating $W(e)$ to compute $W(e,g)$ for some transparent edge $e$ and  $g\in output(e)$. Suppose $o$ belongs to a non-extension bisector $B(\alpha,\alpha')$ of two generators $\alpha$ and $\alpha'$ in $W(e)$. Then, $o$ must be defined by an obstacle edge $\overline{uv}$ of either $\alpha$ or $\alpha'$, i.e., the ray $\rho(u,v)$ emanating from $v$ along the direction from $u$ to $v$ (which is consistent with the designated direction of the generator that contains $\overline{uv}$) hits $B(\alpha,\alpha')$ at $o$ (e.g., see Fig~\ref{fig:tracepoint}). Without loss of generality, we assume that $\overline{uv}$ belongs to $\alpha$. In the following, we argue that $\overline{uv}$ can define $O(1)$ hyperbolic-arc endpoints that are traced out during the entire algorithm (a hyperbolic-arc endpoint defined by $\overline{uv}$ is counted twice is it is traced out twice), which will prove Lemma~\ref{lem:bisectorvertices} as there are $O(n)$ obstacle edges in total.

We first discuss some properties. Since $o\in B(\alpha,\alpha')$, both $\alpha$ and $\alpha'$ claim $o$. As both generators are in $W(e)$, the ray $\rho(u,v)$ must cross $e$, say, at a point $q$ (e.g., see Fig~\ref{fig:tracepoint}). Because $o$ is traced out when we propagate $W(e)$ to compute $W(e,g)$, $\overline{oq}$ must be in either $\calU(g)$ or $\calU(e)$, i.e., $\overline{oq}\subseteq \calU(e)\cup \calU(g)$. We call $(e,g)$ the {\em defining pair} of $o$ for $\overline{uv}$.
According to our algorithm, during the propagation from $W(e)$ to $g$, $\overline{uv}$ defines only one hyperbolic-arc endpoint, because it is uniquely determined by the wavefront $W(e)$. As such, to prove that $\overline{uv}$ can define $O(1)$ hyperbolic-arc endpoints that are traced out during the entire algorithm, it suffices to show that there are at most $O(1)$ defining pairs for $\overline{uv}$. Let $\Pi$ denote the set of all such defining pairs.
We prove $|\Pi|=O(1)$ below.

For each pair $(e',g')\in \Pi$, according to the above discussion, $\overline{uv}$ and $(e',g')$ define a hyperbolic-arc endpoint $o'$ such that $o'$ is on the ray $\rho(u,v)$ and $\overline{vo'}$ crosses $e'$ at a point $q'$. Without loss of generality, we assume that $(e,g)$ is a pair that minimizes the length $|\overline{vq'}|$. Let $W(e')$ refer to the wavefront at $e'$ whose propagation to $g'$ traces out $o'$.

We partition $\Pi$ into two subsets $\Pi_1$ and $\Pi_2$, where $\Pi_1$ consists of all pairs $(e',g')$ of $\Pi$ such that $\overline{qo}$ intersects $e'$ and $\Pi_2=\Pi\setminus \Pi_1$. Since $\overline{oq}\subseteq \calU(e)\cup \calU(g)$, each well-covering region contains $O(1)$ cells, and $|output(e')|=O(1)$ for each transparent edge $e'$, the size of $\Pi_1$ is $O(1)$.

For $\Pi_2$, we further partition it into two subsets $\Pi_{21}$ and
$\Pi_{22}$, where $\Pi_{21}$ consists of all pairs $(e',g')$ of
$\Pi_2$ such that $e$ is in the well-covering region $\calU(e')$ of
$e'$ or $e'\in \calU(e)\cup \calU(g)$, and $\Pi_{22}=\Pi_2\setminus \Pi_{21}$. Since $e$ is in
$\calU(e')$ for a constant number of transparent edges $e'$, each of $\calU(e)$ and $\calU(g)$ contains $O(1)$ cells, and $|output(e')|=O(1)$ for each $e'$, it holds that $|\Pi_{21}|=O(1)$.
In the following, we argue that $\Pi_{22}=\emptyset$, which will prove $|\Pi|=O(1)$.

Assume to the contrary that $|\Pi_{22}|\neq \emptyset$ and let $(e',g')$ be a pair of $\Pi_{22}$. Since $(e',g')\in \Pi_2$, by the definition of $\Pi_2$, $e'$ does not intersect $\overline{oq}$. By the definition of $e$, $\overline{vq}\setminus\{q\}$ does not intersect $e'$. Recall that $q'$ is the intersection of $e'$ and $\rho(u,v)$. Therefore, the points $v$, $q$, $o$, $q'$, and $o'$ appear on the ray $\rho(u,v)$ in this order (e.g., see Fig~\ref{fig:tracepoint10}). Further, since $(e',g')\in \Pi_{22}$, by the definition of $\Pi_{22}$, $e$ is not in $\calU(e')$ and $e'$ is not in $\calU(e)\cup \calU(g)$.

\begin{figure}[t]
\begin{minipage}[t]{0.49\textwidth}
\begin{center}
\includegraphics[height=2.2in]{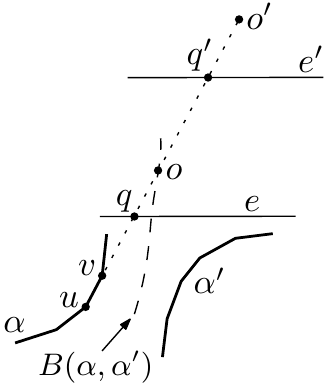}
\caption{\footnotesize Illustrating the definitions of $u$, $v$, $q$, and $o$.}
\label{fig:tracepoint10}
\end{center}
\end{minipage}
\hspace{0.02in}
\begin{minipage}[t]{0.49\textwidth}
\begin{center}
\includegraphics[height=1.8in]{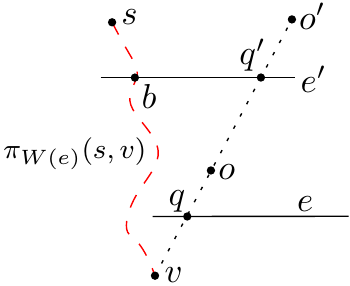}
\caption{\footnotesize The dashed (red) path is $\pi_{W(e)}(s,v)$, which crosses $e'$ at $b$.}
\label{fig:tracepoint20}
\end{center}
\end{minipage}
\vspace{-0.15in}
\end{figure}

Without loss of generality, we assume that $e'$ is horizontal and the
wavefront $W(e')$ is from below $e'$ (thus $v$ is below $e'$ while
$o'$ is above $e'$). Let $\calF'$ be the modified free space by replacing $e'$ with an opaque edge
of open endpoints.
Since the generator in $W(e')$ that contains $v$ claims $q'$, $\pi'(s,q')$ is a shortest path from $s$ to $q'$ in $\calF'$, where $\pi'(s,q')$ is the path following the
wavefront $W(e')$. Since $v$ is in the generator of $W(e')$ that
claims $q'$, $v$ is the anchor of $q'$ in $\pi'(s,q')$, i.e., the edge
of the path incident to $q'$ is $\overline{vq'}$. Let $\pi'(s,v)$ be
the sub-path of $\pi'(s,q')$ between $s'$ and $v$. Then, $\pi'(s,v)$ is a shortest path from $s$ to $v$ in $\calF'$.

Recall that $v$ is in the generator $\alpha$ of $W(e)$. Let $\pi_{W(e)}(s,v)$ be the path from $s$ to $v$ following $W(e)$. We claim that $|\pi'(s,v)|=|\pi_{W(e)}(s,v)|$. Assume to the contrary this is not true. Then, either $|\pi'(s,v)|<|\pi_{W(e)}(s,v)|$ or $|\pi_{W(e)}(s,v)|<|\pi'(s,v)|$.

\begin{itemize}
\item
If $|\pi'(s,v)|<|\pi_{W(e)}(s,v)|$, then since $\pi_{W(e)}(s,v)$ is a shortest path from $s$ to $v$ in the modified free space by considering $e$ as an opaque edge of open endpoints, $\pi'(s,v)$ must cross the interior of $e$. This means that $\pi'(s,q')=\pi'(s,v)\cup \overline{vq'}$ crosses $e$ twice. Because $\pi'(s,q')$ is a shortest path in $\calF'$ and $e\in \calF'$, it cannot cross $e$ twice, a contradiction.

\item
If $|\pi_{W(e)}(s,v)|<|\pi'(s,v)|$, then since $\pi'(s,v)$ is a shortest path from $s$ to $v$ in $\calF'$, the path $\pi_{W(e)}(s,v)$ cannot be in $\calF'$ and thus must cross the interior of $e'$, say, at a point $b$ (e.g., see Fig~\ref{fig:tracepoint10}). Let $\pi_{W(e)}(s,b)$ be the sub-path of $\pi_{W(e)}(s,v)$ between $s$ and $b$. Let $\pi_{W(e)}(b,q')$ be a shortest path from $b$ to $q'$ along $e'$ by considering it as an opaque edge with open endpoints. Note that if $b$ and $q'$ are on different sides of $e'$, then $\pi_{W(e)}(b,q')$ must be through an open endpoint of $e'$. It is not difficult to see that $|\pi_{W(e)}(b,q')|\leq |e'|$. Let $\pi_{W(e)}(s,q')$ be the concatenation of $\pi_{W(e)}(s,b)$ and $\pi_{W(e)}(b,q')$. Hence, $\pi_{W(e)}(s,q')$ is a path in $\calF'$. Since $\pi'(s,q')$ is a shortest path from $s$ to $q'$ in $\calF'$, it holds that $|\pi'(s,q')|\leq |\pi_{W(e)}(s,q')|$.

Notice that $|\pi_{W(e)}(s,q')|=|\pi_{W(e)}(s,b)|+|\pi_{W(e)}(b,q')|\leq |\pi_{W(e)}(s,v)|+|e'|< |\pi'(s,v)|+|e'|$.

On the other hand, $|\pi'(s,q')|=|\pi'(s,v)|+|\overline{vq'}|\geq |\pi'(s,v)|+|\overline{qq'}|$. We claim that $|\overline{qq'}|\geq 2|e'|$. Indeed, since $e$ is outside $\calU(e')$, $q\in e$, and $q'\in e'$, $\overline{qq'}$ must cross $\partial \calU(e')$ at a point $b'$. By the property of well-covering regions of $\calS'$, $|\overline{b'q'}|\geq 2|e'|$. Since $|\overline{qq'}|\geq |\overline{b'q'}|$, we obtain $|\overline{qq'}|\geq 2|e'|$. In light of the claim, we have $|\pi'(s,q')| \geq |\pi'(s,v)|+2|e'|>|\pi'(s,v)|+|e'|> |\pi_{W(e)}(s,q')|$. But this incurs contradiction since  $|\pi'(s,q')|\leq |\pi_{W(e)}(s,q')|$.
\end{itemize}

Therefore, $|\pi'(s,v)|=|\pi_{W(e)}(s,v)|$ holds.

As $q'\not\in \overline{qo}$, we define $p$ as a point on
$\overline{oq'}\setminus\{o\}$ infinitely close to $o$ (e.g., see Fig.~\ref{fig:tracepoint30}). Hence, $p\in
\overline{vq'}$. Since $\pi'(s,q')=\pi'(s,v)\cup \overline{vq'}$ is a shortest
path from $s$ to $q'$ in $\calF'$, $\pi'(s,v)\cup \overline{vp}$ is
a shortest path from $s$ to $p$ to $\calF'$.

Recall that $o$ is on the non-extension bisector $B(\alpha,\alpha')$ of two generators $\alpha$
and $\alpha'$ in $W(e)$ and $v$ is on $\alpha$. Let $v'$ be the anchor
of $o$ in $\alpha'$ (e.g., see Fig.~\ref{fig:tracepoint30}). Since $|\pi'(s,v)|=|\pi_{W(e)}(s,v)|$, we have
$|\pi'(s,v)|+|\overline{vo}|=|\pi_{W(e)}(s,v)|+|\overline{vo}|=|\pi_{W(e)}(s,v')|+|\overline{v'o}|$, where
$\pi_{W(e)}(s,v')$ is the path from $s$ to $v'$ following $W(e)$.
By the definition of $p$ and because $B(\alpha,\alpha')$ is a non-extension bisector, it holds that
$|\pi_{W(e)}(s,v)|+|\overline{vp}|>|\pi_{W(e)}(s,v')|+|\overline{v'p}|$. As $|\pi'(s,v)|=|\pi_{W(e)}(s,v)|$, we have $|\pi'(s,v)|+|\overline{vp}|>|\pi_{W(e)}(s,v')|+|\overline{v'p}|$.
Since $\pi'(s,v)\cup \overline{vp}$ is a shortest path from $s$ to $p$ in
$\calF'$,
$\pi_{W(e)}(s,v')\cup \overline{v'p}$ cannot be a path from $s$ to $p$ in $\calF'$.
Therefore, $\pi_{W(e)}(s,v')\cup \overline{v'p}$ must intersect the interior of $e'$.

\begin{figure}[t]
\begin{minipage}[t]{\textwidth}
\begin{center}
\includegraphics[height=2.0in]{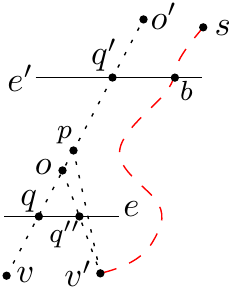}
\caption{\footnotesize The dashed (red) path is $\pi_{W(e)}(s,v')$, which crosses $e'$ at $b$.}
\label{fig:tracepoint30}
\end{center}
\end{minipage}
\vspace{-0.15in}
\end{figure}

We claim that $\overline{v'p}$ cannot intersect $e'$. Indeed, since $\overline{v'p}$ is covered by the wavefront $W(e)$ when $W(e)$ propagates to $g$ in either $\calU(e)$ or $\calU(g)$, $\overline{v'p}$ is in $\calU(e)\cup \calU(g)$. Since $(e',g')\in \Pi_{22}$ and by the definition of $\Pi_{22}$, $e'$ is not in $\calU(e)\cup \calU(g)$. Therefore, $\overline{v'p}$ cannot intersect $e'$.

The above claim implies that $\pi_{W(e)}(s,v')$ must intersect the interior of $e'$ at a point, say, $b$ (e.g., see Fig.~\ref{fig:tracepoint30}). Let $\pi_{W(e)}(s,b)$ be the sub-path of $\pi_{W(e)}(s,v')$ between $s$ and $b$.
Let $\pi_{W(e)}(b,q')$ be a path from $b$ to $q'$ along $e'$ by considering $e'$ as an opaque edge of open endpoints. As discussed above, $|\pi_{W(e)}(b,q')|\leq |e'|$. Hence, $\pi_{W(e)}(s,b)\cup \pi_{W(e)}(b,q')$ is a path in $\calF'$, whose length is at most $|\pi_{W(e)}(s,b)|+|e'|\leq |\pi_{W(e)}(s,v')|+|e'|$.
On the other hand,
$|\pi_{W(e)}(s,v')|+|\overline{v'o}|+|\overline{oq'}|=|\pi'(s,v)|+|\overline{vo}|+|\overline{oq'}|=|\pi'(s,v)|+|\overline{vq'}|$. Since $(e',g')\in \Pi_{22}$ and by the definition of $\Pi_{22}$, $e$ is outside $\calU(e')$. Because $q'\in e'$ and $\overline{ov'}$ crosses $e$ at a point $q''$, by the property of well-covering regions of $\calS'$, $|\overline{q''o}|+|\overline{oq'}|\geq 2|e'|$.
Therefore, we obtain
\begin{align*}
|\pi'(s,v)|+|\overline{vq'}| & =|\pi_{W(e)}(s,v')|+|\overline{v'o}|+|\overline{oq'}|\\
& \geq |\pi_{W(e)}(s,v')|+|\overline{q''o}|+|\overline{oq'}|\\
& \geq |\pi_{W(e)}(s,v')|+2|e'|\geq |\pi_{W(e)}(s,b)| + 2|e'|\\
& >|\pi_{W(e)}(s,b)| + |e'|\geq |\pi_{W(e)}(s,b)| + |\pi_{W(e)}(b,q')|.
\end{align*}
This means that $\pi_{W(e)}(s,b)\cup \pi_{W(e)}(b,q')$ is a path from $s$ to $q'$ in $\calF'$ that is shorter than the path $\pi'(s,v)\cup \overline{vq'}$. But this incurs a contradiction as $\pi'(s,v)\cup \overline{vq'}$ is a shortest path from $s$ to $q'$ in $\calF'$.

This completes the proof of the lemma.
\end{proof}

In summary, the total time of the wavefront expansion algorithm is $O(n+h\log n)$, which is $O(n+h\log h)$.

For the space complexity of the algorithm, since each wavefront $W(e)$ is maintained by a persistent binary tree, each bisector event costs additional $O(\log n)$ space. As discussed above, the total sum of $k+h_c$ in the entire algorithm, which is the total number of bisector events, is $O(h)$. Hence, the space cost by persistent binary trees is $O(h\log n)$. The space used by other parts of the algorithm is $O(n)$. Hence, the space complexity of the algorithm is $O(n+h\log n)$, which is $O(n+h\log h)$. This proves Lemma~\ref{lem:algocom}.

\subsection{Proving Lemma~\ref{lem:numgen}}
\label{sec:lemnumgen}

In this section, we prove Lemma~\ref{lem:numgen}. The proof follows the same strategy in the high level as that in the HS algorithm, although many details are different. We start with the following lemma.

\begin{lemma}\label{lem:pairs}
Suppose $\Pi$ is the set of pairs $(e,B)$ of transparent edges $e$ and bisectors $B$ such that $B$ crosses $e$ in the wavefront $W(e)$ of $e$ during our wavefront expansion algorithm but the same crossing does not occur in $\spm'(s)$. Then $|\Pi|=O(h)$.
\end{lemma}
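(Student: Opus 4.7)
The plan is to charge every pair $(e,B) \in \Pi$ to either a vertex of $\spm'(s)$ (i.e., a bisector event of the true map) or an endpoint of a transparent edge of $\calS'$, and then to observe that each such object receives only $O(1)$ charge. By Lemma~\ref{lem:sizespm} the former set has cardinality $O(h)$, and by the construction of $\calS'$ (Lemma~\ref{lem:subalgo}) the latter set also has cardinality $O(h)$. Combining the two bounds yields $|\Pi|=O(h)$.

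Fix a pair $(e,B) \in \Pi$ with $B = B(\alpha, \alpha')$ for two generators $\alpha, \alpha'$ in $W(e)$, and let $p$ be the point where $B$ crosses $e$. Because $p$ does not appear as a crossing in $\spm'(s)$, one of two situations must hold. \emph{Case} (i): the bisector edge $B(\alpha,\alpha')$ does exist in $\spm'(s)$, but it is truncated before reaching $e$ by a bisector event $q$ (either a bisector-bisector intersection or a bisector-obstacle intersection). We then charge $(e,B)$ to this terminating event $q$. \emph{Case} (ii): at least one of $\alpha, \alpha'$ does not contribute to the true wavefront arriving at $e$; its claim has been eliminated in the true propagation by an artificial wavelet emanating from an endpoint $v$ of some transparent edge in $input(e)\cup\{e\}$. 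In this case, we charge $(e,B)$ to $v$.

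Each charge has bounded multiplicity because of the well-covering structure of $\calS'$. In Case (i), the event $q$ lies in $\calU(e)$, and since every point of the plane lies in only $O(1)$ well-covering regions, $q$ is charged by only $O(1)$ pairs $(e',B')$. In Case (ii), each endpoint $v$ is incident to only $O(1)$ transparent edges and can emit artificial wavelets visible in only $O(1)$ wavefronts $W(e')$, so $v$ too is charged $O(1)$ times. Thus both categories contribute $O(h)$ pairs in total.

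The principal obstacle is to establish this dichotomy rigorously --- namely, that Cases (i) and (ii) are the only ways an approximate crossing can fail to appear in $\spm'(s)$. The argument mirrors Lemma~4.5 of Hershberger and Suri but with elementary chains playing the role of individual obstacle vertices; the crucial ingredient is the $d(e,f) \geq 2\max\{|e|,|f|\}$ guarantee on the boundary of $\calU(e)$, which prevents true bisector events originating outside $\calU(e)$ from altering the arrival order of wavelets at $e$. With that dichotomy in hand, a careful comparison between the approximate wavefront $W(e)$ and the true wavefront at $e$, coupled with the charging scheme above, completes the proof.
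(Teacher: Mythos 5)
Your high-level plan---charge each pair in $\Pi$ to a vertex of $\spm'(s)$, then argue $O(1)$ multiplicity via the well-covering structure---is the right instinct, and it matches the paper's end goal. But the proposal does not actually carry out the argument; it hands back the core difficulty to the reader. You acknowledge this yourself when you write that the ``principal obstacle is to establish this dichotomy rigorously,'' yet that dichotomy \emph{is} the lemma. Two concrete problems:

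First, the dichotomy of Cases (i) and (ii) is neither proved nor, as stated, correct. A pair $(\alpha,\alpha')$ adjacent in $W(e)$ may define no bisector edge of $\spm'(s)$ at all (they may never be adjacent in the true wavefront), so Case~(i) simply does not arise, and Case~(ii) conflates two distinct things: artificial wavelets are an algorithmic device used in building the approximate wavefronts $W(e)$; they have no meaning inside $\spm'(s)$. Saying a claim has been ``eliminated in the true propagation by an artificial wavelet'' is not a well-defined event to charge to. Second, even in Case~(i), the assertion that the charged event $q$ lies in $\calU(e)$ is unjustified. A bisector of $\spm'(s)$ can be truncated by an event that is nowhere near $e$, so without a geometric containment argument the $O(1)$ multiplicity claim fails.

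The paper's proof supplies exactly the structure you are missing. It first disposes of the $O(1)$-per-edge cases: (a) at least one of the two generators has its initial vertex inside $\calU(e)$ (only $O(1)$ such generators exist per edge), and (b) the two generators enter $\calU(e)$ through different transparent edges $f_1\neq f_2$, or through the same edge but along topologically distinct paths (only $O(1)$ such adjacent pairs exist, because $W(e)$ decomposes into $O(1)$ subsequences by entry edge and homotopy class). In the remaining case, both generators enter through a common $f$ along homotopic paths; the two tangent paths $\pi'(a_1,q)$ and $\pi'(a_2,q)$ cross $f$ at $q_1,q_2$, and a well-covering argument ($d(e,f)\geq 2\max\{|e|,|f|\}$, compared against the processing order of the algorithm) shows $q_1$ and $q_2$ \emph{are} correctly claimed by $\alpha_1,\alpha_2$ in $\spm'(s)$. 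Since the crossing at $q$ fails in $\spm'(s)$, some bisector event of $\spm'(s)$ must lie in the pseudo-triangle $\triangle(q,q_1,q_2)\subseteq\calU(e)$; the charge goes there. The $O(1)$ multiplicity then follows because the pseudo-triangles for different pairs sharing the same $e$ are interior-disjoint and each cell belongs to $O(1)$ well-covering regions. None of this machinery appears in your proposal, so the charging scheme remains unsupported.
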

\begin{proof}
Let $\alpha_1=(A_1,a_1)$ and $\alpha_2=(A_2,a_2)$ be the two generators of $B$.
Recall that the well-covering region $\calU(e)$ of $e$ is the union of $O(1)$ cells of $\calS'$ and each cell has $O(1)$ elementary chain fragments on its boundary. Hence, $\calU(e)$ has $O(1)$ convex chains on its boundary.
Recall also that $W(e)$ is computed by merging all contributing wavefronts $W(f,e)$ for $f\in input(e)$ in the wavefront merging procedure, and $W(f,e)$ is computed by propagating $W(f)$ to $e$ through $\calU(e)$ in the wavefront propagation procedure.

We first argue that the number of pairs $(e,B)$ of $\Pi$ in the case where at least one of $\alpha_1$ and $\alpha_2$ is in $\calU(e)$ is $O(h)$. Indeed, according to our wavefront propagation procedure, if a subcell $c\in \calU(e)$ is an empty rectangle, then no new generators will be created when $W(f)$ is propagating through $c$; otherwise, although multiple generators may be created, at most two (one on each side of $c$) are in the wavefront existing $c$. As $\calU(e)$ has $O(1)$ cells and each cell may be partitioned into $O(1)$ subcells during the wavefront propagation procedure, only $O(1)$ generators of $W(e)$ are inside $\calU(e)$. Since each generator of $W(e)$ may define at most two bisectors in $W(e)$ and the total number of transparent edges of $\calS'$ is $O(h)$, the number of pairs $(e,B)$ of $\Pi$ such that at least one generator of $B$ is in $\calU(e)$ is at most $O(h)$.

In the following, we assume that both $\alpha_1$ and $\alpha_2$ are outside $\calU(e)$, i.e., their initial vertices $a_1$ and $a_2$ are outside $\calU(e)$. Let $q$ be the intersection of $B$ and $e$. Let $\pi'(a_1,q)$ denote the path from $q$ to $a_1$ following the tangent from $q$ to $A_1$ and then to $a_1$ along $A_1$; define $\pi'(a_2,q)$ similarly. Clearly, both $\alpha_1$ and $\alpha_2$ claim $q$ in $W(e)$.
Since $\alpha_1$ is outside $\calU(e)$, $\alpha_1$ must be in $W(f_1)$ for some transparent edge $f_1$ of $\partial \calU(e)$ so that $\alpha_1$ of $W(e)$ is from $W(f_1,e)$.
Since $\alpha_1$ is outside $\calU(e)$, $\pi'(a_1,q)$ must intersects $f_1$, say, at point $q_1$ (e.g., see Fig.~\ref{fig:pair10}), and $\pi'(q_1,q)$ is inside $\calU(e)$, where $\pi'(q_1,q)$ is the sub-path of $\pi'(a_1,q)$ between $q_1$ and $q$. Also, $f_1$ is processed (for the wavefront propagation procedure) earlier than $e$ because $W(f_1)$ contributes to $W(e)$.

We claim that $q_1$ is claimed by $\alpha_1$ in $\spm'(s)$. Assume to the contrary this is not true. Then,
let $\pi(s,q_1)$ be a shortest path from $s$ to $q_1$ in the free space $\calF$.

\begin{itemize}
\item
If $\pi(s,q_1)$ does not intersect $e$, then $\pi(s,q_1)$ is in the modified free space $\calF'$ by replacing $e$ with an opaque edge of open endpoints. Since $\alpha_1$ claims $q$ on $e$, $\pi=\pi_{W(e)}(s,a_1)\cup \pi'(a_1,q)$ is a shortest path from $s$ to $q$ in the modified free space $\calF'$, where $\pi_{W(e)}(s,a_1)$ is the path from $s$ to $a_1$ following the wavefront $W(e)$. Since $q_1$ is in $\pi$ and $q_1$ is not claimed by $\alpha_1$ in $\spm'(s)$, if we replace the portion between $s$ and $q_1$ in $\pi$ by $\pi(s,q_1)$, we obtain a shorter path from $s$ to $q$ in $\calF'$ than $\pi$. But this incurs a contradiction since $\pi$ is a shortest path from $s$ to $q$ in $\calF'$.

\item
If $\pi(s,q_1)$ intersects $e$, say, at a point $b$, then since $f_1\in \partial \calU(e)$ and thus $d(f_1,e)\geq 2\cdot \max\{|f_1|,|e|\}$ by the properties of the well-covering regions of $\calS'$, we show below that $e$ must be processed earlier than $f_1$, which incurs a contradiction because $f_1$ is processed earlier than $e$.

Indeed, $covertime(e)=\td(s,e)+|e|\leq d(s,b)+\frac{1}{2}\cdot |e| + |e|=d(s,b)+\frac{3}{2}\cdot |e|$. On the other hand, $covertime(f_1)=\td(s,f_1)+|f_1|\geq d(s,q_1)-\frac{1}{2}\cdot |f_1| + |f_1|=d(s,b)+d(b,q_1)+\frac{1}{2}\cdot |f_1|$. Since $f_1\in \partial \calU(e)$, $b\in e$, and $q_1\in f_1$, $d(b,q_1)\geq d(f_1,e)\geq 2\cdot |e|$. Hence, we obtain $covertime(f_1)>covertime(e)$ and thus $e$ must be processed earlier than $f_1$.
\end{itemize}

Therefore, $q_1$ must be claimed by $\alpha_1$ in $\spm(s)$.

We define $f_2$, $q_2$, and $\pi'(q,q_2)$ analogously with respect to $\alpha_2$. Similarly, we can show that $q_2$ is claimed by $\alpha_2$ in $\spm'(s)$.

\begin{figure}[t]
\begin{minipage}[t]{0.49\textwidth}
\begin{center}
\includegraphics[height=2.0in]{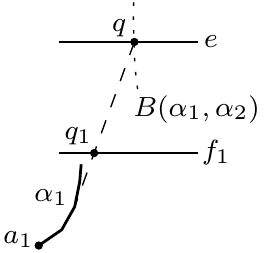}
\caption{\footnotesize Illustrating $q$ and $q_1$.}
\label{fig:pair10}
\end{center}
\end{minipage}
\hspace{0.02in}
\begin{minipage}[t]{0.49\textwidth}
\begin{center}
\includegraphics[height=2.2in]{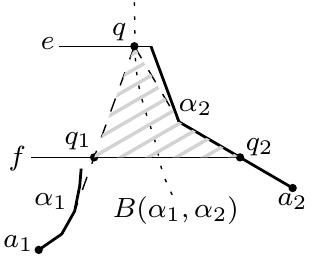}
\caption{\footnotesize Illustrating the psuedo-triangle $\triangle(q,q_1,q_2)$ (the shaded region).}
\label{fig:pair20}
\end{center}
\end{minipage}
\vspace{-0.15in}
\end{figure}

For each $f\in \partial \calU(e)$, the generators of $W(f)$ that are also in $W(e)$ may not form a single subsequence of the generator list of $W(e)$, but they must form a constant number of (maximal) subsequences. Indeed, since $\calU(e)$ is the union of $O(1)$ cells of $\calS'$, the number of islands in $\calU(e)$ is $O(1)$. Thus, the number of topologically different paths from $f$ to $e$ in $\calU(e)$ is $O(1)$; each such path will introduce at most one subsequence of generators of $W(e)$ that are also from $W(f)$. Therefore, the generators of $W(f)$ that are also in $W(e)$ form $O(1)$ subsequences of the generator list of $W(e)$.

Since $\partial\calU(e)$ has $O(1)$ transparent edges, the generator list of $W(e)$ can be partitioned into $O(1)$ subsequences each of which is from $W(f)$ following topologically the same  path for a single transparent edge $f$ of $\partial\calU(e)$. Due to this property, we have the following observation: the number of pairs of adjacent generators of $W(e)$ that are from different subsequences is $O(1)$.

Due to the above observation, the number of pairs of edges $f_1$ and $f_2$ on $\partial \calU(e)$ in the case  where $f_1\neq f_2$, or $f_1=f_2$ but $\pi'(q,q_1)$ and $\pi'(q,q_2)$ are topologically different in $\calU(e)$ is only $O(1)$. Therefore, the total number of pairs $(e,B)$ of $\Pi$ in that case is $O(h)$. In the following, it suffices to consider the case where $f_1= f_2$, and $\pi'(q,q_1)$ and $\pi'(q,q_2)$  are topologically the same in $\calU(e)$; for reference purpose, we use $\Pi'$ to denote the subset of pairs $(e,B)$ of $\Pi$ with the above property. Below we prove $|\Pi'|=O(h)$.

Let $f=f_1=f_2$. Since $\pi'(q,q_1)$ and $\pi'(q,q_2)$ are topologically the same in $\calU(e)$,
the region bounded by $\pi'(q,q_1)$, $\pi'(q,q_2)$, and $\overline{q_1q_2}$ must be in $\calU(e)$; we call the above region a {\em pseudo-triangle}, for both $\pi'(q,q_1)$ and $\pi'(q,q_2)$ are convex chains, and we use $\triangle(q,q_1,q_2)$ to denote it (e.g., see Fig.~\ref{fig:pair20}).
Because $(e,B)$ is not an incident pair in $\spm'(s)$, the point $q$ must be claimed by a different generator $\alpha$ in $\spm'(s)$, which must be from the side of $e$ different than $\alpha_1$ and $\alpha_2$. We have proved above that $q_1$ is claimed by $\alpha_1$ and $q_2$ is claimed by $\alpha_2$ in $\spm'(s)$. Hence, there must be at least one bisector event in $\spm'(s)$ that lies in the interior of the pseudo-triangle $\triangle(q,q_1,q_2)$\footnote{This is also due to that for any point $p\in \overline{q_1q_2}$, the shortest path $\pi(s,p)$ cannot intersect $e$; this can be proved by a similar argument to the above for proving that $q_1$ is claimed by $\alpha_1$ in $\spm'(s)$. This observation implies that $\alpha$ cannot claim any points on $f$ in $\spm'(s)$.}. We charge the early demise of $B$ to any one of these bisector events in $\triangle(p,q_1,q_2)$.

Note that the path $\pi'(q,q_1)$ (resp., $\pi'(q,q_2)$) is in a shortest path from $s$ to $q$ following the wavefront $W(e)$ in the modified free space by replacing $e$ with an opaque edge of open endpoints. Hence, if $\Pi'$ has other pairs $(e,B')$ whose first element is $e$, then the corresponding paths $\pi'(q,q_1)$ and $\pi'(q,q_2)$ of all these pairs are disjoint, and thus the corresponding pseudo-triangles $\triangle(q,q_1,q_2)$ are also disjoint in $\calU(e)$. Hence, each bisector event of $\spm'(s)$ inside $\triangle(q,q_1,q_2)$ is charged at most once for all pairs of $\Pi'$ that have $e$ as the first element. Since each cell of $\calS'$ belongs to the well-covering region $\calU(e)$ of $O(1)$ transparent edges $e$, each bisector event of $\spm'(s)$ is charged $O(1)$ times for all pairs of $\Pi'$. Because $\spm'(s)$ has $O(h)$ bisector events by Corollary~\ref{coro:complexity}, the number of pairs of $\Pi'$ is at most $O(h)$.

This completes the proof of the lemma.
\end{proof}

Armed with Lemma~\ref{lem:pairs}, we prove the subsequent five lemmas, which together lead to Lemma~\ref{lem:numgen}.

\begin{lemma}\label{lem:rule3b}
The total number of marked generators by Rule~2(a) and Rule~3 is $O(h)$.
\end{lemma}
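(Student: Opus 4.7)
The plan is to bound markings from Rule 2(a), Rule 3(a), and Rule 3(b) separately, each by $O(h)$. For Rule 2(a), I would argue directly: for each side of each transparent edge $e$, each endpoint $b$ of $e$ is claimed by at most one generator of $W(e)$, augmented by at most one additional generator whose claim would have occurred absent an artificial wavelet at $b$. Each such generator is marked in the $O(1)$ cells incident to $b$. Summing over the $O(h)$ transparent edges of $\calS'$ yields $O(h)$ markings.

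For Rule 3(a), note that for each pair $(e,g)$ with $g\in output(e)$, each of the two endpoints of $g$ is claimed by at most one generator of $W(e,g)$, which is then marked in the two cells incident to $e$. Since $|output(e)|=O(1)$ and there are $O(h)$ transparent edges $e$, Rule 3(a) contributes $O(h)$ markings in total.

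The main step is Rule 3(b). Here I would bound the total number of bisector events (arising in both the wavefront propagation and the wavefront merging procedures) by $O(h)$; since each event involves $O(1)$ generators, each marked in $O(1)$ cells, this bounds the Rule 3(b) markings. I would charge each event to a distinct bisector appearing in the algorithm: each bisector $B(\alpha_1,\alpha_2)$ is created at exactly one event (when $\alpha_1$ and $\alpha_2$ first become adjacent in a wavefront) and is destroyed at most once. Since each elimination event simultaneously destroys two bisectors and creates one, the number of events is at most a constant times the number of distinct bisectors ever appearing in a wavefront.

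To bound the distinct bisectors, I would split them into two classes. Class (i) consists of bisectors that coincide with non-extension bisector edges of $\spm'(s)$; Lemma~\ref{lem:sizespm} gives $O(h)$ such. Class (ii) consists of transient bisectors, which never appear in $\spm'(s)$. For a transient bisector $B$, $B$ lies in some wavefront $W(e)$ and thus crosses $e$; since $B$ is absent from $\spm'(s)$ the same crossing is absent from $\spm'(s)$, so $(e,B)\in\Pi$. Lemma~\ref{lem:pairs} then gives $O(h)$ transient bisectors. The hardest part will be pinning down the event-to-bisector correspondence precisely (so as not to double-count events that destroy two bisectors at once) and absorbing the extra events induced by artificial wavelets: there are only $O(h)$ such wavelets (two per transparent edge), and each introduces only $O(1)$ additional pseudo-bisectors into the wavefronts it touches, so the same $\Pi$-style charging still gives $O(h)$ overall.
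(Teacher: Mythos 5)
The Rule 2(a) and Rule 3(a) parts of your proposal are correct and essentially match the paper's reasoning. The Rule 3(b) part, however, takes a genuinely different route from the paper, and it has a real gap that you yourself flag as ``the hardest part.''

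Your plan is to bound the number of bisector \emph{events} globally by an amortization over distinct \emph{bisectors}, and then bound distinct bisectors by Lemma~\ref{lem:sizespm} plus Lemma~\ref{lem:pairs}. The amortization as stated (``each event destroys two bisectors and creates one, so events are at most a constant times distinct bisectors'') does not hold up. The standard deaths-vs-creations accounting gives $2E \leq C$, i.e.\ $E$ is bounded by the number of \emph{non-event creations} $C_{\text{other}}$, not by the number of distinct bisectors. These differ for two reasons. First, a single bisector $B(\alpha_1,\alpha_2)$ can be created and destroyed multiple times: the two generators may be adjacent in $W(e)$, become separated in $W(g)$ when another group of generators is merged between them, and become adjacent again later; each such episode is a separate creation/death and can participate in a separate event, while contributing only once to ``distinct bisectors.'' Second, and more seriously, $C_{\text{other}}$ includes the bisectors created whenever a new generator is spawned on a convex chain, and bounding the number of spawned generators is precisely Rule 1. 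But the proof of Lemma~\ref{lem:rule1} charges spawned generators to Rule 3(b) events, so taking this route is circular: you need a bound on Rule 3(b) markings to bound new generators, and you need a bound on new generators to make the amortization close. Your appeal to ``$O(1)$ additional pseudo-bisectors'' from artificial wavelets does not touch this circularity.

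The paper avoids all of this by never attempting a global event count. Instead it charges each Rule 3(b) marking of $\alpha$ at edge $e$ directly: after discarding the $O(h)$ cases where $\alpha$ is a boundary generator of $W(e)$ and the $O(h)$ cases covered by Lemma~\ref{lem:pairs}, the remaining markings correspond to a bisector incident to $\alpha$ that crosses $e$ in $\spm'(s)$ yet terminates inside $D(e)$, and that termination is an actual vertex of $\spm'(s)$ lying in the $O(1)$ cells of $D(e)$. Since each cell belongs to $D(e')$ for only $O(1)$ edges $e'$, each of the $O(h)$ vertices of $\spm'(s)$ (Lemma~\ref{lem:sizespm}) absorbs $O(1)$ charges. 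This local charging sidesteps both the re-creation issue and the circular dependence on Rule 1; to salvage your approach you would need either a creation bound independent of Rule 1, or a charging scheme attached to $\spm'(s)$ geometry in the spirit of the paper's argument.
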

\begin{proof}
Suppose $\alpha$ is a generator marked by Rule~2(a) for a transparent edge $e$. Then, $\alpha$ must be the first or last non-artificial generator of $W(e)$. Hence, at most two generators of $W(e)$ can be marked by Rule~2(a). As $\calS'$ has $O(h)$ transparent edges, the total number of generators marked by Rule~2(a) is $O(h)$.

Suppose $\alpha$ is a generator marked by Rule~3(a) for the two transparent edges $(e,g)$ with $g\in output(e)$. Since $\alpha$ claims an endpoint of $g$, $\alpha$ must be the first or last generator of $W(e,g)$. Hence, at most two generators of $W(e,g)$ can be marked by Rule~3(a). Since $|output(e)|=O(1)$, at most $O(1)$ generators can be marked for the pairs of transparent edges with $e$ as the first edge.  Since $\calS'$ has $O(h)$ transparent edges, the total number of generators marked by Rule~3(a) is $O(h)$.

Suppose $\alpha$ is a generator marked by Rule~3(b) for the two transparent edges $(e,g)$ with $g\in output(e)$. Note that $\alpha$ is a generator in $W(e)$. We assume that $\alpha$ is not the first or last non-artificial generators of $W(e)$; the first and last non-artificial generators of $W(e)$, countered separately, sum to $O(h)$ for all transparent edges $e$ of $\calS'$.
Let $\alpha_1$ and $\alpha_2$ be the two neighboring generators of $\alpha$ in $W(e)$. Thus, both $\alpha_1$ and $\alpha_2$ are non-artificial.
We assume that the bisectors $B(\alpha_1,\alpha)$ and $B(\alpha,\alpha_2)$ intersect $e$ in $\spm'(s)$; by Lemma~\ref{lem:pairs}, there are only $O(h)$ bisector and transparent edge intersections that appear in some approximate wavefront but not in $\spm'(s)$.

Since $\alpha$ is marked by Rule~3(b), at least one of $B(\alpha_1,\alpha)$ and $B(\alpha,\alpha_2)$ fails to reach the boundary of $D(e)$ during the algorithm, where $D(e)$ is the union of cells through which $W(e)$ is propagated to all edges $g'\in output(e)$. Note that $D(e)\subseteq \calU(e)\cup \bigcup_{g'\in output(e)}\calU(g')$, which contains $O(1)$ cells of $\calS'$ as $|output(e)|=O(1)$ and the well-covering region of each transparent edge is the union of $O(1)$ cells of $\calS'$; also, each cell of $D(e)$ is within a constant number of cells of $e$. Without loss of generality, we assume that $B(\alpha_1,\alpha)$ does not reach the boundary of $D(e)$ and a bisector event on $B(\alpha_1,\alpha)$ is detected during the algorithm. The detected bisector event on $B(\alpha_1,\alpha)$ also implies that an actual bisector event in $\spm'(s)$ happens to $B(\alpha_1,\alpha)$ no later than the detected event; we charge the marking of $\alpha$ to that bisector actual endpoint (i.e., a vertex) in $\spm'(s)$, and the endpoint is in $D(e)$ because $B(\alpha_1,\alpha)$ intersects $e$ in $\spm'(s)$.
Since each cell of $D(e)$ is within a constant number of cells of $e$, each cell of $\calS'$ belongs to $D(e')$ for a constant number of transparent edges $e'$ of $\calS'$. Therefore, each vertex of $\spm'(s)$ is charged $O(1)$ times. Since $\spm'(s)$ has $O(h)$ vertices by Lemma~\ref{lem:sizespm}, the total number of generators marked by Rule~3(b) is $O(h)$.
\end{proof}


\begin{lemma}\label{lem:rule1}
The total number of marked generators by Rule 1 is $O(h)$.
\end{lemma}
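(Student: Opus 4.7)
Each Rule~1 mark corresponds to exactly one generator creation (the mark placed in the cell containing the generator's initial vertex), so it suffices to bound the total number of generator instances created during the wavefront expansion. I would partition them by creation mechanism into four classes. The initial generator at $s$ contributes $O(1)$. The artificial generators inserted at each endpoint of each transparent edge during the wavefront merging procedure contribute $O(h)$, because $\calS'$ has $O(h)$ transparent edges by Lemma~\ref{lem:subalgo}. The generators arising when a wavelet of some $\alpha=(A,a)$ traverses $A$ all the way to its counterclockwise endpoint $q$ (Case~(b) of Fig.~\ref{fig:newgenerator}) have $q\in\calV$; since $|\calV|=O(h)$ and each rectilinear extreme vertex lies in only $O(1)$ well-covering regions through which a wavefront can reach it, this class also contributes $O(h)$ generator instances.

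The delicate class consists of generators $\alpha'=(A',q)$ born at a proper tangent event, where a vertex $v$ on some $A$ sends a tangent to an interior point $q$ of $A'$ that is also tangent to $A'$ (Case~(a) of Fig.~\ref{fig:newgenerator}). Each such $\alpha'$ shares with its creating neighbor $\alpha$ an extension bisector $B(\alpha,\alpha')$, the ray from $q$ in the direction from $v$ to $q$. I would split these generators by the fate of this extension bisector: (i) if $B(\alpha,\alpha')$ appears as an extension bisector edge in $\spm'(s)$, I charge $\alpha'$ to this edge, and the proof of Lemma~\ref{lem:sizespm} already bounds the number of Case~(a) extension bisector edges in $\spm'(s)$ by $O(h)$ via the planar-graph argument on common tangents between obstacles; (ii) if $\alpha'$ survives long enough for $B(\alpha,\alpha')$ to cross some transparent edge $e$ but this crossing does not appear in $\spm'(s)$, then $(e,B(\alpha,\alpha'))$ belongs to the set $\Pi$ of Lemma~\ref{lem:pairs}, so there are at most $O(h)$ such $\alpha'$; (iii) if $\alpha'$ is eliminated at a bisector event before $B(\alpha,\alpha')$ ever crosses a transparent edge, then this event is processed during the sweep-line propagation of some $W(e)$ toward $output(e)$, and it participates in a bisector event attributed to that propagation, yielding a Rule~3(b) mark already counted by Lemma~\ref{lem:rule3b} (which is $O(h)$). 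Summing across all four classes gives $O(h)$ total Rule~1 marks.

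The main obstacle I expect is case~(iii) above, because a transient $\alpha'$ that is created and eliminated inside a single cell's propagation is not literally ``a generator of $W(e)$'' as phrased in Rule~3(b). My fallback plan is to give a supplementary charging argument in the spirit of Lemma~\ref{lem:pairs}: while alive, the wavelet of $\alpha'$ sweeps out a region in the cell whose boundary consists of $B(\alpha,\alpha')$, the new right bisector with $\alpha'$'s other neighbor, and a fragment of the convex chain containing $q$; this region must enclose at least one true vertex of $\spm'(s)$ (otherwise $\alpha'$ would continue to be claimed beyond the cell), and since every cell of $\calS'$ lies in only $O(1)$ well-covering regions, each vertex of $\spm'(s)$ is charged $O(1)$ times, so Corollary~\ref{coro:complexity} yields the $O(h)$ bound for this case as well.
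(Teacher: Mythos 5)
Your decomposition by creation mechanism is a genuinely different route from the paper, which instead argues locally per subcell: on each convex chain $\zeta_l$ (resp.\ $\zeta_r$) of each subcell, at most one generator creation is ``free,'' and every other creation is charged to a bisector event occurring during that subcell's propagation, each of which yields a Rule~3(b)/Rule~4 mark; summing over the $O(1)$ subcells involved in each propagation and the $O(h)$ transparent edges gives $O(h)$. This local charging avoids ever needing to say that the newly created generator \emph{itself} is in $W(e)$.

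Your case~(iii) is the genuine gap, and you spotted it yourself, but neither your primary argument nor your fallback closes it. The primary argument fails because a transient $\alpha'$ is not a generator of $W(e)$, so Rule~3(b) does not apply to it (it only applies to the generators of $W(e)$ that $\alpha'$ collides with). The fallback asserts that the region swept out by a transient $\alpha'$ must contain a vertex of $\spm'(s)$, but this is not justified: the approximate wavefront in a cell can overrun territory that in $\spm'(s)$ is claimed from the other side of some transparent edge, so the entire swept region may sit on the wrong side of a wall of $\spm'(s)$ and contain none of its vertices; establishing a claim of this flavor is precisely the content of the pseudo-triangle argument in Lemma~\ref{lem:pairs}, and it requires the same level of care. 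There are also smaller soft spots: your three cases for Case~(a) generators are not obviously exhaustive (a surviving $\alpha'$ whose extension bisector hits an obstacle edge inside the cell before crossing any transparent edge, with this intersection not realized in $\spm'(s)$, falls outside (i)--(iii)), and case~(i) needs an explicit multiplicity bound — the same extension bisector edge of $\spm'(s)$ can be rediscovered once per propagation through the cell containing the initial vertex, which is $O(1)$ times because each cell lies in $O(1)$ well-covering regions, but this should be stated. So the decomposition is a plausible alternative framework, but the hard part (bounding transient creations) is exactly where the argument still has a hole that the paper's subcell-local charging sidesteps.
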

\begin{proof}
According to our algorithm, generators are only created during the wavefront propagation procedure, which is to propagate $W(e)$ to compute $W(e,g)$ for all edges $g\in output(e)$ through $\calU$, where $\calU$ is $\calU(e)$ or $\calU(g)$. The procedure has two cases depending on whether a cell $c$ of $\calU$ is an empty rectangle. If yes, then no generators will be created in $c$. Otherwise, $c$ may be partitioned into $O(1)$ subcells, each of which may have a convex chain on its left side and/or its right side. Let $c$ be such a subcell. Without loss of generality, we assume that the current wavefront $W$ is propagating in $c$ from bottom to top. We consider the generators created on the left side $\zeta_l$ of $c$.

According to our algorithm, a generator on $\zeta_l$ is created by the leftmost generator of the current wavefront $W$. As such, if the leftmost wavelet of $W$ does not create a generator, then no generator on $\zeta_l$ will be created. Otherwise, let $\alpha$ be a generator created on $\zeta_l$, which becomes the leftmost generator of $W$ at the point of creation. Let $\alpha'$ be the right neighbor of $\alpha$ in $W$. According to our algorithm, if $\alpha$ does not involve in any bisector event in $c$, i.e., the bisector $B(\alpha,\alpha')$ does not intersect any other bisector in $c$ during the propagation of $W$ in $c$, then no more new generators will be created on $\zeta_l$. Otherwise, we charge the creation of $\alpha$ to the bisector event involving $B(\alpha,\alpha')$; each such event can be charged at most twice (one for a generator on $\zeta_l$ and the other for a generator created on the right side of $c$). Recall that for each bisector event a generator is marked by Rule~3(b).

In light of the above discussion, since each cell of $\calS'$ belongs to the well-covering region $\calU(e')$ of $O(1)$ transparent edges $e'$, the total number of generators marked by Rule~1 is $O(h)+O(k)$, where $k$ is the total number of generators marked by Rule~3(b), which is $O(h)$ by Lemma~\ref{lem:rule3b}. The lemma thus follows.
\end{proof}

\begin{lemma}\label{lem:rule4}
The total number of marked generators by Rule 4 is $O(h)$.
\end{lemma}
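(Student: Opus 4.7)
The proof plan mirrors the charging strategy used for Rule~3(b) in Lemma~\ref{lem:rule3b}, with Lemma~\ref{lem:pairs} and Lemma~\ref{lem:sizespm} as the main ingredients. Suppose $\alpha$ is a generator marked by Rule~4 for a transparent edge $e$. Then $\alpha\in W(e)$ and, as $W(e)$ is propagated through the region $D(e)$ (the union of cells the propagation passes through), the wavelet of $\alpha$ claims a contiguous portion $\gamma$ of some convex chain $\zeta$ lying on the boundary of a cell $c\subseteq D(e)$. I want to charge each such $(\alpha,e)$ either to a pair counted by Lemma~\ref{lem:pairs} or to a vertex of $\spm'(s)$ inside $D(e)$.

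First I dispose of the easy cases. If $\alpha$ is the first or last non-artificial generator of $W(e)$, then there are only $O(1)$ such generators per transparent edge, and since $\calS'$ has $O(h)$ transparent edges this contributes $O(h)$ markings overall. Otherwise $\alpha$ has both a left neighbor $\alpha_l$ and a right neighbor $\alpha_r$ in $W(e)$, and each of the two endpoints of $\gamma$ on $\zeta$ is either (a) a bisector-obstacle intersection of $B(\alpha_l,\alpha)$ or $B(\alpha,\alpha_r)$ with $\zeta$ detected inside $D(e)$, or (b) a point on $\partial D(e)$ (when $\gamma$ enters or exits $c$ along the chain $\zeta$). Since each cell of $D(e)$ has $O(1)$ convex chains on its boundary meeting $\partial D(e)$ in $O(1)$ points, type~(b) endpoints account for at most $O(1)$ markings per cell; because each cell of $\calS'$ belongs to $D(e')$ for $O(1)$ transparent edges $e'$, the total number of type-(b) markings is $O(h)$.

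For type-(a) endpoints I carry out the dichotomy from the proof of Lemma~\ref{lem:rule3b}. Using Lemma~\ref{lem:pairs}, at most $O(h)$ pairs $(e,B)$ have $B$ crossing $e$ in the approximate wavefront $W(e)$ but not in $\spm'(s)$; generators whose bounding bisector falls into such a pair are charged to the $O(h)$ pairs of Lemma~\ref{lem:pairs}. Otherwise, both bounding bisectors $B(\alpha_l,\alpha)$ and $B(\alpha,\alpha_r)$ cross $e$ in $\spm'(s)$. The bisector-obstacle intersection event detected by the algorithm on, say, $B(\alpha_l,\alpha)$ then guarantees that in $\spm'(s)$ the bisector $B(\alpha_l,\alpha)$ must terminate no later than this detected event and inside $D(e)$, at some vertex $v$ of $\spm'(s)$ (either a true bisector-obstacle vertex on $\zeta$, or an earlier bisector-bisector vertex that truncates $B(\alpha_l,\alpha)$). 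I charge the marking of $\alpha$ to $v$. By Lemma~\ref{lem:sizespm}, $\spm'(s)$ has $O(h)$ vertices, and since each cell of $\calS'$ lies in $D(e')$ for only $O(1)$ transparent edges $e'$, each such vertex receives $O(1)$ charges. Combining all cases yields $O(h)$ Rule~4 markings in total.

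The main obstacle is ensuring that every bisector-obstacle intersection event detected during the algorithm that does not correspond to an actual bisector-obstacle intersection in $\spm'(s)$ can still be charged cheaply; this is exactly where Lemma~\ref{lem:pairs} is used in tandem with the observation that once both bounding bisectors of $\alpha$ genuinely appear at $e$ in $\spm'(s)$, any premature truncation must land at a genuine vertex of $\spm'(s)$ inside $D(e)$.
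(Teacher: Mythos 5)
Your overall strategy is the right one—bound the easy boundary cases and then use Lemma~\ref{lem:pairs} together with the $O(h)$ bound on $\spm'(s)$ vertices to charge the remaining markings to vertices of $\spm'(s)$ inside $D(e)$. That charging step matches the paper's final case. But your case decomposition, based on classifying the two endpoints of the claimed obstacle fragment $\gamma$ as ``bisector-obstacle intersection'' versus ``point on $\partial D(e)$,'' misses two situations that the paper handles explicitly and that do not fit your dichotomy.

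First, newly created generators. During the wavefront propagation across a non-rectangular cell, new generators are spawned at tangencies with the convex chains, and such a generator $\alpha$ can immediately claim part of an obstacle edge and be marked by Rule~4. A newly created $\alpha$ is not in the original generator list $W(e)$ at all, so it does not have ``both a left neighbor $\alpha_l$ and a right neighbor $\alpha_r$ in $W(e)$'' as your argument assumes, and your classification never reaches it. The paper disposes of this case by charging those markings to Rule~1 marks (the generator's initial vertex lies in the cell), which are $O(h)$ by Lemma~\ref{lem:rule1}. Second, claims at rectilinear extreme vertices. The portion $\gamma$ may terminate at a rectilinear extreme vertex because the elementary chain $\zeta$ itself ends there. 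That endpoint is neither a bisector-obstacle intersection nor, in general, a point of $\partial D(e)$; it is an interior seam of the obstacle boundary where the chain changes and possibly a new generator is born. Your type-(a)/type-(b) split leaves this endpoint unclassified, and consequently you cannot complete the charging argument for such $\alpha$. The paper handles it with a direct count: $D(e)$ has $O(1)$ cells, each contains at most one rectilinear extreme vertex, so $O(1)$ generators per $D(e)$ fall into this subcase and $O(h)$ overall. A minor additional imprecision: your ``type-(b)'' endpoints are described as lying on $\partial D(e)$, but the chain $\zeta$ lives on $\partial c$ for a cell $c$ that may be interior to $D(e)$, so the endpoint of $\gamma$ where $\zeta$ meets a transparent edge is on $\partial c$, not necessarily on $\partial D(e)$; the bound you want still holds because $D(e)$ has $O(1)$ cells each with $O(1)$ convex chains, but the stated justification is slightly off.
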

\begin{proof}
Let $\alpha$ be the generator and $e$ be the transparent edge specified in the rule statement. According to Rule~4, $\alpha$ is marked because it claims part of an obstacle edge during the wavefront propagation procedure for propagating $W(e)$ to compute $W(e,g)$ for edges $g\in output(e)$.
As in the proof of Lemma~\ref{lem:rule3b}, define $D(e)$ as the union of cells through which $W(e)$ is propagated to all edges $g\in output(e)$. Recall that $D(e)$ contains $O(1)$ cells of $\calS'$, and each cell of $D(e)$ is within a constant number of cells of $e$, which implies that each cell of $\calS'$ belongs to $D(e')$ for a constant number of transparent edges $e'$ of $\calS'$.

First of all, for the case where $\alpha$ is a generator created during the wavefront propagation procedure, the total number of such marked generators is no more than the total number of marked generators by Rule~1, which is $O(h)$ by Lemma~\ref{lem:rule1}.

Second, for the case where $\alpha$ is the first or last generator in $W(e)$, the total number of such generators is clearly $O(h)$ since $\calS'$ has $O(h)$ transparent edges $e$.

Third, for the case where $\alpha$ claims a rectilinear extreme vertex, the total number of such marked generators is $O(h)$. To see this, since $D(e)$ has $O(1)$ cells of $\calS'$ and each cell contains at most one rectilinear extreme vertex, $D(e)$ contains $O(1)$ rectilinear extreme vertices. Therefore, at most $O(1)$ generators will be marked in $D(e)$. Since each cell of $\calS'$ belongs to $D(e')$ for $O(1)$ transparent edges $e'$ of $\calS'$ and $\calS'$ contains $O(h)$ transparent edges $e$, the total number of marked generators in this case is $O(h)$.

Finally, any Rule~4 marked generator $\alpha$ that does not belong to any of the above cases has the following property: $\alpha$ does not claim a rectilinear extreme vertex, and is not the first or last non-artificial generator in $W(e)$, and is not a generator created in $D(e)$. Let $\alpha'$ be a neighbor of $\alpha$ in $W(e)$. Due to the above property, $\alpha'$ is a non-artificial generator. We can assume that $B(\alpha',\alpha)$ intersects $e$ in $\spm'(s)$; by Lemma~\ref{lem:pairs}, there are only $O(h)$ bisector and transparent edge intersections that appear in some approximate wavefront but not in $\spm'(s)$. Hence, in $\spm'(s)$, $B(\alpha',\alpha)$ terminates in $D(e)$, either on an obstacle edge or in a bisector event before reaching an obstacle edge. In either case, we charge the mark of $\alpha$ at $e$ to this endpoint of $B(\alpha',\alpha)$ in $\spm'(s)$, which is vertex of $\spm'(s)$. Because each cell of $\calS'$ belongs to $D(e')$ for a constant number of transparent edges $e'$ of $\calS'$, each vertex of $\spm'(s)$ is charged at most $O(1)$ times. As $\spm'(s)$ has $O(h)$ vertices by Lemma~\ref{lem:sizespm}, the total number of marked generators by Rule~4 is $O(h)$.
\end{proof}

\begin{lemma}
The total number of marked generators by Rule 2(b) is $O(h)$.
\end{lemma}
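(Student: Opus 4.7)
The plan is to partition Rule~2(b) marks into a \emph{shortened-only} case and an \emph{eliminated} case, and bound each by $O(h)$. First I will bound the total number of artificial wavelets created by the algorithm: since artificial wavelets are inserted only during wavefront merging, with $O(1)$ of them per endpoint of each transparent edge on each side, and $\calS'$ has $O(h)$ transparent edges by Lemma~\ref{lem:subalgo}, the total number of artificial wavelets inserted throughout the entire algorithm is $O(h)$.

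For the shortened-only case, fix an artificial wavelet $\beta$ placed at an endpoint $v$ of a transparent edge $e$ on a side $S$. Since $\beta$ emanates from $v$, its claim on $e$ is a single interval anchored at $v$, so in the merged side-$S$ wavefront at $e$ the wavelet $\beta$ sits at one end of the generator list. Consequently at most one non-artificial generator, namely the list-neighbor of $\beta$, can have its claim shortened without being fully eliminated by $\beta$; every other generator whose claim overlaps $\beta$'s claim is entirely covered and therefore eliminated. Summing the constant contribution across all $O(h)$ artificial wavelets yields $O(h)$ shortened-only Rule~2(b) marks.

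For the eliminated case, suppose a generator $\alpha$'s entire claim in $W(e)$ on side $S$ is dominated by an artificial wavelet during the merging that produces $W(e)$. Then $\alpha$ entered this merging as a generator of some contributing wavefront $W(f,e)$ with $f\in input(e)$, and therefore $e\in output(f)$. Applying Rule~3(b) with its generic roles ``$e$'' and ``$g$'' instantiated respectively as our $f$ and our $e$, the explicit clause of Rule~3(b) that an artificial-wavelet shortening or elimination at the output edge counts as participation in a bisector event immediately yields a Rule~3(b) mark of $\alpha$ at the cells incident to $f$. The assignment sending each eliminated-case Rule~2(b) mark, indexed by $(\alpha,e,S)$ together with a witnessing $f$, to the corresponding Rule~3(b) mark indexed by $(\alpha,f,e)$ is injective, since $\alpha$ together with $e$ already determines $S$ (the side containing $\alpha$'s initial vertex). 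By Lemma~\ref{lem:rule3b}, the total number of Rule~3(b) marks is $O(h)$, so the number of eliminated-case Rule~2(b) marks is also $O(h)$. Combining the two cases completes the $O(h)$ bound.

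The main subtlety is the eliminated case: one must verify that the Rule~3(b) wording does subsume artificial-wavelet elimination during merging at $e$ (it does, via its explicit ``Note that'' sentence), and that the injection into Rule~3(b) marks really is well-defined. Both are routine bookkeeping once one observes that Rule~2(b) and Rule~3(b) marks are indexed by disjoint coordinates (the cell adjacent to $e$ on $\alpha$'s side versus the cells adjacent to $f$), and that distinct triples $(\alpha,e,S)$ produce distinct triples $(\alpha,f,e)$ after a witnessing $f$ is fixed for each event.
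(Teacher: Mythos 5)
There is a genuine gap in the eliminated case. You apply Rule~3(b) with the generic edges instantiated as $(f,e)$ for some $f\in input(e)$, but Rule~3(b) as stated marks ``a generator $\alpha$ of $W(e)$'' — in your instantiation, a generator of $W(f)$. You only establish that $\alpha$ is a generator of some $W(f,e)$, which is weaker: a generator may be \emph{created} inside $\calU(e)$ during the propagation that computes $W(f,e)$ from $W(f)$ (when that propagation crosses a cell whose boundary contains convex chains), so that $\alpha\in W(f,e)$ but $\alpha\notin W(f)$. For such $\alpha$, no Rule~3(b) mark at $f$ exists, and your charging map is undefined. This is exactly the case the paper isolates first: if $\alpha$'s initial vertex lies in $\calU(e)$, then $\alpha$ carries a Rule~1 mark in one of the $O(1)$ cells of $\calU(e)$; since each cell of $\calS'$ lies in $\calU(e')$ for only $O(1)$ edges $e'$ and there are $O(h)$ Rule~1 marks (Lemma~\ref{lem:rule1}), this subcase contributes $O(h)$. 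Only once $\alpha$ is known to lie outside $\calU(e)$ can one conclude $\alpha\in W(f)$ and invoke Rule~3(b) as you do. You need to add this case split.

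Separately, your shortened-versus-eliminated partition is along the wrong axis and the shortened-only branch is superfluous: the ``Note'' clause in Rule~3(b) treats shortening and elimination by an artificial wavelet identically, so once the inside-$\calU(e)$ case is handled by Rule~1, the Rule~3(b) charge covers shortened and eliminated generators alike. The counting-artificial-wavelets argument, while not wrong (the claim of an artificial wavelet placed at an endpoint $v$ of $e$ is indeed an interval anchored at $v$, so it properly shortens at most one generator per side), does not buy you anything and does not substitute for handling the inside-$\calU(e)$ case.
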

\begin{proof}
Let $\alpha$ be the generator and $e$ be the transparent edge specified in the rule statement.

First of all, the total number of marked generators in the case where $\alpha$ is in $\calU(e)$ is $O(h)$, because the number of Rule~1 marked generators is $O(h)$ by Lemma~\ref{lem:rule1} and $\calU(e)$ has $O(1)$ cells. In the following, we consider the other case where $\alpha$ is outside $\calU(e)$.

Since $\alpha$ is outside $\calU(e)$, there is a transparent edge $f$ on $\partial \calU(e)$ (i.e., $f\in input(e)$), such that $\alpha$ is in $W(f)$ and it is in $W(e)$ because $W(f)$ is propagated to $e$ through $\calU(e)$. Since $\alpha$'s claim is shortened or eliminated by an artificial wavefront, there must be a bisector event involving $\alpha$ during the computation of $W(f,e)$ from $W(f)$. Hence, $\alpha$ is also marked by Rule~3(b). We charge $\alpha$ to this Rule~3(b) mark for $f$. Since there are $O(h)$ generators marked by Rule~3(b) by Lemma~\ref{lem:rule3b}, the total number of marked generators of Rule~2(b) is $O(h)$.
\end{proof}

\subsection{Computing the shortest path map $\spm(s)$}
\label{sec:spm}

In this section, we compute the shortest path map $\spm(s)$. To this end, we need to mark generators following our rules during our wavefront merging and propagation procedures. As the total number of all marked generators is $O(h)$, marking these generators do not change the running time of our algorithm asymptotically. In the following, we show that $\spm(s)$ can be constructed in $O(n+h\log h)$ time with the help of the marked generators. In light of Lemma~\ref{lem:spmconstruction}, we will focus on constructing $\spm'(s)$.

We first show in Section~\ref{sec:markenough} that the marked generators are sufficient in the sense that if a generator participates in a true bisector event of $\spm'(s)$ in a cell $c$ of $\calS'$, then $\alpha$ must be marked in $c$. Then, we present the algorithm for constructing $\spm'(s)$, which consists of two main steps.
The first main step is to identify all vertices of $\spm'(s)$ and the second one is to compute the edges of $\spm'(s)$ and assemble them to obtain $\spm'(s)$. The first main step is described in Section~\ref{sec:first} while the second one is discussed in Section~\ref{sec:second}.

\subsubsection{Correctness of the marked generators}
\label{sec:markenough}

In this section we show that if a generator participates in a bisector event of $\spm'(s)$ in a cell $c$ of $\calS'$, then $\alpha$ must be marked in $c$. Our algorithm in the next section for computing $\spm'(s)$ relies on this property. Our proof strategy again follows the high-level structure of the HS algorithm. We first prove the following lemma.

\begin{lemma}\label{lem:markcell}
Let $\alpha$ be a generator in an approximate wavefront $W(e)$ for some transparent edge $e$. Suppose there is a point $p\in e$ that is claimed by $\alpha$ in $W(e)$ but not in $\spm'(s)$ (because the approximate wavefront from the other side of $e$ reaches $p$ first). Then, $\alpha$ is marked in the cell $c$ on $\alpha$'s side of $e$.
\end{lemma}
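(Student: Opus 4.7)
My plan is to argue by contrapositive: supposing $\alpha$ is not marked in $c$ by any of Rules~1--4, I will derive a contradiction with the hypothesis. Let $\beta$ denote the generator claiming $p$ in $\spm'(s)$; by hypothesis $\beta$ lies on the side of $e$ opposite to $\alpha$. Rule~1 failing says the initial vertex of $\alpha$ lies outside $c$, so $\alpha$ entered $W(e)$ by being propagated from some $f\in input(e)$ on $\alpha$'s side. Rules~2(a) and~2(b) failing say that $\alpha$ neither claims an endpoint of $e$ nor has its interval on $e$ shortened or eliminated by an artificial wavelet. Rules~3(a),~3(b), and~4 failing say that during the merging at $e$ and the propagation of $W(e)$ to every $g\in output(e)$, $\alpha$ is not involved in any bisector-bisector or bisector-obstacle event, does not claim any endpoint of such a $g$, and does not touch any obstacle edge. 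Consequently both bounding bisectors of $\alpha$ in $W(e)$ survive cleanly all the way from $e$ out to the cycle of edges of $output(e)$ on both sides of $e$.

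Next I would exploit the cycle property emphasized in the footnote about the definition of $output(e)$: these edges form a topological cycle that encloses $e$, and the associated region $\calU(e)\cup\bigcup_{g\in output(e)}\calU(g)$ covers a neighborhood of $e$ on both sides. Because the well-covering property gives the factor-two slack $d(e,f)\geq 2\max\{|e|,|f|\}$ for each $f$ on the boundary of $\calU(e)$, and because the artificial wavelets at the endpoints of $e$ are seeded with the true distance $d(s,v)$, the approximate wavefront $W(e)$ on $\alpha$'s side faithfully represents the true wavefront inside this enclosed region for paths that do not wrap across $e$. Since $\beta$ lies on the opposite side and strictly beats $\alpha$ at $p$ in $\spm'(s)$, the true shortest path $\pi(s,p)$ must bring opposite-side information into the enclosed region in one of two ways: either it crosses the opposite-side portion of $e$ itself, or it enters through some non-$e$ edge $g^*\in output(e)$ on $\beta$'s side. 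In the first case, the opposite side must already have yielded a smaller $d(s,v)$ at some endpoint $v$ of $e$, so the artificial wavelet seeded at $v$ in $W(e)$ would shorten $\alpha$'s claim on $e$, triggering Rule~2(b). In the second case, extending $\alpha$'s surviving bounding bisector to $g^*$ would force it to meet a rival bisector carrying a descendant of $\beta$, producing a bisector event that triggers Rule~3(b) (or Rule~4 if it first meets an obstacle, or Rule~3(a) if the crossing lands exactly on an endpoint). In every subcase $\alpha$ would be marked in $c$, contradicting the assumption.

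The hard part will be making the enclosure argument quantitatively precise: I must show that $\alpha$'s bounding bisectors really do reach the cycle of $output(e)$ edges during the propagation of $W(e)$, so that no opposite-side path dominating $p$ can avoid leaving a witness event inside $\calU(e)\cup\bigcup_g\calU(g)$ that one of the rules catches. The factor-two separation in the well-covering property is what enables the temporal comparison behind this claim, and the artificial-wavelet mechanism is what allows cross-side distance information at endpoints of $e$ to be detected by the merging procedure. Separating and verifying the two subcases above (crossing $e$ itself versus entering through a non-$e$ edge of $output(e)$), using the cycle topology of $output(e)$ to guarantee that one of them must occur, will mirror the structure of the analogous lemma in the HS algorithm and is where the proof becomes technical.
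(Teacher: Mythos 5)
Your overall strategy matches the paper's: argue by contradiction, observe that if $\alpha$ escapes all the marking rules then its two bounding bisectors propagate undisturbed, and then show that the ``opposite-side'' information that defeats $\alpha$ at $p$ must leave a witness event. You also correctly single out the two technical ingredients the paper uses: the cycle structure of $output(e)$ (the reason $input(e)\subseteq output(e)$, as the paper's footnote notes) and the factor-two well-covering slack.

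Where your sketch diverges --- and where the gaps are --- is in the execution of the case analysis. The paper's proof does not merely say the bisectors ``survive out to the cycle on both sides of $e$''; it makes the sharper observation that if $\alpha$ is unmarked, the two bisectors $B(\alpha_1,\alpha)$ and $B(\alpha,\alpha_2)$ must exit $\calU(e)$ through \emph{the same} transparent edge $g\in\partial\calU(e)$ (otherwise the claimed segment on $\partial\calU(e)$ contains an endpoint, triggering Rule 3, or hits an obstacle edge, triggering Rule 4). This gives a concrete bounded region $R\subseteq\calU(e)$ enclosed by $e$, $g$, and the two bisectors, and the remaining work is to locate where the path $\pi'(a,p)$ from the true predecessor's initial vertex $a$ to $p$ meets $\partial R$. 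Your dichotomy (``$\pi(s,p)$ crosses the opposite-side portion of $e$'' versus ``enters through a non-$e$ edge $g^*$'') does not map onto this. In particular, your first case is geometrically suspect: a shortest path terminating at $p\in e$ does not cross $e$ elsewhere and come back, so it is unclear what scenario triggers Rule 2(b) here. The paper never invokes Rule 2(b) in this proof; it uses Rule 3(b) (and Rule 4) throughout. Your second case contains the real gap: ``extending $\alpha$'s surviving bounding bisector to $g^*$ would force it to meet a rival bisector'' asserts the conclusion rather than proving it. The paper's actual argument is finer: when $\pi'(a,p)$ crosses, say, $B(\alpha_1,\alpha)$ at a point $q$, then $q$ has $\alpha'$ as its predecessor (being on a shortest path via $\alpha'$), so $q$ is not claimed by $\alpha$, meaning the algorithm's computed bisector terminated before $q$ --- that is, a bisector event \emph{was} detected, triggering Rule 3(b). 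No ``rival bisector carried in from $g^*$'' is needed; the witness lives on $\alpha$'s own bounding bisector. Finally, you omit the case the paper handles separately: when $\alpha'$'s initial vertex $a$ lies inside $R$ itself, there is no boundary crossing at all, yet $a\in R$ implies $\alpha$ cannot claim all of $R$, so a bisector event inside $R$ must be detected during propagation. Without the region $R$ and the analysis of where $\pi'(a,p)$ meets $\partial R$, the ``hard part'' you flag at the end is not just quantitative polishing --- it is the core of the argument.
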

\begin{proof}
Assume to the contrary that $\alpha$ is not marked in $c$. Then, $\alpha$ must have two neighboring non-artificial generators $\alpha_1$ and $\alpha_2$ in $W(e)$ since otherwise Rule~2 would apply.
Also, because all transparent edges of $\partial \calU(e)$ are in $output(e)$, the two bisectors $B(\alpha_1,\alpha)$ and $B(\alpha,\alpha_2)$ must exist the well-covering region $\calU(e)$ of $e$ through the same transparent edge $g\in \partial \calU(e)$, since otherwise Rule~3 or 4 would apply.
Further, the region $R$ bounded by $B(\alpha_1,\alpha)$, $B(\alpha,\alpha_2)$, $e$, and $g$ must be a subset of $\calU(e)$. Indeed, if $R$ contains an island not in $\calU(e)$, then $\alpha$ would claim an endpoint of a boundary edge of the island, in which case Rule~3 would apply.

Let $\alpha'=(A,a)$ be the true predecessor of $p$ in $\spm'(s)$.
Without loss of generality, we assume that $e$ is horizontal and
$\alpha$ is below $e$ and thus $\alpha'$ is above $e$.
Let $\pi'(a,p)$ be the path from
$p$ along its tangent to $A$ and then following $A$ to $a$.

We first consider the case where $\alpha'$ is not in the interior of
the well-covering region $\calU(e)$, i.e., the initial vertex $a$ is
not in the interior of $\calU(e)$. Since $R$ is a subset of of
$\calU(e)$ without non-$\calU(e)$ islands, $a$ is not in the
interior of $R$ and thus $\pi'(a,p)$ intersects $\partial R$, say,
at a point $q$. In the following, we argue that $\alpha$ has been
involved in a bisector event detected by our algorithm, and thus
marked in $c$. Let $g'$ be the portion of $g$ between its intersections with
$B(\alpha_1,\alpha)$ and $B(\alpha,\alpha_2)$. Depending on whether
$q\in g'$, there are two subcases.

\begin{itemize}
\item
If $q\in g'$ (e.g., see Fig.~\ref{fig:markcell}), then $\td(s,g)\leq d(s,a)+|\pi'(a,q)|+|g|/2$, where
$\pi'(a,q)$ is the sub-path of $\pi'(a,p)$ between $a$ and $q$. Hence,
the time $\tau_g$ when artificial wavefronts originating from the
endpoints of $g$ cover $g$ is no later than $\td(s,g)+|g|\leq
d(s,a)+|\pi'(a,q)|+3|g|/2$. Because $g\in \partial \calU(e)$,
$|\overline{pq}|\geq 2\cdot \max\{|g|,|e|\}$. Hence, $\tau_g\leq
d(s,a)+|\pi'(a,q)|+3|g|/2<
d(s,a)+|\pi'(a,q)|+|\overline{qp}|=d(s,a)+|\pi'(a,p)|<d_{W(e)}(s,p)$, where
$d_{W(e)}(s,p)$ is the length of the path from $s$ to $p$ following the
wavefront $W(e)$; the last inequality holds because $\alpha'$ is the
true predecessor of $p$ while $\alpha$ is not.

\begin{figure}[t]
\begin{minipage}[t]{0.49\textwidth}
\begin{center}
\includegraphics[height=2.1in]{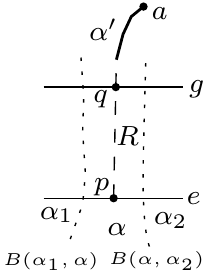}
\caption{\footnotesize Illustrating the case where $q\in g'$.}
\label{fig:markcell}
\end{center}
\end{minipage}
\hspace{0.02in}
\begin{minipage}[t]{0.49\textwidth}
\begin{center}
\includegraphics[height=2.1in]{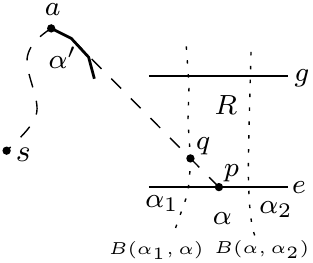}
\caption{\footnotesize Illustrating the case where $q\in B(\alpha_1,\alpha)$.}
\label{fig:markcell10}
\end{center}
\end{minipage}
\vspace{-0.15in}
\end{figure}

On the other hand, the time $\tau_e$ when the wavefront $W(e)$
reaches an endpoint of $e$ cannot be earlier than $d_{W(e)}(s,p)-|e|$. Hence,
the wavelet from $\alpha$ cannot reach $g$ earlier than
$d_{W(e)}(s,p)-|e|+d(e,g)\geq  d_{W(e)}(s,p)-|e|+2|e|\geq  d_{W(e)}(s,p)+|e|$,
which is larger than $\tau_g$ since $\tau_g<d_{W(e)}(s,p)$ as proved
above. Therefore, by the time the wavelet from
$\alpha$ reaches $g$, the artificial wavelets of $g$ have already
covered $g$, eliminating the wavelet from $\alpha$ from reaching $g$.
Thus, $\alpha$ must be marked by Rule~3(b).

\item

If $q\not\in g'$, then $q$ is on one of the bisectors
$B(\alpha_1,\alpha)$ and $B(\alpha,\alpha_2)$. Let $\pi(s,a)$ be a shortest path from $s$ to $a$ and let $\pi(s,p)=\pi(s,a)\cup \pi'(a,p)$, which is a shortest path from $s$ to $p$.
If $\pi(s,p)$ intersects $g$, then we can use the same analysis as above to show that the wavelet from $\alpha$ will be eliminated from reaching $g$ by the artificial wavelets of $g$ and thus $\alpha$ must be marked by Rule~3(b).
In the following, we assume that $\pi(s,p)$ does not intersect $g$.

Without of loss of generality, we assume that $q\in B(\alpha_1,\alpha)$ (e.g., see Fig.~\ref{fig:markcell10}). Since
$\pi'(a,p)$ is a subpath of the shortest path $\pi(s,p)$, every point of
$\pi'(a,p)$ has $\alpha'$ as its predecessor. As $q\in
\pi'(a,p)$, the predecessor of $q$ is $\alpha'$.
Let $\pi(s,q)$ be the subpath of $\pi(s,p)$ between $s$ and $q$, which is a shortest path. Since $\pi(s,p)$ does not intersect $g$, $\pi(s,q)$ does not intersect $g$. Hence, $\pi(s,q)$ is a shortest path in the modified free space $\calF'$ by replacing $g$ with an opaque edge of open endpoints. Therefore, during the wavefront propagation procedure for computing $W(e,g)$ or during the wavefront merging procedure for computing $W(g)$, the point $q$, which is on the bisector $B(\alpha_1,\alpha)$, must be claimed by $\alpha'$. Hence, a bisector event must be detected for
$B(\alpha_1,\alpha)$ during the computation of $W(e,g)$ or $W(g)$. In
either case, $\alpha$ must be marked by Rule~3(b).
\end{itemize}

We next consider the case where $\alpha'$ lies inside $\calU(e)$, i.e., its initial vertex $a$ is inside $\calU(e)$.
If $a$ is not between the two bisectors $B(\alpha_1,\alpha)$ and $B(\alpha,\alpha_2)$, then $\pi'(a,p)$ must intersect one of the bisectors and thus we can use a similar argument as above second case to show that $\alpha$ must be marked. Otherwise, $a$ is in the region $R$.
This implies that not all points in $R$ are claimed by $\alpha$ when $W(e)$ is propagating to $g$, and therefore, a bisector event involving $\alpha$ must happen during the wavefront propagation procedure to propagate $W(e)$ to compute $W(e,g)$ and thus $\alpha$ is marked by Rule~3(b).
\end{proof}

With the help of the preceding lemma, we prove the following lemma.

\begin{lemma}\label{lem:markcorrect}
If a generator $\alpha$ participates in a bisector event of $\spm'(s)$ in a cell $c$ of $\calS'$, then $\alpha$ must be marked in $c$.
\end{lemma}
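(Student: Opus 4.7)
The plan is to argue by contradiction, paralleling the strategy of the analogous lemma in the HS algorithm. Suppose $\alpha$ participates in a bisector event $v$ of $\spm'(s)$ in cell $c$ of $\calS'$, yet $\alpha$ is not marked in $c$. Rule~1 immediately handles the case where the initial vertex of $\alpha$ lies in $c$, so we may assume it lies outside $c$. Any shortest path from $s$ to $v$ through $\alpha$ must then enter $c$ across some transparent edge $e\subseteq \partial c$ at a point $q$ (opaque edges block paths), and $c$ lies in the region $D(e)\subseteq \calU(e)\cup \bigcup_{g\in output(e)}\calU(g)$ processed by the wavefront propagation procedure starting from $W(e)$.

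The first main step is to reduce to the configuration where $\alpha\in W(e)$ and $\alpha$ claims $q$ in $W(e)$ on the side matching the shortest path. If instead some other generator of $W(e)$ ``steals'' the claim at $q$ (via an artificial wavelet, blocking from the opposite side of $e$, or an earlier elimination of $\alpha$), then Lemma~\ref{lem:markcell} applied to that alternative generator gives a mark on $\alpha$'s side of $e$, but not directly of $\alpha$. To obtain a mark of $\alpha$ itself, we walk backward along the shortest $s$-to-$v$ path through $\alpha$, replacing $e$ with the preceding transparent edge crossed by the path, and iterate the analysis. The iteration terminates either at the initial vertex of $\alpha$ (where Rule~1 applies) or at some transparent edge where the approximate wavefront agrees with $\spm'(s)$ near $\alpha$'s claim, reducing us to the base configuration.

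In the base configuration, the algorithm propagates $\alpha$'s wavelet from $e$ into $c$ within $D(e)$. Because $\spm'(s)$ has a real bisector event at $v$ inside $c$ involving $\alpha$, and because bisectors in $c$ are $y$-monotone (Corollary~\ref{coro:monotone}), $\alpha$'s approximate wavelet cannot survive past $c$ unchallenged: it must terminate inside $D(e)$ by a detected bisector event, either a bisector-bisector intersection (triggering Rule~3(b)) or a collision with an obstacle edge (triggering Rule~4). Either rule marks $\alpha$ in both cells incident to $e$, one of which is $c$, contradicting the hypothesis.

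The main obstacle is the first step: carefully controlling the mismatch between the approximate wavefront $W(e)$ and the true structure of $\spm'(s)$ near $\alpha$'s zone, so that the iterated application of Lemma~\ref{lem:markcell} eventually bottoms out while still delivering a mark of $\alpha$ itself (rather than of whatever generator displaced it). The propagation step that follows, by contrast, is largely mechanical given the $y$-monotonicity provided by Corollary~\ref{coro:monotone} and the constant size of $D(e)$.
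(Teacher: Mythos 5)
Your proposal diverges significantly from the paper's argument, and two of its load-bearing steps contain genuine gaps.

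\textbf{The iteration step does not deliver a mark of $\alpha$ in $c$.} You correctly observe that applying Lemma~\ref{lem:markcell} to a ``stealing'' generator $\beta$ (one that claims $q$ in $W(e)$ when $\alpha$ claims $q$ in $\spm'(s)$) marks $\beta$, not $\alpha$. Your fix --- walk backward along the shortest path to the previous transparent edge $e'$ and iterate --- does not resolve this. Even if at some earlier $e'$ the generator $\alpha$ does claim the crossing point in $W(e')$, the resulting mark from Lemma~\ref{lem:markcell} (or from Rules 2--4) lands in cells incident to $e'$, not in $c$. The lemma you must prove is specifically that $\alpha$ is marked \emph{in the cell $c$ where the true bisector event occurs}, and the backward iteration moves the marking location away from $c$, not toward it. There is no mechanism in your argument for transporting a mark from a distant edge back to $c$.

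\textbf{The ``base configuration'' argument is a non-sequitur.} You assert that because $\spm'(s)$ has a bisector event at $v$ in $c$ involving $\alpha$, and because bisectors are $y$-monotone (Corollary~\ref{coro:monotone}), the approximate wavelet of $\alpha$ ``must terminate inside $D(e)$ by a detected bisector event.'' This simply does not follow. The algorithm operates on approximate wavefronts, and the neighbors of $\alpha$ in $W(e)$ need not be the generators that actually collide with $\alpha$ at $v$ in $\spm'(s)$. The two approximate bounding bisectors $B(\alpha_1,\alpha)$ and $B(\alpha,\alpha_2)$ may fail to intersect anywhere in $D(e)$, and the wavefront may sail through $c$ with $\alpha$ intact and undetected. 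Monotonicity of bisectors bounds their geometry but says nothing about whether the algorithm \emph{detects} an event. The paper's proof closes this gap with a much more careful structural argument: assuming $\alpha$ is unmarked, the consecutive triple $\alpha_1,\alpha,\alpha_2$ persists from the entering edge $e$ to the exiting edge $f$ of $c$; this defines a region $R$ bounded by $e$, $f$, and the two approximate bisectors; the existence of a true bisector event at $v$ forces some point $p\in R$ to have a different true predecessor $\alpha'$ with initial vertex outside $R$; the path $\pi'(a,p)$ then must cross $\partial R$, and a careful case analysis (crossing $f$ vs.\ $\alpha$ failing to claim $b_1$ or $b_2$ in $\spm'(s)$) places a point satisfying the hypothesis of Lemma~\ref{lem:markcell} for $\alpha$ itself on the exit edge $f$. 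Your proposal has no analogue of this $R$-based case analysis.

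\textbf{Additionally}, the paper dispatches bisector-obstacle intersections upfront (extension bisectors via Rule~1, opaque-edge claims via Rule~4) as a separate case; your proposal mentions Rule~4 only in passing inside the flawed base-configuration step and does not cleanly separate this case out.
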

\begin{proof}
If a bisector has an endpoint on an obstacle edge of $c$, it either
emanates from an obstacle vertex $a$ on the edge (i.e., the bisector
is an extension bisector) or defined by two
generators that claim part of the opaque edge. In the first case, $a$
is the initial vertex of a new created generator in $c$ and thus the generator
is marked by Rule~1. In the second case, both generators are marked by
Rule~4.

Let $\alpha$ be a generator that participates in a bisector event of
$\spm'(s)$ in a cell $c$ of $\calS'$. Assume to the contrary that $\alpha$ is not marked for
$c$. Then by Rule~2(a) there must be transparent edges $e$ and $f$ on
the boundary of $c$ such that both $W(e)$ and $W(f)$ contain the three
consecutive generators $\alpha_1$, $\alpha$, and $\alpha_2$. Without
loss of generality, we assume that $W(e)$ enters $c$ and $W(f)$ leaves
$c$. Let $R$ be the region of $c$ bounded by $e$, $f$, and the two bisectors
$B(\alpha_1,\alpha)$ and $B(\alpha,\alpha_2)$; e.g., see Fig.~\ref{fig:markcorrect}. The region $R$ must be a subset of $c$. Indeed, if $R$ contains an island not in $c$, then $\alpha$ would claim an endpoint of a boundary edge of the island, in which case Rule~3 would apply.

\begin{figure}[t]
\begin{minipage}[t]{\textwidth}
\begin{center}
\includegraphics[height=2.2in]{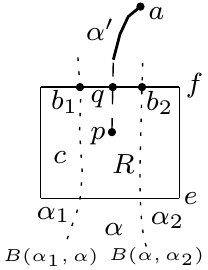}
\caption{\footnotesize Illustrating the proof of Lemma~\ref{lem:markcorrect}.}
\label{fig:markcorrect}
\end{center}
\end{minipage}
\vspace{-0.15in}
\end{figure}

Since $\alpha$ participates in a bisector event of $\spm'(s)$ in $c$, at least one
point $p$ in $R$ is not claimed by $\alpha$ in $\spm'(s)$. Let
$\alpha'=(A,a)$ be the true predecessor of $p$. Note that the initial vertex $a$ must be outside $R$ since otherwise a bisector event involving $\alpha$ must happen when $W(e)$ is propagating through $c$ and thus $\alpha$ would be marked by Rule~3(b). Let $\pi'(a,p)$ be the path from
$p$ along its tangent to $A$ and then following $A$ to $a$.
Since $a$ is outside $R$ and $p$ is inside $R$, $\pi'(a,p)$ must intersect the boundary of $R$.

Because $\alpha_1$, $\alpha$, and $\alpha_2$ are three consecutive
generators of $W(e)$, no generator other than $\alpha$ on the same side of $e$ as $\alpha$
claims any point of $R$. Thus, $\pi'(a,p)$ does not cross $e$. Let
$b_1$ and $b_2$ be the intersections of $f$ with $B(\alpha_1,\alpha)$ and $B(\alpha,\alpha_2)$, respectively. Then, if $\alpha$ claims both $b_1$ and $b_2$ in $\spm'(s)$, then $\pi'(a,p)$ cannot cross either bisector on $\partial R$, and thus it must cross $\overline{b_1b_2}$, say, at a point $q$ (e.g., see Fig.~\ref{fig:markcorrect}). Note that $q$ satisfies the
hypothesis of Lemma~\ref{lem:markcell} and thus $\alpha$ is marked for
$c$. It $\alpha$ does not claim either $b_1$ or $b_2$, then that point
satisfies the hypothesis of Lemma~\ref{lem:markcell} and thus $\alpha$
is marked for $c$.
\end{proof}

\subsubsection{Computing the vertices of $\spm'(s)$}
\label{sec:first}

In this section, we compute the vertices of $\spm'(s)$. The next section will compute the edges of $\spm'(s)$ and assemble them to obtain $\spm'(s)$.

\paragraph{Computing active regions.}
Because the approximate wavefronts are represented by persistent trees, after the above wavefront expansion algorithm finishes, the approximate wavefronts $W(e)$ for all transparent edges $e$ of $\calS'$ are still available. Also, for each cell $c$ and each transparent edge $e$ of $c$, a set of marked generators in $W(e)$ are known. Using these marked generators, we first break $c$ into {\em active} and {\em inactive} regions such that no vertices of $\spm'(s)$ lie in the inactive regions. Note that each unmarked generator does not participate in a bisector event in $\spm'(s)$ by Lemma~\ref{lem:markcorrect}. If $\alpha_1$ and $\alpha_2$ are neighboring generators on $\partial c$ such that one of them is marked while the other is unmarked, their bisector belongs to $\spm'(s)$. Therefore, all such generators are disjoint and they together partition $c$ into regions such that each region is claimed only by marked generators or only by unmarked generators; the former regions are active while the latter are inactive. We can compute these active regions as follows.

Since the wavefronts $W(e)$ for all transparent edges $e$ of $c$ are available, the list of all generators ordered along the boundary of $c$ is also available. For each pair of adjacent generators $\alpha_1$ and $\alpha_2$, if one of them is marked and the other is unmarked, then we explicitly compute the portion of their bisector $B(\alpha_1,\alpha_2)$ in $c$. We describe the details of this step below.

First of all, observe that $\alpha_1$ and $\alpha_2$ must be from $W(e)$ for the same transparent edge $e$ of $c$, since otherwise each generator must claim an endpoint of their own transparent edge thus must have been marked by Rule~2(a).
Without loss of generality, we assume that $e$ is horizontal and both $\alpha_1$ and $\alpha_2$ are below $e$. Note that we cannot afford computing the entire bisector $B(\alpha_1,\alpha_2)$ and then determining its portion in $c$ because the running time would be proportional to the size of $B(\alpha_1,\alpha_2)$, i.e., the number of hyperbolic-arcs of $B(\alpha_1,\alpha_2)$. Instead, we first compute the intersection $b$ of $B(\alpha_1,\alpha_2)$ and $e$, which can be done in $O(\log n)$ time by the bisection-line intersection operation in Lemma~\ref{lem:bl-intersection}. Note that $b$ is one endpoint of the portion of $B(\alpha_1,\alpha_2)$ in $c$. To determine the other endpoint $b'$, we do the following. Our goal is to compute $b'$ in $O(|B_c|+\log n)$ time, with $B_c=B(\alpha_1,\alpha_2)\cap c$. To this end, we could trace $B_c$ in $c$ from $b$, and for each hyperbolic-arc of $B_c$, we determine whether it intersects $\partial c$. Recall that $\partial c$ consists of $O(1)$ transparent edges and at most $O(1)$ convex chains. To achieve the desired runtime, we need to determine whether each hyperbolic-arc of $B_c$ intersects $\partial c$ in $O(1)$ time. However, it is not clear to us whether this is possible as the size of each convex chain may not be of $O(1)$ size. To circumvent the issue, we use the following strategy. Before tracing $B_c$, we first compute the intersection between $B(\alpha_1,\alpha_2)$ and each convex chain on $\partial c$, which can be done in $O(\log n)$ time by the bisector-chain intersection operation in Lemma~\ref{lem:bc-intersection}. Among all intersections, let $b''$ be the closest one to $\alpha$. Then, we start to trace $B_c$ from $b$, for each hyperbolic-arc $e'$, we determine whether $e'$ intersects each of the transparent edges of $\partial c$. If not, we further check whether $e'$ contains $b''$. If $b''\not\in e'$, then $e'$ is in $c$ and we continue the tracing; otherwise, $b'$ is $b''$ and we stop the algorithm. If $e'$ intersects at least one of the transparent edges of $\partial c$, then among all such intersections as well as $b''$, $b'$ is the one closest to $\alpha$, which can be determined in $O(\log n)$ time by computing their tangents to $\alpha$.
In this way, computing the portion of $B(\alpha_1,\alpha_2)$ in $c$ can be done in $O(\log n+n_c(\alpha_1,\alpha_2))$ time, where $n_c(\alpha_1,\alpha_2)$ is the number of hyperbolic-arcs of $B(\alpha_1,\alpha_2)$ in $c$.

In this way, all active regions of $c$ can be computed in $O(h_c\log
n+n_c)$ time, where $h_c$ is the total number of marked generators in the
wavefronts $W(e)$ of all transparent edges $e$ of $c$ and $n_c$ is the
total number of hyperbolic-arcs of the bisector boundaries of these
active regions.

\paragraph{Computing the vertices of $\spm'(s)$ in each active region.}
In what follows, we compute the vertices of $\spm'(s)$ in each active region $R$ of $c$.

The boundary $\partial R$ consists of $O(1)$ pieces, each of which is
a transparent edge fragment, an elementary chain fragment, or a bisector in
$\spm'(s)$. Unlike the HS algorithm, where each of these pieces is of
$O(1)$ size, in our problem both an elementary chain fragment and a bisector
may not be of constant size. Let $e$ be a transparent edge of $R$.
Without loss of generality, we assume that $e$ is horizontal and $R$
is locally above $e$. Without considering other wavefronts, we use
$W(e)$ to partition $R$ into subregions, called {\em Voronoi faces}, such
that each face has a unique predecessor in $W(e)$. We use $Vor(e)$
to denote the partition of $R$. Note that if $\alpha$ is the
predecessor of a Voronoi face, then for each point $p$ in the face, its
tangent to $\alpha$ must cross $e$ and we assign $p$ a {\em weight}
that is equal to $d(\alpha,p)$.
Also, it is possible that these faces together may not cover the entire $R$; for those points
outside these faces, their weights are $\infty$.

We now discuss how to compute the partition $Vor(e)$. To this
end, we can use our wavefront propagation procedure to propagate
$W(e)$ inside $R$. To apply the algorithm, one issue
is that the boundary of $R$ may contain bisectors, which consists of hyperbolic-arcs instead of polygonal segments. To circumvent the issue, an observation is that the bisectors of $W(e)$ do not intersect
the bisectors on $\partial R$. Indeed, for each bisector $B$ of $\partial R$, one of its defining generators is unmarked and thus does not participate in any bisector event in $c$, and therefore, $B$ does not intersect any bisector in $c$. In light of the observation, we can simply apply the
wavefront propagation procedure to propagate $W(e)$ to $c$ instead of
$R$ to partition $c$ into Voronoi faces, denoted by $Vor_c(e)$. The above
observation implies that each bisector on the boundary of $R$ must lie
in the same face of $Vor_c(e)$. Hence, we can simply cut $Vor_c(e)$ to
obtain $Vor(e)$ using the bisectors on the boundary of $R$. When we
apply the wavefront propagation procedure to propagate $W(e)$ to $c$,
here we add an initial step to compute the intersection of $e$ and $B(\alpha,\alpha')$ for each
pair of neighboring generators $\alpha$ and $\alpha'$ of $W(e)$ and
use it as the initial tracing-point $z(\alpha,\alpha')$. Since each such
intersection can be computed in $O(\log n)$ time by the bisector-line
intersection operation in Lemma~\ref{lem:bl-intersection}, this initial step takes $O(h_{e}\log n)$ time, where $h_e$ is the number of generators of $W(e)$.
Hence, the total time for propagating $W(e)$ into $c$ to obtain
$Vor_c(e)$ is $O(h_e\log n+n_B(e))$, where $n_B(e)$ is the number of hyperbolic-arcs of the bisectors of $W(e)$ in $c$.
After having $Vor_c(e)$, cutting it to obtain $Vor(e)$ can be easily done in $O(h_e+n_B(e)+n_R)$ time, where $n_R$ is the number of hyperbolic-arcs on the bisectors of $\partial R$.
Therefore, the total time for computing the partition $Vor(e)$ is
$O(h_e\log n+n_B(e)+n_R)$ time.
Note that since the bisectors of $\partial R$ do not intersect any bisectors involving the generators of $W(e)$, all bisector events of $W(e)$ in $c$ are actually in $R$.

After having the partition $Vor(e)$ for all transparent edges $e$ of
$R$, we can now compute vertices of $\spm'(s)$ in $R$. Consider a transparent edge $e$ of $\partial R$. We process it as
follows. For each transparent edge $f$ of $\partial R$ other than $e$,
we merge the two partitions $Vor(e)$ and $Vor(f)$ by using the merge step
from the standard divide-and-conquer Voronoi diagram algorithm to
compute the sub-region of $R$ closer to $W(e)$ than to $W(f)$. This can be
done in $O(h_e+n_B(e)+n_R+h_f+n_B(f))$ time by finding a finding a curve
$\gamma$ in $R$ that consists of points equal to $W(e)$ and $W(f)$,
and the algorithm is similar to the HS algorithm.

Intersecting the results for all such $f$ produces the region $R(e)$ claimed by $W(e)$ in $R$.
Intersecting $R(e)$ with $Vor(e)$ gives the vertices of $\spm'(s)$ in $R$ that $W(e)$ contributes.
Repeating the above for all transparent edges $e$ of $\partial R$ gives the vertices of $\spm'(s)$ in $R$.
Since $\partial R$ has $O(1)$ transparent edges, the total time is $O(h_R\log n+n_B(R)+n_R)$, where $h_R$ is the number of generators in all wavefronts of all transparent edges of $\partial R$ and $n_B(R)$ is the total number of hyperbolic-arcs of the bisectors all wavefronts of all transparent edges of $\partial R$.


We do the above for all active regions $R$ of $c$, and then all
vertices of $\spm'(s)$ in $c$ can be computed. The total time is $O(h_c\log
n+n_R(c)+n_B(c))$, where $h_c$ is the total number of marked generators in the
wavefronts $W(e)$ of all transparent edges $e$ of $c$, $n_R(c)$ is the
total number of hyperbolic-arcs of the bisector boundaries of the
active regions of $c$, and $n_B(c)$ is the total number of hyperbolic-arcs of the bisectors of the wavefronts of all transparent edges of $c$.
Processing all cells $c$ of $\calS'$ as above gives all vertices of
$\spm'(s)$. For the running time, the total sum of $n_R(c)$ among all cells $c\in \calS'$ is $O(n)$ because each hyperbolic-arc of a bisector on the boundary of any active region also appears in $\spm'(s)$, whose total size is $O(n)$ by Corollary~\ref{coro:size}. The total sum of $h_c$ among all cells $c$ is $O(h)$ by Lemma~\ref{lem:numgen}.
For $n_B(c)$, Lemma~\ref{lem:bisectorvertices} is essentially a proof that the total sum of $n_B(c)$ over all cells $c$ is $O(n)$. To see this, since $c$ is a cell incident to $e$, the wavefront $W(e)$ will be propagated into $c$ during the wavefront propagation procedure to propagate $W(e)$ to edges of $output(e)$, and thus the hyperbolic-arcs of the bisectors of $W(e)$ are counted in the proof analysis of Lemma~\ref{lem:bisectorvertices}.
In summary, the total time for computing all vertices of $\spm'(s)$ is $O(n+h\log n)$, which is $O(n+h\log h)$.

\subsubsection{Constructing $\spm'(s)$}
\label{sec:second}

With all vertices of $\spm'(s)$ computed above, as in the HS
algorithm, we next compute the edges of $\spm'(s)$ separately and then assemble
them to obtain $\spm'(s)$. One difference is that the HS algorithm
uses a standard plane
sweep algorithm to assemble all edges to obtain $\spm(s)$, which takes
$O(n\log n)$ time as there are $O(n)$ edges in $\spm(s)$. In order to
achieve the desired $O(n+h\log h)$ time bound, here instead we propose a
different algorithm.

We first discuss how to compute the edges of $\spm'(s)$, given the vertices of $\spm'(s)$.
Recall that each vertex $v$ of $\spm'(s)$ is either an intersection
between a bisector and an obstacle edge, or an intersection of
bisectors. By our general position assumption, in the former case, $v$
is in the interior of an obstacle edge; in the latter case, $v$ is the
intersection of three bisectors, i.e., a {\em triple point}.
During our above algorithm for computing vertices of $\spm'(s)$, we
associate each vertex $v$ with the two generators of the bisector that
contains $v$ (more specifically, we associate $v$ with the initial
vertices of the generators).
We create a list of all vertices of $\spm'(s)$, each identified by a
{\em key} consisting of its two generators. If a vertex $v$ is a
triple point, then we put it in the list for three times (each time
with a different pair of generators); if $v$ is a bisector-obstacle
intersection, then it is put in the list once.
We now sort all vertices of $\spm'(s)$ with their keys; by traversing the sorted list, we can group together all vertices belong to the same bisector.
This sorting takes $O(h\log n)$ time as there are $O(h)$ vertices in
$\spm'(s)$.

We take all $\spm'(s)$-vertices in the same bisector and sort them
along the bisector determined by their weighted distances from the
generators of the bisector (each of these distances can be computed in
additional $O(\log n)$ time by computing the tangents from the vertex
to the generators). These sorting takes $O(h\log n)$ time altogether. The
above computes the bisector edges $e$ of $\spm'(s)$ that connect two adjacent
vertices. In fact, each edge $e$ is only implicitly determined in the
sense that the hyperbolic-arcs of $e$ are not explicitly computed.
In addition, for each obstacle $P$, we sort all vertices of $\spm'(s)$
on $\partial P$ to compute the convex-chain edges of $\spm'(s)$
connecting adjacent vertices of $\spm'(s)$ on $\partial P$.
This sorting can be easily done in $O(n+h\log n)$ time for all
obstacles.

For each vertex $v$ of $\spm'(s)$, the above computes its adjacent
vertices. As
discussed above, due to the general position assumption, $v$ has at
most three adjacent vertices. We next define a set $E(v)$ of at most three points for $v$.
For each adjacent vertex $u$ of $v$ in $\spm'(s)$, let $e(v,u)$ be the edge of $\spm'(s)$ connecting them.
$e(v,u)$ is either a bisector edge or a convex-chain edge. In the former case, we refer to each hyperbolic-arc of $e(v,u)$ as a {\em piece} of $e(v,u)$; in the latter case, we refer to each obstacle edge of $e(v,u)$ as a {\em piece}. If we traverse $e(v,u)$ from $v$ to $u$, the endpoint $u'$ of the first piece (incident to $v$) is added to $E(v)$; we define $h(u')$ to be $u$. Since $v$ is adjacent to at most three vertices in $\spm'(s)$, $|E(v)|\leq 3$. In fact, $E(v)$ is exactly the set of vertices adjacent to $v$ in the shortest path map $\spm(s)$.

With all edges of $\spm'(s)$ and the sets $E(v)$ of all vertices $v$ of $\spm'(s)$ computed above,
we construct the doubly-connected-edge-list (DCEL) of $\spm'(s)$, as follows.
For convenience, we assume that there is a bounding box that contains all obstacles so that no bisector edge of $\spm'(s)$ will go to infinity and thus each face of $\spm'(s)$ is bounded.


\begin{figure}[t]
\begin{minipage}[t]{\textwidth}
\begin{center}
\includegraphics[height=1.5in]{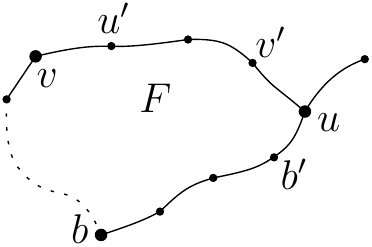}
\caption{\footnotesize Illustrating $v$, $u$, $u'$, $v'$, $b'$, and $b$.}
\label{fig:traceedge}
\end{center}
\end{minipage}
\vspace{-0.15in}
\end{figure}

For each vertex $v$ of $\spm'(s)$, we store its coordinate in the DCEL data structure. As
$|E(v)|\leq 3$, there are at most three faces in $\spm'(s)$ incident
to $v$, we construct them one by one. For each point $u'$ in $E(v)$, we
construct the face $F$ clockwise (with respect to $v$) incident to the
edge $e(v,u)$ of $\spm'(s)$ connecting $v$ and $u=h(u')$, as follows (e.g., see Fig.~\ref{fig:traceedge}).
We first trace out the edge $e(v,u)$, which is either a bisector edge or a convex chain edge.
In the former case, we compute the hyperbolic-arcs of $e(v,u)$, one
at a time, until we reach $u$, and add them to the DCEL data structure. Each
hyperbolic-arc can be computed in constant time using the two
generators of the bisector. In the latter case, we trace out the
obstacle edges of $e(v,u)$, one at a time, and add them to the DCEL data
structure. Let $v'$ be the last endpoint of the piece of $e(v,u)$ containing $u$ (e.g., see Fig.~\ref{fig:traceedge}). Note that $v'$ is in the set $E(u)$.
Let $b'$ be the first point of $E(u)$ counterclockwise around $u$ after $v'$
(e.g., see Fig.~\ref{fig:traceedge}). Let $b=h(b')$, which is a vertex
of $\spm'(s)$ adjacent to $u$.
Hence, the edge $e(u,b)$ of $\spm'(s)$ connecting $u$ to $b$ is incident to the
face $F$. We trace out the edge $e(u,b)$ in the same way as above. When we
reach $b$, we continue to trace the next edge of $F$.
Since each face of $\spm'(s)$ is bounded, eventually we
will arrive back to the vertex $v$ again\footnote{We could lift the assumption that each face of $\spm'(s)$ is bounded in the following way. During the above algorithm for constructing $F$, if $F$ is unbounded, then we will reach a bisector edge that extends to the infinity. If that happens, then we construct other edges of $F$ from the other direction of $v$. More specifically, starting from the first point of $E(v)$ clockwise around $v$ after $u'$, we trace out the edges of $F$ in the same way as above, until we reach a bisector edge that extends to the infinity, at which moment all edges of $F$ are constructed.}. This finishes the construction of the
face $F$. We do the same for all other faces of $\spm'(s)$, after
which the DCEL data structure for $\spm'(s)$ is constructed.
For the running time, since each edge of $\spm'(s)$ is traced at
most twice, by Corollary~\ref{coro:size}, the total time of the above procedure for constructing the DCEL data structure is $O(n)$.

In summary, $\spm'(s)$ can be computed in $O(n+h\log h)$ time. By
Lemma~\ref{lem:spmconstruction}, the shortest path map $\spm(s)$ can be built
in additional $O(n)$ time.

\subsection{Reducing the space to $O(n)$}
\label{sec:space}

The above provides an algorithm for computing $\spm(s)$ in $O(n+h\log h)$ time and $O(n+h\log h)$ space. In this subsection, we discuss how to reduce to the space to $O(n)$, using the technique given in~\cite{ref:WangSh21}.

The reason that the above algorithm needs $O(n+ h\log h)$ space is two-fold. First, it uses fully persistent binary trees (with the path-copying method) to represent wavefronts $W(e)$. Because there are $O(h)$ bisector events in the wavefront expansion algorithm and each event costs $O(\log n)$ additional space on a persistent tree, the total space of the algorithm is $O(n+h\log n)$. Second, in order to construct $\spm(s)$ after the wavefront expansion algorithm, the wavefronts $W(e)$ of all transparent edges $e$ of $\calS'$ are needed, which are maintained in those persistent trees.
We resolve these two issues in the following way.

\subsubsection{Reducing the space of the wavefront expansion algorithm}

We still use persistent trees to represent wavefronts. However, as there are $O(h)$ bisector events in total in the algorithm, we divide the algorithm into $O(\log h)$ phases so that each phase has no more than $h/\log h$ events. The total additional space for processing the events using persistent trees in each phase is $O(h)$. At the end of each phase, we ``reset'' the space of the algorithm by only storing a ``snapshot'' of the algorithm (and discarding all other used space) so that (1) the snapshot contains sufficient information for the subsequent algorithm to proceed as usual, and (2) the total space of the snapshot is $O(h)$.

Specifically, we make the following changes to the wavefront propagation procedure, which is to compute the wavefronts $W(e,g)$ for all edges $g\in output(e)$ using the wavefront $W(e)$.
We now maintain a counter $count$ to record the number of bisector events that have been processed so far since the last space reset; $count=0$ initially. Consider a wavefront propagation procedure on the wavefront $W(e)$ of a transparent edge $e$. The algorithm will compute $W(e,g)$ for all edges $g\in output(e)$, by propagating $W(e)$. We apply the same algorithm as before. For each bisector event, we first do the same as before. Then, we increment $count$ by one. If $count< h/\log h$, we proceed as before (i.e., process the next event). Otherwise, we have reached the end of the current phase and will start a new phase. To do so, we first reset $count=0$ and then reset the space by constructing and storing a snapshot of the algorithm (other space occupied by the algorithm is discarded), as follows.

\begin{enumerate}
\item
Let $g$ refer to the edge of $output(e)$ whose $W(e,g)$ is currently being computed in the algorithm.
 We store the tree that is currently being used to compute $W(e,g)$ right after the above event. To do so, we can make a new tree by copying the newest version of the current persistent
tree the algorithm is operating on. The size of the tree is bounded by $O(h)$. We will use this tree to ``resume'' computing $W(e,g)$ in the subsequent algorithm.

\item
For each $g'\in output(e)\setminus\{g\}$ whose $W(e,g')$
has been computed, we store the tree for $W(e,g')$. We will use the tree to
compute the wavefronts $W(g')$ of $g'$ in the subsequent algorithm.

\item
We store the tree for the wavefront $W(e)$. Note that the tree may have many versions due to processing the events and we only keep its original version for $W(e)$. Hence, the size of the tree is $O(h)$. This tree will be used in the subsequent algorithm to compute $W(e,g')$ for those edges $g'\in output(e)\setminus\{g\}$ whose $W(e,g')$ have not been computed yet.

\item
We check every transparent edge $e'$ of $\calS'$ with $e'\neq e$. If $e'$ has been
processed (i.e., the wavefront propagation procedure has been called
on $W(e')$) and there is an edge $g'\in output(e')$ that has {\em not} been processed, we know that $W(e',g')$ has been computed and is available; we store the tree for $W(e',g')$. We will use the tree to compute the wavefronts $W(g')$ of $g'$ in the subsequent algorithm.
\end{enumerate}

We refer to the wavefronts stored in the algorithm as the {\em snapshot}; intuitively, the snapshot contains all wavelets in the forefront of the wavelet expansion.
By the same analysis as in~\cite{ref:WangSh21}, we can show that the snapshot contains sufficient information for the subsequent algorithm to proceed as usual and the total space of the snapshot is $O(h)$.

The above discusses our changes to the wavefront propagation procedure. For the wavefront merging procedure, which is to construct $W(e)$ from $W(f,e)$ for the edges $f\in input(e)$,
notice that we do not need the old versions of $W(f,e)$ anymore after $W(e)$ is constructed. Therefore, it is not necessary to use the path-copying method to process each event in the procedure.
Hence, the total space needed in the wavefront merging procedure in the entire algorithm is $O(n)$.

\subsubsection{Reducing the space of constructing $\spm'(s)$}

For the second issue of constructing $\spm'(s)$, our algorithm relies on the wavefronts $W(e)$ for all transparent edges $e$, which are maintained by persistent trees.
Due to the space-reset, our algorithm does not maintain the wavefronts anymore, and thus we need to somehow restore these wavefronts in order to construct $\spm'(s)$. To this end, a key observation is that by marking a total of $O(h)$ additional wavelet generators it is possible to restore all historical wavefronts that are needed for constructing $\spm'(s)$. In this way, $\spm'(s)$ can be constructed in $O(n + h\log h)$ time and $O(n)$ space.

More specifically, our algorithm considers each cell $c$ of $\calS'$ individually. For each cell $c$, the algorithm has two steps. First, compute the active regions of $c$. Second, for each active region $R$, compute the vertices of $\spm'(s)$ in $R$. For both steps, our algorithm utilizes the wavefronts $W(e)$ of the transparent edges $e$ on the boundary of $c$. Due to the space reset, the wavefronts $W(e)$ are not available anymore in our new algorithm. We use the following strategy to resolve the issue.

First, to compute the active regions in $c$, we need to know the bisectors defined by an unmarked generator $\alpha$ and a marked generator $\alpha'$ in the wavefronts $W(e)$ of the transparent edges $e$ of $c$. We observe that $\alpha$ is a generator adjacent to $\alpha'$ in $W(e)$. Based on this observation, we slightly modify our wavefront expansion algorithm so that it also marks the neighbors of the generators that are marked in our original algorithm and we call them {\em newly-marked} generators (the generators marked in our original algorithm are called {\em originally-marked} generators). If a generator is both newly-marked and originally-marked, we consider it as originally-marked. As each generator has two neighbors in a wavefront, the total number of marked generators is still $O(h)$. The newly-marked generators and the originally-marked generators are sufficient for computing all active regions of each cell $c$. Indeed, the active regions are decomposition of $c$ by the bisectors of adjacent generators with one originally-marked and the other newly-marked in $W(e)$ of the transparent edges $e$ of $c$.

Second, to compute the vertices of $\spm'(s)$ in each active region $R$ of $c$, we need to restore the wavefronts $W(e)$ of the transparent edges $e$ on the boundary of $R$. To this end, we observe that $W(e)$ consists of exactly the originally-marked generators that claim $e$. Consequently, the same algorithm as before can be applied to construct $\spm'(s)$. When computing the partition $Vor(e)$ of $R$, we propagate $W(e)$ using the wavefront propagation procedure, which uses persistent trees to represent $W(e)$. Since here we do not need to keep the old versions of $W(e)$ any more, we can use an ordinary tree (without the path-copying method) to represent $W(e)$. In this way, processing each bisector event only introduces $O(1)$ additional space. Hence, constructing $\spm'(s)$ takes $O(n+h\log h)$ time and $O(n)$ space.

\section{The general case}
\label{sec:general}

In this section, we extend our algorithm for the convex case in Section~\ref{sec:convex} to the general case where obstacles of $\calP$ may not be convex. To this end, we resort to an extended corridor structure of $\calP$, which has been used to solve various problems in polygonal domains~\cite{ref:ChenCo19,ref:ChenTw16,ref:ChenCo17,ref:ChenA15,ref:KapoorAn97,ref:MitchellSe95}. The structure decomposes the free space $\calF$ into three types of regions: an {\em ocean} $\calM$, $O(h)$ {\em canals}, and $O(n)$ {\em bays}. The details are given below.

\subsection{The extended corridor structure}

Let $\Tri(\calP)$ denote an arbitrary triangulation of $\calP$.
Let $G(\calP)$ be the (planar) dual graph of $\Tri(\calP)$,
i.e., each node of $G(\calF)$ corresponds to a triangle in
$\Tri(\calP)$ and each edge connects two nodes of $G(\calP)$
corresponding to two triangles sharing a triangulation diagonal of $\Tri(\calP)$.
We compute a {\em corridor graph} $G$, as follows. First, repeatedly remove every degree-one node from $G(\calP)$  until no such node remains. Second, repeatedly remove every degree-two node from
$G(\calP)$ and replace its two incident edges by a single edge
until no such node remains. The resulting graph is $G$ (e.g., see Fig.~\ref{fig:triangulation}), which has
$O(h)$ faces, nodes, and edges~\cite{ref:KapoorAn97}. Each node of
$G$ corresponds to a triangle in $\Tri(\calP)$, which is called a
{\em junction triangle} (e.g., see Fig.~\ref{fig:triangulation}).
The removal of all junction triangles results in $O(h)$
{\em corridors} (defined below), and each corridor
corresponds to an edge of $G$.

\begin{figure}[t]
\begin{minipage}[t]{0.47\linewidth}
\begin{center}
\includegraphics[totalheight=1.5in]{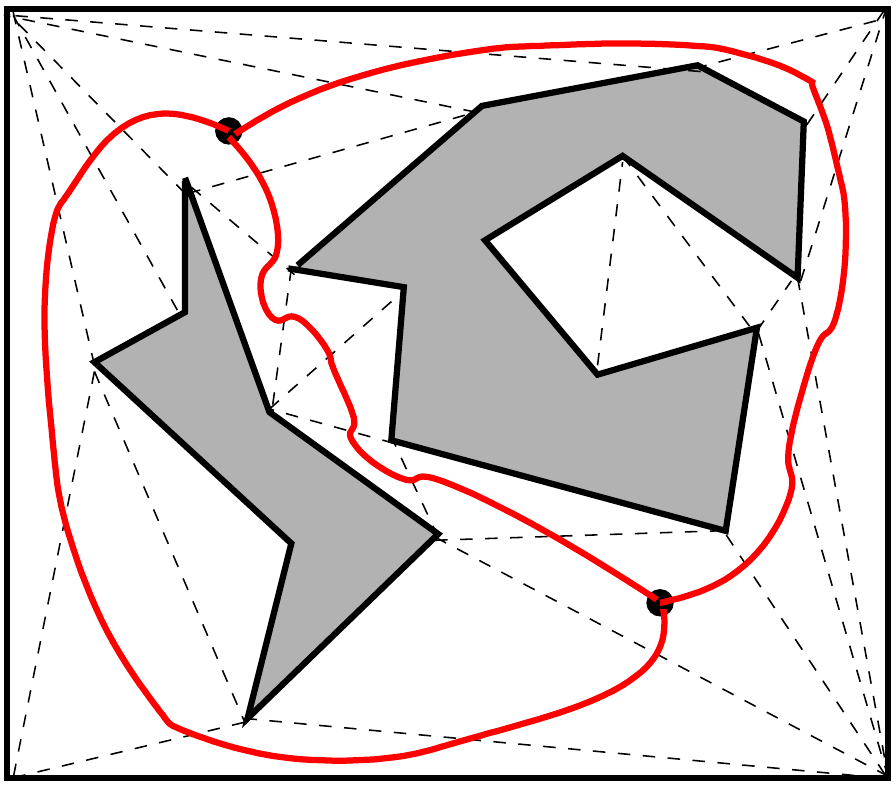}
\caption{\footnotesize
Illustrating a triangulation of $\calP$ with two obstacles.
There are two junction triangles indicated by the large dots inside
them, connected by three solid (red) curves. Removing the two
junction triangles results in three corridors.}
\label{fig:triangulation}
\end{center}
\end{minipage}
\hspace*{0.04in}
\begin{minipage}[t]{0.52\linewidth}
\begin{center}
\includegraphics[totalheight=1.6in]{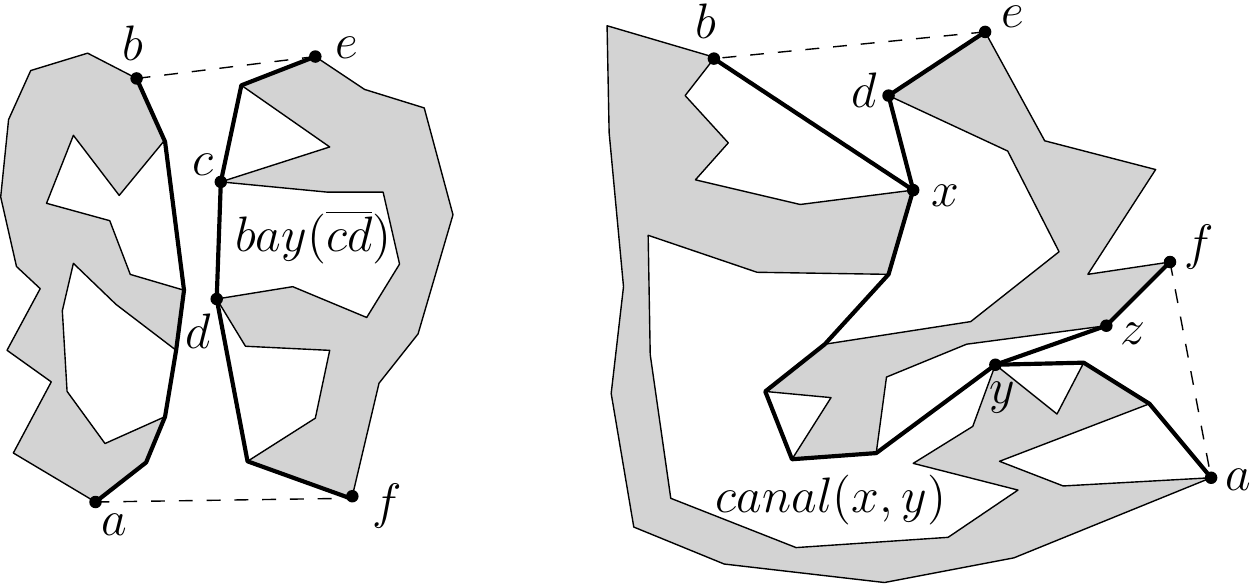}
\caption{\footnotesize
Illustrating an open hourglass (left) and a
closed one (right) with a corridor path connecting the apices
$x$ and $y$ of the two funnels. The dashed segments are diagonals.
The paths $\pi_C(a,b)$ and $\pi_C(e,f)$ are marked by thick solid
curves. A bay $bay(\overline{cd})$ with gate $\overline{cd}$ (left)
and a canal $\canal$ with gates $\overline{xd}$ and $\overline{yz}$
(right) are also shown.} \label{fig:corridor}
\end{center}
\end{minipage}
\vspace*{-0.15in}
\end{figure}

The boundary of a corridor $C$ consists of four parts (see
Fig.~\ref{fig:corridor}): (1) A boundary portion of an obstacle, from an obstacle vertex $a$ to an obstacle vertex $b$; (2) a triangulation
diagonal of a junction triangle from $b$ to an obstacle vertex $e$; (3) a boundary portion of an obstacle from $e$ to an obstacle vertex $f$; (4) a diagonal of a
junction triangle from $f$ to $a$. The two diagonals $\overline{be}$
and $\overline{af}$ are called the {\em doors} of $C$, and the other two parts of the boundary of $C$ are the two {\em sides} of $C$. Note that $C$ is a simple polygon. A point is in the {\em interior} of $C$ if it is in $C$ excluding the two doors.
Let $\pi_C(a,b)$ (resp., $\pi_C(e,f)$) be the shortest path from $a$ to $b$
(resp., $e$ to $f$) in $C$. The region $H_C$ bounded by
$\pi_C(a,b), \pi_C(e,f)$, $\overline{be}$, and
$\overline{fa}$ is called an {\em hourglass}, which is {\em open} if
$\pi_C(a,b)\cap \pi_C(e,f)=\emptyset$ and {\em closed} otherwise (see
Fig.~\ref{fig:corridor}). If $H_C$ is open, then both $\pi_C(a,b)$ and
$\pi_C(e,f)$ are convex chains and called the {\em sides} of
$H_C$; otherwise, $H_C$ consists of two ``funnels''
and a path $\pi_C=\pi_C(a,b)\cap \pi_C(e,f)$ joining the two apices of the
funnels, called the {\em corridor path} of $C$.
Each side of a funnel is also convex.

Let $\calM$ be the union of all $O(h)$ junction triangles, open hourglasses, and funnels.
We call $\calM$ the {\em ocean}, whose boundary
$\partial\calM$ consists of $O(h)$ convex chains that are sides of open hourglasses and funnels.
The other space of $\calP$, i.e., $\calP\setminus\calM$, is further
partitioned into two types of regions: {\em bays} and {\em canals}, defined as follows.
Consider the hourglass $H_C$ of a corridor $C$.

If $H_C$ is open (see Fig.~\ref{fig:corridor}), then $H_C$
has two sides. Let $S_1$ be a side of $H_C$.
The obstacle vertices on $S_1$ all lie on the same
side of $C$. Let $c$ and $d$ be any two consecutive
vertices on $S_1$ such that $\overline{cd}$ is
not an obstacle edge of $\calP$ (e.g., see Fig.~\ref{fig:corridor} left). The region enclosed
by $\overline{cd}$ and the boundary portion of $C$ between $c$ and
$d$ is called a {\em bay}, denoted by
$bay(\overline{cd})$. We call
$\overline{cd}$ the {\em gate} of $bay(\overline{cd})$.

If $H_C$ is closed, let $x$ and $y$ be the two apices
of the two funnels. Consider two consecutive vertices $c$ and $d$ on
a side of a funnel such that $\overline{cd}$ is not an obstacle edge of $\calP$.
If $c$ and $d$ are on the same side of the corridor $C$, then $\overline{cd}$ also
defines a bay. Otherwise, either $c$ or $d$ is a funnel
apex, say, $c=x$, and we call $\overline{xd}$ a {\em canal gate} at $x=c$
(e.g., see Fig.~\ref{fig:corridor} right).
Similarly, there is also a canal gate
at the other funnel apex $y$, say $\overline{yz}$.
The region of $C$ between the two canal gates
$\overline{xd}$ and $\overline{yz}$ 
is the {\it canal} of $H_C$, denoted by $canal(x,y)$.

Each bay or canal is a simple polygon.
All bays and canals together constitute the space $\calP\setminus\calM$.
Each vertex of $\partial\calM$ is a vertex of $\calP$ and
each edge of $\partial\calM$ is either an edge of $\calP$ or a gate
of a bay/canal. Gates are common boundaries between $\calM$ and bays/canals.
After $\calP$ is triangulated, $\calM$ and all bays and canals can be obtained in $O(n)$ time~\cite{ref:KapoorAn97}.

The reason that the extended corridor structure can help find a shortest path is the following. Suppose we want to find a shortest \st\ path for two points $s$ and $t$. We consider $s$ and $t$ as two special obstacles and build the extended corridor structure. If a shortest \st\ path $\pi(s,t)$ contains a point in the interior of a corridor $C$, then $\pi(s,t)$ must cross both doors of $C$ and stay in the hourglasses of $C$, and further, if the hourglass is closed, then its corridor path must be contained in $\pi(s,t)$. In fact, $\pi(s,t)$ must be in the union of the ocean $\calM$ and all corridor paths~\cite{ref:KapoorAn97}.

In light of the above properties, we propose the following algorithm.
Let $s$ be a given source point. By considering $s$ as a special obstacle of $\calP$, we construct the extended corridor structure of $\calP$. Consider any query point $t$, which may be in the ocean $\calM$, a bay $\bay$, or a canal $\canal$.

\begin{itemize}
\item
If $t\in \calM$, then the union of $\calM$ and all corridor paths contains a shortest \st\ path. To handle this case, we will build a shortest path map $\spm(\calM)$ in $\calM$ with respect to the union of $\calM$ and all corridor paths. In face, $\spm(\calM)$ is exactly the portion of $\spm(s)$ in $\calM$, i.e., $\spm(s)\cap \calM$. To build $\spm(\calM)$, a key observation is that the boundary $\partial \calM$ consists of $O(h)$ convex chains. Therefore, we can easily adapt our previous algorithm for the convex case. However, the algorithm needs to be modified so that the corridor paths should be taken into consideration. Intuitively, corridor paths provide certain kind of ``shortcuts'' for wavefronts to propagate.

\item
If $t$ is in a bay $\bay$, then any shortest \st\ path must cross its gate $\overline{cd}$. To handle this case, we will extend $\spm(\calM)$ into $\bay$ through the gate $\overline{cd}$ to construct the shortest path map in $\bay$, i.e., the portion of $\spm(s)$ in $\bay$, $\spm(s)\cap \bay$.

\item
If $t$ is in a canal $\canal$, then any shortest \st\ path must cross one of the two gates of the canal. To handle this case, we will extend $\spm(\calM)$ into $\canal$ through the two gates to construct the shortest path map in $\canal$, i.e., the portion of $\spm(s)$ in $\canal$, $\spm(s)\cap \canal$.
\end{itemize}

In the following, we first describe our algorithm for constructing $\spm(\calM)$ in Section~\ref{sec:ocean}. We then expand $\spm(\calM)$ into all bays in Section~\ref{sec:bay} and expand $\spm(\calM)$ into all canals in Section~\ref{sec:canal}. The algorithm for the canal case utilizes the bay case algorithm as a subroutine.

\subsection{Constructing the shortest path map in the ocean $\spm(\calM)$}
\label{sec:ocean}

As the boundary of $\calM$ consists of $O(h)$ convex chains, we can
apply and slightly modify our algorithm for the convex case. To do so, for each
convex chain of $\partial\calM$, we define its rectilinear extreme
vertices in the same way as before. Let $\calV$ be the set of the rectilinear extreme vertices of
all convex chains. Hence, $|\calV|=O(h)$. In addition, to incorporate the
corridor paths into the algorithm, we include the endpoints of each corridor path in $\calV$. As there are $O(h)$ corridor paths, the size of $\calV$ is still bounded by $O(h)$. Note that each corridor path endpoint is also an endpoint of a convex chain of $\partial \calM$. We construct the conforming
subdivision $\calS$ based on the points of $\calV$
and then insert the convex chains of $\calM$ into
$\calS$ to obtain $\calS'$. The algorithm is essentially the same as
before. In addition, we make the following changes to $\calS'$, which is mainly for incorporating the corridor paths into our wavefront expansion algorithm, as will be clear later.

Let $v$ be an endpoint of a corridor path $\pi$. Since $v$ is in $\calV$, $v$ is incident to $O(1)$ transparent edges in $\calS'$. For each such transparent edge $e$, if $|\pi|<2 \cdot |e|$, then we divide $e$ into two sub-edges such that the length of the one incident to $v$ is equal to $|\pi|/2$; for each sub-edge, we set its well-covering region the same as $\calU(e)$. Note that this does not affect the properties of $\calS'$. In particular, each transparent edge $e$ is still well-covered. This change guarantees the following property: for each corridor path $\pi$, $|\pi|\geq 2\cdot |e'|$ holds, where $e'$ is any transparent edge of $\calS'$ incident to either endpoint of $\pi$. For reference purpose, we refer to it as the {\em corridor path length property}.

Next we apply the wavefront expansion algorithm. Here we need to
incorporate the corridor paths into the algorithm. Intuitively, each
corridor path provides a ``shortcut'' for the wavefront, i.e., if a
wavelet hits an endpoint of a corridor path, then the wavelet will
come out of the corridor path from its other endpoint but with a delay
of distance equal to the length of the corridor path. More details are
given below.

Since all corridor path endpoints are in $\calV$, they are vertices of transparent edges of $\calS'$.
Consider an endpoint $v$ of a corridor path $\pi$. Let $u$ be the other endpoint of $\pi$.
Recall that the wavefront propagation procedure for $W(e)$ is to propagate $W(e)$ to compute $W(e,g)$ for all edges $g\in output(e)$. In addition to the previous algorithm for the procedure, we also propagate $W(e)$ through the corridor path $\pi$ to $u$. This is done as follows. Recall that when $e$ is processed, since $v$ is an endpoint of $e$, the weighted distance of $v$ through $W(e)$ is equal to $d(s,v)$. Hence, the wavefront $W(e)$ hits $u$ through $\pi$ at time $d(s,v)+|\pi|$. We then update $covertime(g)= \min\{covertime(g),d(s,v)+|\pi|+|g|\}$, for each transparent edge $g$ incident to $u$. We also set the wavefront $W(e,g)$ consisting of the only wavelet with $u$ as the generator with weight equal to $d(s,v)+|\pi|$. Since there are $O(1)$ transparent edges $g$ incident to $u$, the above additional step takes $O(1)$ time, which does not change the time complexity of the overall algorithm asymptotically. The corridor path length property assures that if $W(e)$ contributes to a wavefront $W(g)$ at $g$, then $e$ must be processed earlier than $g$. This guarantees the correctness of the algorithm.

In this way, we can first construct a decomposition $\spm'(\calM)$ of $\calM$ in $O(n+h\log h)$ time and $O(n)$ space, where $\spm'(\calM)$ is defined similarly as $\spm'(s)$ in Section~\ref{sec:convex}. Then, by a similar algorithm as that for Lemma~\ref{lem:spmconstruction}, $\spm(\calM)$ can be obtained in additional $O(n)$ time.

\subsection{Expanding $\spm(\calM)$ into all bays}
\label{sec:bay}
We now expand $\spm(\calM)$ into all bays in $O(n+h\log h)$ time and $O(n)$ space. In fact, we expand $\spm'(\calM)$ to the bays.
We process each bay individually. Consider a bay $\bay$ with gate $\overline{cd}$.
Without loss of generality, we assume that $\overline{cd}$ is
horizontal, $c$ is to the left of $d$, and $\bay$ is locally above
$\overline{cd}$.

Let $v_1,v_2,\ldots,v_{m}$ be the vertices of $\spm'(\calM)$ on
$\overline{cd}$ ordered from left to right (e.g., see Fig.~\ref{fig:bayextend}). Let $c=v_0$ and
$d=v_{m+1}$. Hence, each $\overline{v_iv_{i+1}}$ is claimed by a
generator $\alpha_i=(A_i,a_i)$ for all $i=0,1,\ldots,m$.
Let $b_i$ and $c_i$ be the tangent points on $A_{i-1}$ and $A_i$ from
$v_i$, respectively, for each $i=1,2,\ldots,m$  (e.g., see Fig.~\ref{fig:bayextend}). For $v_0$, only $c_0$
is defined; for $v_{m+1}$, only $b_{m+1}$ is defined.
Observe that for any point $p\in \overline{v_iv_{i+1}}$, which is
claimed by $\alpha_i$, its tangent point on $A_i$ must be on the
portion of $A_i$ between $c_i$ and $b_{i+1}$ and we use $A_i'$ to
denote that portion. So with respect to $\bay$, we use
$\alpha_i'=(A_i',a_i')$ to
refer to the generator, where $a_i'$ refers to the one of $c_i$ and
$b_{i+1}$ that is closer to $a_i$. Hence, for any point $t\in \bay$,
any shortest path $\pi(s,t)$ from $s$ to $t$ must be via one of the
generators $\alpha'_i$ for $i=0,1,\ldots,m$.
Consider the region $R$ bounded by $A'_i$ for all $i\in [0,m]$, the
tangents from $v_i$ to their generators for all $i\in [0,m+1]$, and
the boundary of the bay excluding its gate. Notice that $R$ is a
simple polygon. For any point $t\in \bay$, the above observation implies any shortest \st\ path $\pi(s,t)$ is the
concatenation of a shortest path from $s$ to a generator initial vertex $a_i'$ and
the shortest path from $a_i'$ to $t$ in $R$.

\begin{figure}[t]
\begin{minipage}[t]{\textwidth}
\begin{center}
\includegraphics[height=2.7in]{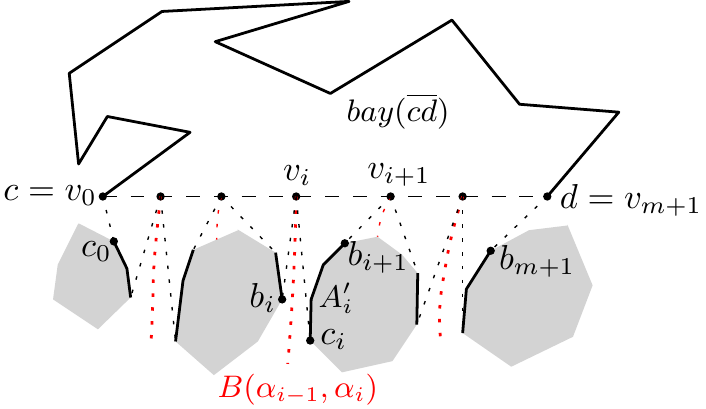}
\caption{\footnotesize Illustrating $\bay$ and the generators. The thick segments on obstacles are $A_i'$, $i=0,1,\ldots,m$.}
\label{fig:bayextend}
\end{center}
\end{minipage}
\vspace{-0.15in}
\end{figure}

According to the above discussion, expanding $\spm'(\calM)$ into $\bay$
is equivalent to the following weighted geodesic Voronoi diagram
problem in a simple polygon: Partition $R$ into cells with respect to
the point sites $a_0',a_1',\ldots,a_m'$ (with weights equal to their
geodesic distances to $s$) such that all points in the same cell have the same closest site.
Let $n_b$ be the number of vertices in $\bay$ (the subscript ``b'' represents ``bay''). Let $n_g$ be the total number of obstacle vertices in $A_i'$ for all $i\in [0,m]$ (the subscript ``g'' represents ``generator'').
Note that $v_i$ for all $i=1,\ldots,m$ are also vertices of $R$.
Hence, the number of vertices of $R$ is $n_b+n_g+m$.
The above problem can be solved in $O(m\log m+n_b+n_g)$ time by the
techniques of Oh~\cite{ref:OhOp19}. Indeed, given $m'$ point sites in a simple polygon $P'$ of $n'$ vertices, Oh~\cite{ref:OhOp19} gave an algorithm that can compute the geodesic Voronoi diagram of the sites in $P'$ in $O(n'+m'\log m')$ time and $O(n'+m')$ space. Although the point sites in Oh's problem do not have weights, our problem is essentially an intermediate step of Oh's algorithm because all weighted point sites in our problem are on one side of $\overline{cd}$. Therefore, we can run Oh's algorithm from ``the middle'' and solve our problem in $O(m\log m+n_b+n_g)$ and $O(n_b+n_g)$ space. In fact, our problem is a special case of Oh's problem because there are no sites in $\bay$. For this reason, we propose our own algorithm to solve this special case and the algorithm is much simpler than Oh's algorithm; our algorithm also runs in $O(m\log m+n_b+n_g)$ and $O(n_b+n_g+m)$ space. This also makes our paper more self-contained.

Before presenting the algorithm, we analyze the total time for processing all bays.
Since $\spm'(\calM)$ has $O(h)$ vertices, the total sum of $m$ for all bays is $O(h)$. The total sum of $n_b$ for all bays is at most $n$. Notice that the obstacle edges on $A_i'$ are disjoint for different bays, and thus the total sum of $n_g$ for all bays is $O(n)$. Hence, expanding $\spm'(\calM)$ to all bays takes $O(n+h\log h)$ time and $O(n)$ space in total.

The above actually only considers the case where the gate $\overline{cd}$ contains at least one vertex of $\spm'(\calM)$. It is possible that no vertex of $\spm'(\calM)$ is on $\overline{cd}$, in which case the entire gate is claimed by one generator $\alpha$ of $\spm'(\calM)$. We can still define the region $R$ in the same way. But now $R$ has only one weighted site and thus the geodesic Voronoi diagram problem becomes computing a shortest path map in the simple polygon $R$ for a single source point. This problem can be  solved in $O(n_b+n_g)$ time~\cite{ref:GuibasLi87}; note that $m=0$ in this case. Hence, the total time for processing all bays in this special case is $O(n)$.


\subsubsection{Solving the special weighted geodesic Voronoi diagram problem}

We present an algorithm for the above special case of the weighted geodesic Voronoi diagram problem and the algorithm runs in $O(m\log m+n_b+n_g)$ and $O(n_b+n_g+m)$ space.

If we consider all generators $\alpha_i$ for $i=0,1,\ldots,m$ as a wavefront at $\overline{cd}$, denoted by $W(\overline{cd})$, then our algorithm is essentially to propagate $W(\overline{cd})$ inside $\bay$. To this end, we first triangulate $\bay$ and will use the triangulation to guide the wavefront propagation. Each time we propagate the wavefront through a triangle. In the following, we first discuss how to propagate $W(\overline{cd})$ through the triangle $\triangle cda$ with $\overline{cd}$ as an edge and $a$ as the third vertex. This is a somewhat special case as $\triangle cda$ is the first triangle the wavefront will propagate through; later we will discuss the general case but the algorithm is only slightly different.

Recall that each convex chain $A_i'$ is represented by an array and the generator list $W(\overline{cd})$ is represented by a balanced binary search tree $T(W(\overline{cd}))$. We build a point location data structure on the triangulation of $\bay$ in $O(n_b)$ time~\cite{ref:EdelsbrunnerOp86,ref:KirkpatrickOp83}, so that given any query point $p$, we can determine the triangle that contains $p$ in $O(\log n_b)$ time.

We begin with computing the intersection of the adjacent bisectors $B(\alpha_{i-1},\alpha_i)$ and $B(\alpha_i,\alpha_{i+1})$ for all $i=1,2,\ldots, m-1$. Each intersection can be computed in $O(\log n_g)$ time by the bisector-bisector intersection operation in Lemma~\ref{lem:bb-intersection}. Computing all intersections takes $O(m\log n_g)$ time. For each intersection $q$, called a {\em bisector event}, we use the point location data structure to find the triangle of the triangulation that contains $q$ and store $q$ in the triangle.

Since all generators are outside $\bay$, by Corollary~\ref{coro:monotone}, all bisectors are monotone with respect to the direction orthogonal to $\overline{cd}$. Our algorithm for propagating the wavefront through $\triangle cda$ is based on this property.
We sort all bisector events in $\triangle cda$ according to their perpendicular distances to the supporting line of $\overline{cd}$. Then, we process these events in the same way as in our wavefront propagation procedure. Specifically, for each bisector event $q$ of two bisectors $B(\alpha',\alpha)$ and $B(\alpha,\alpha'')$, we remove $\alpha$ from the generator list. Then, we compute the intersection $q'$ of $B(\alpha',\alpha'')$ with $B(\alpha_1',\alpha')$, where $\alpha_1'$ is the other neighboring bisector of $\alpha'$ than $\alpha$, and we use the point location data structure to find the triangle that contains $q'$ and store $q'$ in the triangle. If $q'\in \triangle cda$, then we insert it to the bisector event sorted list of $\triangle cda$. We do the same for $B(\alpha',\alpha'')$ and $B(\alpha'',\alpha''_1)$, where $\alpha''_1$ is the other neighbor of $\alpha''$ than $\alpha$. After all events in $\triangle cda$ are processed, we split the current wavefront $W$ at the vertex $a$. To this end, we first find the generator $\alpha^*$ of $W$ that claims $a$. For this, we use the same algorithm as before, i.e., binary search plus bisector tracing. So we need to maintain a tracing-point for each bisector as before (initially, we can set $v_i$ as the tracing-point for $B(\alpha_{i-1},\alpha_i)$, $i=1,2,\ldots,m$).
The correctness of the above algorithm for finding the generator $\alpha^*$ relies on the property of the following lemma.

\begin{lemma}\label{lem:intersectionbay}
The bisector $B(\alpha,\alpha')$ intersects $\overline{ad}\cup \overline{ac}$ at most once for any two bisectors of $\alpha$ and $\alpha'$ of $W(\overline{cd})$.
\end{lemma}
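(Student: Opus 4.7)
The plan is to prove this by contradiction, adapting the tangent-crossing argument from the proof of Lemma~\ref{lem:intersection}. Since both generators $\alpha=(A,a)$ and $\alpha'=(A',a')$ belong to the wavefront $W(\overline{cd})$ that entered the bay from the ocean side, their initial vertices lie strictly below the horizontal segment $\overline{cd}$. Corollary~\ref{coro:monotone} then guarantees that the portion of the bisector $B(\alpha,\alpha')$ above $\overline{cd}$ is $y$-monotone. Assume, for contradiction, that $B(\alpha,\alpha')$ meets the polyline $\overline{ac}\cup\overline{ad}$ at two distinct points $q_1\neq q_2$; by $y$-monotonicity we may take $y(q_1)<y(q_2)$.

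For $i=1,2$ let $v_i$ and $u_i$ denote the tangent points from $q_i$ to $A$ and $A'$, respectively. I would next move a point $q$ continuously along $\overline{ac}\cup\overline{ad}$ from $q_1$ to $q_2$ (routing through the apex $a$ if $q_1$ and $q_2$ lie on different edges). Since $A$ and $A'$ are convex chains strictly below $\overline{cd}$, the tangent points $v(q)\in A$ and $u(q)\in A'$ vary continuously, so the tangent segments $\overline{q\,v(q)}$ and $\overline{q\,u(q)}$ sweep continuously through the half-plane below $\overline{cd}$. The same crossing argument as in the proof of Lemma~\ref{lem:intersection} should then force either $\overline{q_1u_1}$ to cross $\overline{q_2v_2}$, or $\overline{q_1v_1}$ to cross $\overline{q_2u_2}$, in their interiors. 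Letting $p$ be the crossing point, the predecessor argument closing Lemma~\ref{lem:intersection}'s proof carries over verbatim: $p$ would simultaneously inherit $a$ and $a'$ as predecessors, contradicting uniqueness of the predecessor at a non-bisector point under the general-position assumption.

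The main obstacle will be the case $q_1\in\overline{ac}$ and $q_2\in\overline{ad}$, in which the path from $q_1$ to $q_2$ bends at the apex $a$ and the tangent points $v(q),u(q)$ may swing across large portions of $A$ and $A'$ as $q$ rounds the corner, so the straightforward ``both generators lie on one side of a single line'' setup of Lemma~\ref{lem:intersection} no longer applies directly. I plan to handle this by splitting the motion at $q=a$ into the two straight sub-segments $\overline{q_1a}\subset\overline{ac}$ and $\overline{aq_2}\subset\overline{ad}$ and analyzing them separately, then stitching the analyses together using that the horizontal line through $a$ meets $B(\alpha,\alpha')$ at most once by Lemma~\ref{lem:intersection}; that unique intersection fixes on which side of the apex $B(\alpha,\alpha')$ approaches $a$, removes the corner ambiguity, and reduces the mixed case to the two same-edge cases already covered.
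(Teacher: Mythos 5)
Your core approach --- contradiction via the tangent-crossing argument and the predecessor-uniqueness step --- is precisely what the paper does, and your second paragraph is essentially its proof. However, the detour you propose for the mixed case $q_1\in\overline{ac}$, $q_2\in\overline{ad}$ is unnecessary, and as sketched it would not yield a clean proof. The corner at $a$ does not actually break the continuity argument: as $q$ traverses the polyline $\overline{ac}\cup\overline{ad}$, the tangent points $v(q)\in A$ and $u(q)\in A'$ vary continuously, because each is a function of the \emph{location} of $q$ alone and not of the direction in which $q$ moves; the bend at $a$ is therefore harmless. The only geometric fact the crossing claim needs --- and the only one the paper invokes --- is that $A$ and $A'$ lie on the side of $\overline{cd}$ opposite the entire polyline $\overline{ac}\cup\overline{ad}$. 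That separation already forces $\overline{q_1u_1}$ to cross $\overline{q_2v_2}$ or $\overline{q_1v_1}$ to cross $\overline{q_2u_2}$ in their interiors, whether $q_1,q_2$ sit on one leg or two, so the mixed and same-edge cases are handled uniformly. Your proposed ``stitching'' via the horizontal line through $a$ would also need substantially more work to be a proof: Lemma~\ref{lem:intersection} does tell you $B(\alpha,\alpha')$ meets that horizontal line at most once, but it is not spelled out how that single crossing ``fixes the side'' in a way that topologically excludes one intersection on each leg below $a$ --- the $y$-monotone bisector could still sweep from the left leg to the right leg entirely below the level of $a$ unless you invoke the single crossing with $\overline{cd}$ as well. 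Dropping the apex detour and running the crossing argument directly on the polyline, as the paper does, gives a complete and shorter proof.
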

\begin{proof}
Note that we cannot apply the result of Lemma~\ref{lem:intersection} since to do so we need to have $\overline{ad}$ and $\overline{ac}$ parallel to $\overline{cd}$. But the proof is somewhat similar to that for Lemma~\ref{lem:intersection}.

\begin{figure}[t]
\begin{minipage}[t]{\textwidth}
\begin{center}
\includegraphics[height=2.0in]{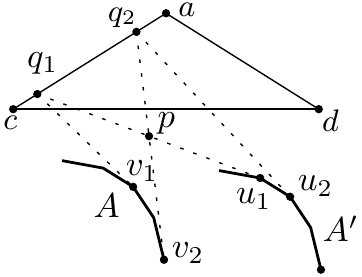}
\caption{\footnotesize Illustrating the proof of Lemma~\ref{lem:intersectionbay}.}
\label{fig:intersectionbay}
\end{center}
\end{minipage}
\vspace{-0.15in}
\end{figure}

Assume to the contrary that $B(\alpha,\alpha')$ intersects $\overline{ad}\cup \overline{ac}$ at two points, $q_1$ and $q_2$. Let $A$ and $A'$ be the underlying arcs of $\alpha$ and $\alpha'$, respectively. Let $v_1$ and $u_1$ be the tangents points of $q_1$ on $A$ and $A'$, respectively (e.g., see Fig.~\ref{fig:intersectionbay}).
Let $v_2$ and $u_2$ be the tangents points of $q_2$ on $A$ and $A'$, respectively.
Since both $A$ and $A'$ are on one side of $\overline{cd}$ while $\overline{ad}\cup \overline{ac}$ is on the other side, if we move a point $q$ from $q_1$ to $q_2$ on $\overline{ad}\cup \overline{ac}$, the tangent from $q$ to $A$ will continuously change from $\overline{q_1v_1}$ to $\overline{q_2v_2}$ and the tangent from $q$ to $A'$ will continuously change from $\overline{q_1u_1}$ to $\overline{q_2u_2}$. Therefore, either $\overline{q_1u_1}$ intersects $\overline{q_2v_2}$ in their interiors or $\overline{q_1v_1}$ intersects $\overline{q_2u_2}$ in their interiors; without loss of generality, we assume that it is the former case. Let $p$ be the intersection of $\overline{q_1u_1}$ and $\overline{q_2v_2}$ (e.g., see Fig.~\ref{fig:intersectionbay}). Since $q_1\in B(\alpha,\alpha')$, points of $\overline{q_1u_1}$ other than $q_1$ have only one predecessor, which is $\alpha'$. As $p\in \overline{q_1u_1}$ and $p\neq q_1$, $p$ has only one predecessor $\alpha'$. Similarly, since $q_2\in B(\alpha,\alpha')$ and $p\in \overline{q_2v_2}$, $\alpha$ is also $p$'s predecessor. We thus obtain a contradiction.
\end{proof}


After $\alpha^*$ is found, depending on whether $\alpha^*$ is the first or last generator of $W$, there are three cases.

\begin{enumerate}
\item
If $\alpha^*$ is not the first or last generator of $W$, then we split $W$ into two wavefronts, one for $\overline{ca}$ and the other for $\overline{ad}$. To do so, we first split the binary tree $T(W)$ that represents the current wavefront $W$ at $\alpha^*$. Then,
we do binary search on $A^*$ to find the tangent point from $a$, where $A^*$ is the underlying chain of $\alpha^*$. We also split $\alpha^*$ into two at the tangent point of $A^*$, i.e., split $A^*$ into two chains that form two generators, one for $\overline{ac}$ and the other for $\overline{ad}$ (e.g., see Fig.~\ref{fig:generatorsplit}). As $A^*$ is represented by an array, splitting $\alpha^*$ can be performed in $O(1)$ time by resetting the end indices of the chains in the array.
This finishes the propagation algorithm in $\triangle acd$. The above splits $W$ into two wavefronts, one for $\overline{ac}$ and the other for $\overline{ad}$; we then propagate the wavefronts through $\overline{ac}$ and $\overline{ad}$ recursively.

\begin{figure}[t]
\begin{minipage}[t]{\textwidth}
\begin{center}
\includegraphics[height=1.6in]{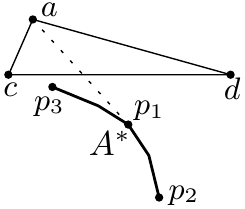}
\caption{\footnotesize Splitting the generator $\alpha^*$. Assume that $A^*$ is the convex chain from $p_2$ to $p_3$ with $p_2$ as the initial vertex of the generator. $\overline{ap_1}$ is tangent to $A^*$ at $p_1$. After the split, the chain from $p_3$ to $p_1$ becomes a generator with $p_1$ as the initial vertex and the chain from $p_1$ to $p_2$ becomes another generator with $p_2$ as the initial vertex. }
\label{fig:generatorsplit}
\end{center}
\end{minipage}
\vspace{-0.15in}
\end{figure}

\item
If $\alpha^*$ is the first generator of $W$, then $\alpha^*$ must be $\alpha_0$, i.e., the leftmost generator of $W(\overline{cd})$. In this case, we do not need to split $W$. But we still need to split the generator $\alpha_0$ at the tangent point $p_0$ of $\alpha_0$ from $a$. To find the tangent point $p_0$, however, this time we do not use binary search as it is possible that we will need to do this for $\Omega(n_b)$ vertices of $\bay$, which would take $\Omega(n_b\log n_g)$ time in total. Instead, we use the following approach. Recall that the vertex $c=v_0$ connects to $\alpha_0$ by a tangent with tangent point $c_0$ (e.g., see Fig.~\ref{fig:bayextend}), and $c_0$ is an endpoint of $A_0'$. We traverse $A'_0$ from $c_0$ to $b_0$, i.e., the other endpoint of $A'_0$; for each vertex, we check whether it is the tangent from $a$. In this way, $p_0$ can be found in time linear in the number of vertices of $A_0'$ between $c_0$ and the tangent point $p_0$ (e.g., see Fig.~\ref{fig:simplecase}). After that, we still split $\alpha_0$ into two generators at $p_0$; one is the only generator for $\overline{ac}$ and the other becomes the first generator of the wavefront for $\overline{ad}$.

\begin{figure}[t]
\begin{minipage}[t]{\textwidth}
\begin{center}
\includegraphics[height=1.9in]{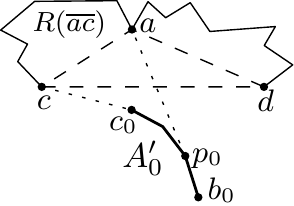}
\caption{\footnotesize Illustrating the case for propagating the one-generator wavefront to $R(\overline{ac})$. }
\label{fig:simplecase}
\end{center}
\end{minipage}
\vspace{-0.15in}
\end{figure}

For $\overline{ad}$, we propagate its wavefront through $\overline{ad}$ recursively. For $\overline{ac}$, to propagate its wavefront through $\overline{ac}$, since the wavefront has only one generator, we can simply apply the linear time shortest path map algorithm for simple polygons~\cite{ref:GuibasLi87}; we refer to this as the {\em one-generator case}. Indeed, $\overline{ac}$ partitions $\bay$ into two sub-polygons and let $R(\overline{ac})$ denote the one that does not contain $\triangle cda$ (e.g., see Fig.~\ref{fig:simplecase}). Hence, all points of $R(\overline{ac})$ are claimed by the only generator for $\overline{ac}$, whose underlying chain $A'$ is the sub-chain of $A_0'$ from $c_0$ to $p_0$ and whose initial vertex is $p_0$.
Consider the region $R'(\overline{ac})$ bounded by $\overline{cc_0}$, $A'$, $\overline{p_0a}$, and the boundary of $R(\overline{ac})$ excluding $\overline{ac}$. It is a simple polygon with a single weighted source $p_0$. Therefore, our problem is equivalent to computing the shortest path map in $R'(\overline{ac})$ with respect to the source point $p_0$, which can be done in $O(|R(\overline{ac})|+|A'|)$ time~\cite{ref:GuibasLi87}, where $|R(\overline{ac})|$ and $|A'|$ are the numbers of vertices of $R(\overline{ac})$ and $|A'|$, respectively.

\item
If $\alpha^*$ is the last generator of $W$, the algorithm is symmetric to the above second case.
\end{enumerate}

The above describes our algorithm for propagating the wavefront $W(\overline{cd})$ through the first triangle $\overline{cda}$. Next, we discuss the general case where we propagate a wavefront $W$ of more than one generator through an arbitrary triangle of the triangulation of $\bay$. For the sake of notational convenience, we consider the problem of propagating the wavefront $W(\overline{ad})$ at $\overline{ad}$ through $\overline{ad}$ into the region $R(\overline{ad})$, where $R(\overline{ad})$ is the one of the two sub-polygons of $\bay$ partitioned by $\overline{ad}$ that does not contain $\triangle cda$. Let $\triangle adb$ be the triangle in $R(\overline{ad})$ with $\overline{ad}$ with as an edge, i.e., $b$ is the third vertex of the triangle. We describe the algorithm for propagating $W(\overline{ad})$ through $\triangle adb$. The algorithm is actually quite similar as before, with an additional {\em event-validation} step.

We first sort all bisector events in $\triangle adb$, following their perpendicular distances to the supporting line of $\overline{ad}$. Then we process these events as before. One difference is that we now need to check whether each event is still valid. Specifically, for each event $q$, which is associated with three generators $\alpha$, $\alpha'$, and $\alpha''$, i.e., $q$ is the intersection of the bisectors $B(\alpha,\alpha')$ and $B(\alpha',\alpha'')$, we check whether all three generators are still in the current wavefront $W$ \footnote{Note that the order of the generators in $W$ must be consistent with their initial index order, $\alpha_0,\alpha_1,\ldots,\alpha_m$. Hence, checking whether a generator is in $W$ can be easily done in $O(\log m)$ time by using the generator indices.}. If not (this is possible if one of the three generators was deleted before, e.g., when the wavefront was propagated through $\triangle cda$), then $q$ is not valid and we ignore this event; otherwise $q$ is still valid and we process it in the same way as before.

The above algorithm is based on the assumption that $\overline{ad}$ is a triangulation diagonal. If it is an obstacle edge, then the wavefront $W(\overline{ad})$ stops at $\overline{ad}$. Notice that each bisector of the wavefront $W(\overline{ad})$ must intersect $\overline{ad}$. For each bisector of the wavefront, starting from its current tracing-point, we trace it out until the current traced hyperbolic-arc intersects $\overline{ad}$.

The algorithm stops once all triangles are processed as above.

\paragraph{Time analysis.}
We now analyze the time complexity. As discussed before, the initial step for computing the intersections of adjacent bisectors of the wavefront $W(\overline{cd})$ and locating the triangles containing them together takes $O(m\log (n_g+n_b))$ time.
During the entire algorithm, each traced bisector hyperbolic-arc belongs to the shortest path map in $\bay$, i.e., the portion of $\spm(s)$ in $\bay$, whose size is $O(n_b+n_g+m)$.
Hence, the total time on the bisector tracing in the entire algorithm is $O(n_b+n_g+m)$.
For the one-generator case where only one generator is available for $\overline{ac}$, the time for processing the sub-polygon $R(\overline{ac})$ is $O(|R(\overline{ac})|+|A'|)$. Notice that all such sub-polygons $R(\overline{ac})$ in the one-generator case are interior disjoint. Hence, the total sum of their sizes is $O(n_b)$. Also, all such generator underlying chains $A'$ are also interior disjoint, and thus the total sum of their sizes is $O(n_g)$. Therefore, the overall time for processing the one-generator case sub-polygons is $O(n_g+n_b)$.

For the general case of processing a triangle $\triangle$ of the triangulation, the total time for processing all events is $O(m''\log (n_g+n_b))$\footnote{More specifically, sorting all events takes $O(m''\log m'')$ time. For each event, removing a generator from $W$ takes $O(\log m)$, finding the intersections of new adjacent bisectors takes $O(\log n_g)$ time, and then locating the triangles containing the intersections takes $O(\log n_b)$ time. Note that $m''\leq m\leq n_g$.}, where $m''$ is the number of events in $\triangle$, both valid and invalid.
Each valid or invalid event is computed either in the initial step or after a generator is deleted. The total number of bisector events in the former case in the entire algorithm is at most $m-1$. The total number in the latter case in the entire algorithm is no more than the number of generators that are deleted in the algorithm, which is at most $m$ because once a generator is deleted it will never appear in any wavefront again. Hence, the total time for processing events in the entire algorithm is $O(m\log (n_g+n_b))$.
Once all events in $\triangle$ are processed, we need to find the generator $\alpha^*$ of the current wavefront $W$ that claims the third vertex $b$ of the triangle, by binary search plus bisector tracing.
The time is $O(\log m)$ plus the time for tracing the bisector hyperbolic-arcs. Hence, excluding the time for tracing the bisector hyperbolic-arcs, which has been counted above, the total time for this operation in the entire algorithm is $O(m'\log m)$, where $m'$ is the number of triangles the algorithm processed for the case where $\alpha^*$ is not the first or last generator. We will show later in Lemma~\ref{lem:mprime} that $m'=O(m)$.

After $\alpha^*$ is found, depending on whether $\alpha^*$ is the first or last generator of $W$, there are three cases. If $\alpha^*$ is not the first or last generator of $W$, then we find the tangent from $b$ to $\alpha^*$ by binary search in $O(\log n_g)$ time and split both $W$ and $\alpha$; otherwise, we find the tangent by a linear scan on $\alpha^*$ and only need to split $\alpha^*$. Splitting $W$ takes $O(\log m)$ time while splitting a generator only takes $O(1)$ time as discussed before. Therefore, the total time for splitting generators in the entire algorithm is $O(n_b)$, as there are $O(n_b)$ triangles in the triangulation.
If $\alpha^*$ is either the first or last generator of $W$, then a one-generator case subproblem will be produced and the time of the linear scan for finding the tangent is $O(|A'|)$, where $A'$ is the sub-chain of $\alpha^*$ that belongs to the one-generator case subproblem. As discussed above, all such $A'$ in all one-generator case subproblems are interior disjoint, and thus the total time on the linear scan in the entire algorithm is $O(n_g)$. Therefore, the total time for finding the tangent point and splitting $W$ is $O(m'\log n_g+n_b)$ as $m\leq n_g$.

\begin{lemma}\label{lem:mprime}
$m'\leq m-1$.
\end{lemma}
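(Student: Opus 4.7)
The plan is to prove, by induction on the pair (number of triangles remaining, wavefront size), the stronger statement that for any triangulated sub-polygon $Q$ of $\bay$ processed with incoming wavefront $W$ of $k$ generators on one of its edges, the total number $C_1(Q,W)$ of case~1 events in the recursion rooted at $(Q,W)$ satisfies $C_1(Q,W) \leq \max(k-2,0)$. Applying this at the root $(\bay,W(\overline{cd}))$, where $k=m+1$, yields $m' \leq m-1$ (the degenerate case $m=0$ is handled separately by the one-generator special case mentioned just before).

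The central arithmetic is the bookkeeping of the three cases. Suppose a triangle is processed with wavefront $W$ of size $k$ and $\alpha^{*}$ lying at position $i$. A case~1 event requires $2\leq i\leq k-1$; then $\alpha^{*}$ is duplicated by the split (each half becomes the boundary generator of one side), so $W$ is replaced by $W_L$ of size $i$ and $W_R$ of size $k-i+1$, each at least $2$ and at most $k-1$. In cases~2 and~3 the first (or last) generator is split off into a one-generator subproblem, and the continuing wavefront $W'$ has size at most $k$ (equality is possible since exactly one half of $\alpha^{*}$ leaves). In particular, $|W|$ is nonincreasing along any branch of the recursion, and cases~2 and~3 themselves contribute nothing to $m'$.

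The induction itself is then immediate. For a case~1 step, both children $(Q_L,W_L)$ and $(Q_R,W_R)$ strictly decrease the triangle count (the one processed triangle is removed from the total), so the inductive hypothesis gives
\[
C_1(Q,W)\;\leq\;1+(i-2)+((k-i+1)-2)\;=\;k-2.
\]
For a case~2 or case~3 step, the one-generator child contributes $0$ because $|W|=1$ prohibits case~1 entirely, while the continuing child has strictly fewer triangles and wavefront size at most $k$, so the hypothesis gives at most $\max(k-2,0)$. The base case---a sub-polygon containing no triangle to propagate through---contributes $0$ trivially.

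The main obstacle is that $|W|$ can remain equal to $k$ across a case~2 or case~3 step, so induction on the wavefront size alone would not terminate. The remedy is to let the triangle count drive the induction and to use $|W|$ only to compute the final constant, which works precisely because cases~2 and~3 do not themselves count toward $m'$. One minor point to verify is that bisector events processed inside a single triangle can only shrink the current wavefront, so the size at the moment $\alpha^{*}$ is identified is bounded by the size $k$ at entry, keeping the inductive bound valid.
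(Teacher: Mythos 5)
Your proof is correct, and it takes a route that differs in presentation from the paper's. The paper uses a one-shot tree-counting argument: it defines a binary tree $T$ whose root is $W(\overline{cd})$, whose internal nodes are exactly the case-1 split operations (each with two children overlapping in the split generator $\alpha^*$), and observes that the leaves are consecutive subsequences of at least two generators covering the $m+1$ generators $\alpha_0,\ldots,\alpha_m$ with adjacent leaves sharing one generator, hence at most $m$ leaves and therefore at most $m-1$ internal nodes. Your argument makes this counting explicit via strong induction on the remaining triangle count, establishing the invariant $C_1(Q,W)\leq\max(|W|-2,0)$ and checking the arithmetic case by case; using the triangle count rather than $|W|$ to drive the induction is the right call, since cases 2 and 3 need not shrink the wavefront. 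Both proofs hinge on the same two facts---a case-1 split duplicates $\alpha^*$ into both children, and each child retains at least two generators---but your inductive version spells out the bookkeeping that the paper's leaf-count leaves implicit, including the observation that bisector events inside a triangle only shrink the wavefront, so the entry size bounds the size at the moment of the split. One small clarity point worth noting: the paper's proof text states the initial wavefront has $m$ generators, whereas the setup actually gives $m+1$ generators $\alpha_0,\ldots,\alpha_m$; your base count $k=m+1$ at the root is the correct reading and delivers $m-1$ cleanly.
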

\begin{proof}
Initially $W=W(\overline{cd})$, which consists of $m$ generators. Each
split operation splits a
wavefront $W$ at a generator $\alpha^*$ into two wavefronts each of which has at least
two generators (such that both wavefronts contain $\alpha^*$). More
specifically, if a wavefront of size $k$ is split, then one wavefront
has $k'$ generators and the other has $k-k'+1$ generators with $k'\geq 2$ and $k-k'+1\geq 2$. The value $m'$ is equal to the number of all
split operations in the algorithm.

We use a tree structure $T$ to
characterize the split operations. The root of $T$ corresponds to
the initial generator sequence $W(\overline{cd})$. Each split on a
wavefront $W$ corresponds to an internal node of $T$ with two children
corresponding to the two subsequences of $W$ after the split. Hence, $m'$ is equal to the number of internal nodes of $T$. In the worst case $T$ has $m$ leaves, each
corresponding to two generators $\alpha_i$ and $\alpha_{i+1}$ for
$i=0,1,\ldots,m$. Since each internal node of $T$ has two children and
$T$ has at most $m$ leaves, the number of internal nodes of $T$ is at
most $m-1$. Therefore, $m'\leq m-1$.
\end{proof}

With the preceding lemma, the total time of the algorithm is
$O(m\log (n_g+n_b)+n_g+n_b)$, which is $O(m\log m+n_g+n_b)$ by similar analysis as Observation~\ref{obser:hlogh}. The space is
$O(n_g+n_b+m)$ as no persistent data structures are used.

\subsection{Expanding $\spm'(\calM)$ into all canals}
\label{sec:canal}

Consider a canal $\canal$ with two gates $\overline{xd}$ and $\overline{yz}$. The goal is to expand the map $\spm'(\calM)$ into $\canal$ through the two gates to obtain the shortest path map in the canal, denoted by $\spm(\canal)$.

The high-level scheme of the algorithm is similar in spirit to that for the $L_1$ problem~\cite{ref:ChenCo19}.
The algorithm has three main steps. First, we
expand $\spm'(\calM)$ into $\canal$ through the gate $\overline{xd}$, by applying our algorithm for bays. Let $\spm_1(\canal)$
denote the map of $\canal$ obtained by the algorithm. Second,
we expand $\spm(\calM)$ into $\canal$ through the gate
$\overline{yz}$ by a similar algorithm as above; let $\spm_2(\canal)$ denote the map of $\canal$
obtained by the algorithm. Third, we merge the two maps
$\spm_1(\canal)$ and $\spm_2(\canal)$ to obtain $\spm(\canal)$. This
is done by using the merge step from the standard divide-and-conquer
Voronoi diagram algorithm to compute the region closer to the
generators at $\overline{xd}$ than those at $\overline{yz}$.
We will provide more details for this step below and we will show that this step can be done in time linear in the total size of the two maps $\spm_1(\canal)$ and $\spm_2(\canal)$. Before doing so, we analyze the complexities of the algorithm.

The first step takes $O(n+h\log
h)$ time for all canals. So is the second step. The third step takes
linear time in the total size of $\spm_1(\canal)$ and
$\spm_2(\canal)$. Since the total size of the two maps over all canals
is $O(n)$, the total time of the third step for all canals is $O(n)$.
In summary, the time for computing the shortest path maps in all
canals is $O(n+h\log h)$ and the space is $O(n)$.

In the following, we provide more details for the third step of the algorithm.

Recall that $x$ and $y$ are the two endpoints of the corridor path $\pi$ in $\canal$. It is possible that the shortest $s$-$x$ path $\pi(s,x)$ contains $y$ or the shortest $s$-$y$ path $\pi(s,y)$ contains $x$. To determine that, we can simply check whether $d(s,x)+|\pi|=d(s,y)$ and whether $d(s,y)+|\pi|=d(s,x)$. Note that both $d(s,x)$ and $d(s,y)$ are available once $\spm(\calM)$ is computed.

We first consider the case where neither $\pi(s,x)$ contains $y$ nor $\pi(s,y)$ contains $x$.
In this case, there must be a point $p^*$ in $\pi$ such that $d(s,x)+|\pi(x,p^*)|=d(s,y)+|\pi(y,p^*)|$, where $\pi(x,p^*)$ (resp., $\pi(y,p^*)$) is the subpath of $\pi$ between $x$ (resp., $y$) and $p^*$. We can easily find $p^*$ in $O(|\pi|)$ time. To merge the two maps $\spm_1(\canal)$ and $\spm_2(\canal)$ to obtain $\spm(\canal)$, we find a dividing curve $\Gamma$ in $\canal$ such that $W(\overline{xd})$ claims all points of $\canal$ on one side of $\Gamma$ while $W(\overline{yz})$ claims all points of $\canal$ on the other side of $\Gamma$, where $W(\overline{xd})$ is the set of generators of $\spm'(\calM)$ claiming $\overline{xd}$ (one may consider $W(\overline{xd})$ is a wavefront) and $W(\overline{yz})$ is the set of generators of $\spm'(\calM)$ claiming $\overline{yz}$. The curve $\Gamma$ consists of all points in $\canal$ that have equal weighted distances to $W(\overline{xd})$ and $W(\overline{yz})$. Therefore, the point $p^*$ must be on $\Gamma$. Starting from $p^*$, we can trace $\Gamma$ out by walking simultaneously in the cells of the two maps $\spm_1(\canal)$ and $\spm_2(\canal)$.
The running time is thus linear in the total size of the two maps.

We then consider the case where either $\pi(s,x)$ contains $y$ or $\pi(s,y)$ contains $x$. Without loss of generality, we assume the latter case. We first check whether $\pi(s,p)$ contains $x$ for all points $p$ on the gate $\overline{yz}$. To do so, according to the definitions of corridor paths and gates of canals, for any point $p\in \overline{yz}$, its shortest path to $x$ in $\canal$ is the concatenation of the corridor path $\pi$ and $\overline{yp}$.
Hence, it suffices to check whether $d(s,z)=d(s,y)+|\overline{yz}|$.

\begin{itemize}
\item
If yes, then all points of $\overline{yz}$ are claimed by the generators of $W(\overline{xd})$ (in fact, they are claimed by $x$ because for any point $q\in \overline{xd}$ and any point $p\in \overline{yz}$, their shortest path in $\canal$ is the concatenation of $\overline{qx}$, $\pi$, and $\overline{yp}$). Hence, all points in $\canal$ are claimed by $W(\overline{xd})$ and $\spm(\canal)$ is $\spm_1(\canal)$.

\item
Otherwise, some points of $\overline{yz}$ are claimed by $W(\overline{xd})$ while others are claimed by $W(\overline{yz})$. As in the above case, we need to find a dividing curve $\Gamma$ consisting of all points with equal weighted distances to $W(\overline{xd})$ and $W(\overline{yz})$. To this end, we again first find a point $p^*$ in $\Gamma$. For this, since $\pi(s,y)$ contains $x$, $y$ is claimed by $W(\overline{xd})$. On the other hand, since $d(s,z)\neq d(s,y)+|\overline{yz}|$, $z$ is claimed by $W(\overline{yz})$. Therefore, $\overline{yz}$ must contains a point $p^*\in \Gamma$. Such a point $p^*$ can be found by traversing $\overline{yz}$ simultaneously in the cells of both $\spm_1(\canal)$ and $\spm_2(\canal)$. After $p^*$ is found, we can again trace $\Gamma$ out by walking simultaneously in the cells of $\spm_1(\canal)$ and $\spm_2(\canal)$. The running time is also linear in the total size of the two maps.
\end{itemize}

\subsection{Wrapping things up}


\begin{theorem}
Suppose $\calP$ is a set of $h$ pairwise disjoint polygonal obstacles with a total of $n$ vertices in the plane and $s$ is a source point. Assume that a triangulation of the free space is given. The shortest path map $\spm(s)$ with respect to $s$ can be constructed in $O(n+h\log h)$ time and $O(n)$ space.
\end{theorem}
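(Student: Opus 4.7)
The plan is to combine the machinery already developed for the convex case with the extended corridor structure to handle general polygonal obstacles, assembling the pieces established in Sections~\ref{sec:convex} and~\ref{sec:general}. Since a triangulation of $\calF$ is given, I would first construct the extended corridor structure in $O(n)$ time~\cite{ref:KapoorAn97}, obtaining the ocean $\calM$ (bounded by $O(h)$ convex chains), $O(n)$ bays, and $O(h)$ canals together with their gates and corridor paths.

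Next, I would build the shortest path map restricted to the ocean, $\spm(\calM)$. The key observation is that $\partial\calM$ consists of $O(h)$ convex chains, so the convex-case framework applies almost verbatim with the set $\calV$ of rectilinear extreme vertices of these chains augmented by the $O(h)$ corridor path endpoints. Using Lemma~\ref{lem:subalgo} the conforming subdivision $\calS'$ is built in $O(n+h\log h)$ time and $O(n)$ space. Then the wavefront expansion of Section~\ref{sec:algo}, augmented to treat each corridor path as a shortcut that delivers a single wavelet to the other endpoint with delay equal to $|\pi|$ (the corridor path length property ensuring correct temporal ordering), runs in $O(n+h\log h)$ time by Lemma~\ref{lem:algocom}. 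Applying the space-reduction of Section~\ref{sec:space} drops the space to $O(n)$, and the constructions of Sections~\ref{sec:spm} and Lemma~\ref{lem:spmconstruction} yield $\spm(\calM)$ within the same bounds.

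The remaining step is to extend $\spm(\calM)$ into every bay and canal. For each bay, the extension reduces to the special weighted geodesic Voronoi diagram problem in a simple polygon described in Section~\ref{sec:bay}, which I would solve in $O(m \log m + n_b + n_g)$ time and $O(n_b + n_g + m)$ space, where $m$ is the number of $\spm'(\calM)$-vertices on the gate and $n_b, n_g$ are the local bay/chain complexities. Summing: $\sum m = O(h)$ because $\spm'(\calM)$ has $O(h)$ vertices, and $\sum n_b, \sum n_g = O(n)$ since bays and generator chain fragments are pairwise interior-disjoint; this gives $O(n + h\log h)$ total. For each canal, as in Section~\ref{sec:canal}, I would run the bay algorithm twice (once through each gate), then merge the two resulting maps by tracing the dividing bisector curve starting either from a point $p^*$ on the corridor path where the two weighted distances balance or from a point on one of the gates, depending on whether either corridor-path endpoint lies on the shortest path to the other; the merge costs time linear in the total map size, so $O(n)$ overall.

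The main obstacle is really already absorbed in the earlier sections: namely the $O(n+h\log h)$ construction of $\spm(\calM)$, which hinges on Lemma~\ref{lem:numgen} (total marked generators $O(h)$) and Lemma~\ref{lem:bisectorvertices} (total traced hyperbolic arcs $O(n)$), plus the tentative prune-and-search bisector operations of Lemmas~\ref{lem:bl-intersection}, \ref{lem:bb-intersection}, \ref{lem:bc-intersection}. The only genuinely new effort in proving the theorem is verifying that the corridor-path ``shortcut'' insertion preserves the correctness of the wavefront expansion (guaranteed by the corridor path length property) and that the bay/canal expansions aggregate to $O(n+h\log h)$ time and $O(n)$ space via the disjointness arguments above. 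Putting these together proves the theorem.
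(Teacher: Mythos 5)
Your proposal is correct and follows essentially the same route as the paper: decompose $\calF$ via the extended corridor structure into ocean, bays, and canals; build $\spm(\calM)$ by adapting the convex-case wavefront expansion with corridor-path shortcuts; and extend into bays via the special weighted geodesic Voronoi algorithm and into canals by running the bay algorithm from each gate and merging. The complexity accounting ($\sum m = O(h)$, $\sum n_b, \sum n_g = O(n)$) and the citations to the supporting lemmas all match the paper's argument.
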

\begin{proof}
Using the triangulation, we decompose the free space $\calF$ into an ocean $\calM$, canals, and bays in $O(n)$ time~\cite{ref:KapoorAn97}. Then, the shortest path map $\spm(\calM)$ in the ocean $\calM$ can be constructed in $O(n+h\log h)$ time and $O(n)$ space. Next, $\spm(\calM)$ can be expanded into all bays and canals in additional $O(n+h\log h)$ time and $O(n)$ space. The shortest path map $\spm(s)$ is thus obtained.
\end{proof}

The current best algorithms can compute a triangulation of the free space in $O(n\log n)$ time or in $O(n+h\log^{1+\epsilon}h)$ time for any small $\epsilon>0$~\cite{ref:Bar-YehudaTr94}. If all obstacles of $\calP$ are convex, then the triangulation can be done in $O(n+h\log h)$ time~\cite{ref:HertelFa85}.

After $\spm(s)$ is computed, by building a point location data structure~\cite{ref:EdelsbrunnerOp86,ref:KirkpatrickOp83} on $\spm(s)$ in additional $O(n)$ time, given a query point $t$, the shortest path length from $s$ to $t$ can be computed in $O(\log n)$ time and a shortest \st\ path can be produced in time linear in the number of edges of the path.

\begin{corollary}
Suppose $\calP$ is a set of $h$ pairwise disjoint polygonal obstacles with a total of $n$ vertices in the plane and $s$ is a source point. Assume that a triangulation of the free space is given. A data structure of $O(n)$ space can be constructed in $O(n+h\log h)$ time and $O(n)$ space, so that given any query point $t$, the shortest path length from $s$ to $t$ can be computed in $O(\log n)$ time and a shortest \st\ path can be produced in time linear in the number of edges of the path.
\end{corollary}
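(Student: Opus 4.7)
The plan is to obtain the data structure by composing the preceding theorem with a standard point location structure on the shortest path map. First I would invoke the theorem to construct $\spm(s)$ in $O(n+h\log h)$ time and $O(n)$ space. Recall that $\spm(s)$ is a planar subdivision of $\calF$ of combinatorial size $O(n)$, where every cell $R$ is associated with a unique anchor vertex $a_R$ such that $d(s,t)=d(s,a_R)+|\overline{a_R t}|$ for every $t\in R$, and $d(s,a_R)$ is stored with the cell.

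Next, I would preprocess $\spm(s)$ for planar point location using the optimal scheme of Kirkpatrick~\cite{ref:KirkpatrickOp83} or Edelsbrunner, Guibas, and Stolfi~\cite{ref:EdelsbrunnerOp86}; this takes $O(n)$ additional time and space on a subdivision of size $O(n)$, and supports $O(\log n)$-time location queries. To answer a shortest path length query for $t$, I would locate the cell $R$ of $\spm(s)$ containing $t$ in $O(\log n)$ time and then return $d(s,a_R)+|\overline{a_R t}|$ in constant additional time. To output an actual shortest $s$-$t$ path, I would, after locating $R$, walk from $t$ to $a_R$, then from $a_R$ to its own stored anchor, and so on, tracing the encoded predecessor chain inside $\spm(s)$ until $s$ is reached; the total work is proportional to the number of edges produced.

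Adding the point location preprocessing to the construction cost of the theorem gives $O(n+h\log h)$ total construction time and $O(n)$ space, while queries take $O(\log n)$ time for the length and additional time linear in the output size for the path. There is no real obstacle here: the only care needed is to confirm that $\spm(s)$, as produced in Section~\ref{sec:spm} in DCEL form with anchor pointers per face, is directly usable as input to the point location structures of~\cite{ref:KirkpatrickOp83,ref:EdelsbrunnerOp86}, which it is since $\spm(s)$ has $O(n)$ vertices, edges, and faces and each cell already records its anchor and the geodesic distance from $s$ to that anchor.
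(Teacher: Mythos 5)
Your proposal matches the paper's own argument: construct $\spm(s)$ via the preceding theorem, overlay the $O(n)$-time point location structure of Kirkpatrick or Edelsbrunner--Guibas--Stolfi, answer distance queries by locating the cell and using its stored anchor and geodesic distance, and output paths by walking the anchor chain back to $s$. Nothing further to note.
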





\footnotesize
 \bibliographystyle{plain}
\bibliography{reference}

\end{document}